\documentclass{article}
\parskip 3pt plus1pt minus1pt

\usepackage[T1]{fontenc}
\usepackage{amsfonts,amsmath,amsthm,amssymb,authblk,latexsym,bm}
\usepackage{hyperref} \hypersetup{colorlinks=true}
\usepackage{citeref}
\usepackage{color}

\newtheorem{theorem}{Theorem}[section]
\newtheorem{corollary}[theorem]{Corollary}

\newtheorem{lemma}[theorem]{Lemma}
\newtheorem{definition}[theorem]{Definition}
\newtheorem{example}[theorem]{Example}

\newtheorem{remark}[theorem]{Remark}

\numberwithin{equation}{section}

\def\vph{\varphi} 
\def\dash{\mbox{-}}
\def\R1{{\cal R}}
\def\RL{{\cal R}_{\kern-1pt\lambda}}

\begin{document}

\title{Galois Self-dual 2-quasi Constacyclic Codes over Finite Fields}
\author{
Yun Fan, ~ Yue Leng\par
{\small School of Mathematics and Statistics}\par\vskip-1mm
{\small Central China Normal University, Wuhan 430079, China}

%{\small School of Mathematics and Statistics}\par\vskip-1mm
%{\small Central China Normal University, Wuhan 430079, China}
}

\date{}
\maketitle

\insert\footins{\footnotesize{\it Email address}:
yfan@mail.ccnu.edu.cn (Yun Fan); yueleng0213@126.com (Yue Leng).
}

\begin{abstract}
Let $F$ be a field with cardinality $p^\ell$ and $0\neq \lambda\in F$, and
$0\le h<\ell$.
Extending Euclidean and Hermitian inner products,
Fan and Zhang introduced Galois $p^h$-inner product
(DCC, vol.84, pp.473-492).
% and studied the existence of Galois self-dual constacyclic codes.
In this paper, we characterize the structure
of $2$-quasi $\lambda$-constacyclic codes over~$F$;
and exhibit necessary and sufficient conditions
 for  $2$-quasi $\lambda$-constacyclic codes being Galois $p^h$-self-dual.
With the help of a technique developed in this paper,
we prove that, when $\ell$ is even,
the Hermitian self-dual $2$-quasi $\lambda$-constacyclic codes
are asymptotically good if and only if $\lambda^{1+p^{\ell/2}}\!=1$.
And, when $p^\ell\,{\not\equiv}\,3~({\rm mod}~4)$,
the Euclidean self-dual $2$-quasi $\lambda$-constacyclic codes
are asymptotically good if and only if $\lambda^{2}=1$.

\medskip
{\bf Key words}: Finite fields; $2$-quasi constacyclic codes;
%$2$-quasi cyclic codes;
Galois self-dual codes; Hermitian self-dual codes; %asymptotically good codes.
asymptotic property.
\end{abstract}

\section{Introduction}

Let $F$ be a finite field with cardinality $|F|=q=p^\ell$
where $p$ is a prime and $\ell$ is a positive integer,
let $F^\times=F\setminus\{0\}$.
Any ${\bf a}=(a_0, a_1, \dots, a_{n-1})\in{F}^n$
is called a word over $F$ of length $n$, where $n$ is a positive integer.
The Hamming weight ${\rm w}({\bf a})$ is defined as the number
of the indexes $i$ with $a_i\ne 0$.
The Hamming distance between two words ${\bf a}, {\bf a}'\in F^n$
is defined as ${\rm d}({\bf a},{\bf a}')={\rm w}({\bf a}-{\bf a}')$.
Any nonempty subset $C\subseteq F^{n}$ is called a code of length $n$ over $F$,
and the words in the code are called codewords.
The {\em minimum distance}
${\rm d}(C):=\min\limits_{{\bf c}\ne{\bf c}'\in C}{\rm d}({\bf c}, {\bf c}')$.
For any linear subspace $C$ of $F^n$, called a {\em linear code},
 the {\em minimum weight}
${\rm w}(C):=\min\limits_{0\ne{\bf c}\in C}{\rm w}({\bf c})$;
and it is known that ${\rm w}(C)={\rm d}(C)$.
The fraction $\Delta(C)=\frac{{\rm d}(C)}{n}=\frac{{\rm w}(C)}{n}$
is called the {\em relative minimum distance} of~$C$,
and ${\rm R}(C)=\frac{\dim_{F}C}{n}$ is called the {\em rate} of $C$.
A code sequence $C_{1}, C_{2}, \dots$ is said to be {\em asymptotically good}
if the length~$n_{i}$ of~$C_{i}$ goes to infinity and there is a real number $\delta>0$
such that ${\rm R}(C_{i})>\delta$ and $\Delta(C_{i})>\delta$
for $i=1,2,\dots$.
A class of codes is said to be {\em asymptotically good} if
there is an asymptotically good sequence of codes in the class;
otherwise, we say that the class of codes is {\em asymptotically bad}.

A linear code $C$ of $F^n$ is called a {\em cyclic code}
if $C$ is invariant under the cyclic permutation on items, i.e.,
\begin{align*} %\label{int def cyclic}
(c_0, c_1, \ldots, c_{n-1})\in{C} \implies
( c_{n-1}, c_0, c_1, \dots, c_{n-2})\in{C}.
\end{align*}
A linear code $C$ of $F^n\times F^n$ is called a {\em quasi-cyclic code of index $2$},
abbreviated as {\em $2$-quasi-cyclic code},
if $C$ is invariant under the double cyclic permutation on items, i.e.,
\begin{align*} %\label{int def 2-cyclic}
\begin{array}{l}
(c_0, c_1, \dots, c_{n-1}, c'_0, c'_1, \dots, c'_{n-1})\in{C} \\[1pt]
\implies (c_{n-1},c_0,c_1,\dots,c_{n-2}, \,
  c'_{n-1},c'_0,c'_1,\dots, c'_{n-2})\in{C}.
\end{array}
\end{align*}

The Euclidean inner product of words
${\bf a}=(a_0,\dots,a_{n-1}), {\bf b}=(b_0,\dots,b_{n-1})$ of $F^n$
 is defined to be $\langle{\bf a},{\bf b}\rangle=\sum_{i=0}^{n-1}a_ib_i$.
For a code $C\subseteq F^n$, the
$C^{\bot}=\{\,{\bf a}\in{F^{n}\,|\,\langle {\bf c},{\bf a}\rangle=0,\, \forall {\bf c}\in{C}}\,\}$
is the {\em dual code} of $C$.
A code $C$ is said to be {\em self-dual} if $C=C^{\bot}$.
Obviously, the rate $R(C)=\frac{1}{2}$ if $C$ is self-dual.

Cyclic codes are investigated extensively in theory and practice, cf. \cite{HP}.
It is still an open question (cf. \cite {MW06}):
are cyclic codes over $F$ asymptotically good?
However, it is well-known long ago that
the binary $2$-quasi-cyclic codes are asymptotically good, see \cite{CPW,C,K}.
Later, Mart\'inez-P\'erez and Willems \cite{MW} proved the
asymptotic goodness of binary self-dual $2$-quasi-cyclic codes.
And, \cite{AOS}, \cite{L PhD} and \cite{LF22} proved
that, if $q\,{\not\equiv}\, 3~({\rm mod}~4)$,
the $q$-ary self-dual $2$-quasi-cyclic codes are asymptotically good.
Note that ``$q\,{\not\equiv}\, 3~({\rm mod}~4)$'' is a necessary and sufficient
condition for the existence of $q$-ary self-dual $2$-quasi-cyclic codes,
cf. \cite[Theorem 6.1]{LS03}.
The proof in \cite{AOS} is based on Artin's primitive root conjecture.
The arguments in \cite{L PhD} and \cite{LF22} are self-contained.
And the asymptotic goodness of any $q$-ary $2$-quasi-cyclic codes
were also proved in~\cite{L PhD}.

Cyclic codes and $2$-quasi-cyclic codes had been extended widely.
Let $\lambda\in F^\times=F\setminus\{0\}$.
A linear code $C$ of $F^n$ is called a {\em $\lambda$-constacyclic code}
if $C$ is invariant under the $\lambda$-constacyclic permutation on items, i.e.,
\begin{align} \label{int def lambda-cyclic}
(c_0, c_1, \ldots, c_{n-1})\in C \implies
(\lambda c_{n-1}, c_0, c_1, \dots, c_{n-2})\in C.
\end{align}
If $\lambda=-1$, the $\lambda$-constacyclic codes are called
{\em negacyclic codes}. Further,
a linear code $C$ of $F^n\times F^n$ is called
a {\em $2$-quasi $\lambda$-constacyclic code} if
$C$ is invariant under the double $\lambda$-constacyclic permutation on items, i.e.,
\begin{align}\label{int def 2 lambda-cyclic}
\begin{array}{l}
(c_0, c_1, \dots, c_{n-1}, c'_0, c'_1, \dots, c'_{n-1})\in{C} \\[1pt]
\implies (\lambda c_{n-1},c_0,c_1,\dots,c_{n-2}, \,
 \lambda c'_{n-1},c'_0,c'_1,\dots, c'_{n-2})\in{C}.
\end{array}
\end{align}

If $q$ is odd and $q\,{\not\equiv}\pm 1~({\rm mod}~8)$,
the self-dual $2$-quasi negacyclic codes over~$F$ are proved asymptotically good
in \cite{AGOSS}.
While in \cite{SQS} for $q\,{\equiv}-1~({\rm mod}~4)$
it is shown, based on Artin's primitive root conjecture,
that the $q$-ary self-dual $2$-quasi negacyclic codes are asymptotically good.
Recently, for any $q$ and any $\lambda\in F^\times$
the $2$-quasi $\lambda$-constacyclic codes over $F$ are proved
asymptotically good, see \cite[Corollary I.3]{FL22}.

About the self-dualities, in the semisimple case (i.e., $\gcd(n,q)=1$),
the self-dual cyclic codes over $F$ does not exist.
Leon et al. \cite{LMP} and many references, e.g. \cite{DP, LX, P87},
 devoted to the study on various generalizations,
e.g., duadic codes, extended self-dual cyclic codes, etc.
On the other hand,
Dinh and Lopez-Permouth \cite{DL}, Dinh \cite{D12}
studied $\lambda$-constacyclic codes, and showed that in the semisimple case
the self-dual $\lambda$-constacyclic codes exist only if $\lambda=-1$.

Extending  the Euclidean inner product and the Hermitian inner product,
Fan and Zhang \cite{FZ17} introduced the so-called
Galois inner products. %(or Galois inner product),
Recall that $|F|=q=p^\ell$. For $0\le h<\ell$,
the map $\sigma_{p^h}: F\to F$, $\alpha\mapsto\alpha^{p^h}$,
is a Galois automorphism of $F$, which induces an automorphism
$F^n\to F^n$, $(a_0,\dots,a_{n-1})\mapsto(a_0^{p^h},\dots,a_{n-1}^{p^h})$.
The following
\begin{align} \label{int Galois inner}
\langle{\bf a},{\bf b}\rangle_{h}
 =\sum_{i=0}^{n-1}a_ib_i^{p^h},
~~~~ \forall\, {\bf a}=(a_0,\cdots,a_{n-1}),{\bf b}=(b_0,\cdots,b_{n-1})\in F^n,
\end{align}
is called the {\em Galois $p^h$-inner product} on $F^n$.
And for any code $C\subseteq F^n$, the following code
\begin{align} \label{int Galois dual}
C^{\bot{h}}=\big\{\,{\bf a}\in F^n\,\big|\,
  \langle {\bf c},{\bf a}\rangle_h=0,\; \forall\, {\bf c}\in C\big\}
\end{align}
is called the {\em Galois $p^h$-dual code} of $C$.
The code $C$ is said to be {\em Galois $p^h$-self-dual}
(or {\em Galois self-dual} when $h$ is known from context) if $C=C^{\bot h}$.
It is also obvious that $R(C)=\frac{1}{2}$ if $C$ is Galois $p^h$-self-dual.
When $h=0$ ($h=\ell/2$ when $\ell$ is even, respectively),
 $\langle{\bf a},{\bf b}\rangle_{h}$ is just the Euclidean
(Hermitian, respectively) inner product, and
$C^{\bot{h}}$ is the Euclidean (Hermitian, respectively) dual code,
and Galois $p^h$-self-dual codes are just the Euclidean self-dual
(Hermitian self-dual, respectively) codes.
The existence and the structure
of Galois $p^h$-self-dual $\lambda$-constacyclic codes are studied in~\cite{FZ17}.

In this paper we study the Galois $p^h$-self-dual $2$-quasi $\lambda$-constacyclic codes
over $F$ and their asymptotic properties.
The main contributions of this paper are the following.
\begin{itemize}
\item
We characterize the algebraic structure of
the $2$-quasi $\lambda$-constacyclic codes
 and their Galois $p^h$-dual codes.
We find that the Galois $p^h$-self-dual $2$-quasi $\lambda$-constacyclic codes
behave very differently depending on whether ${\lambda^{1+p^h}=1}$ or not.
In both the cases we obtain necessary and sufficient conditions
for $2$-quasi $\lambda$-constacyclic codes being Galois $p^h$-self-dual.

\item
We obtain that, if $\lambda^{1+p^h}\neq 1$,
then the Galois $p^h$-self-dual
$2$-quasi $\lambda$-constacyclic codes are asymptotically bad.
On the other hand,
if $\ell$ is even and $\lambda^{1+p^{\ell/2}}=1$, then
the Hermitian self-dual $2$-quasi $\lambda$-constacyclic codes
are asymptotically good.
And, if $p^\ell\,{\not\equiv}\,3~({\rm mod}~4)$ and $\lambda^{2}=1$,
then the Euclidean self-dual $2$-quasi $\lambda$-constacyclic codes
are asymptotically good.
For the Euclidean case, we note that
the asymptotic goodness of the self-dual $2$-quasi-cyclic codes
has been proved in \cite{LF22};
on the other hand, for the asymptotic properties of the
self-dual $2$-quasi negacyclic codes, our result and the results in
\cite{AGOSS, SQS}, %are not mutually inclusive.
the three results do not cover each other.

\item
As for methodology, the so-called {\em reciprocal polynomial}
is a powerful tool for studying the duality property of
 $\lambda$-constacyclic and $2$-quasi $\lambda$-consta-cyclic codes,
 e.g., in \cite{AGOSS, SQS}.
It is revised in \cite{LF22} etc. to the  ``bar'' map of
the quotient ring $F[X]/\langle X^n-1\rangle$,
where $F[X]$ denotes the polynomial ring over $F$ and $\langle X^n-1\rangle$
denotes the ideal generated by $X^n-1$; cf. \cite[Remark 6.6(2)]{FZ23}.
For any matrix $A=(a_{ij})_{m\times n}$ over $F$,
the {\em Galois $p^h$-transpose} of~$A$ is defined to be
$A^{*h}=\big((a_{ij}^{p^h}\big)_{m\times n}\big)^T
=(a_{ji}^{p^h}\big)_{n\times m}$, cf. Eq.\eqref{eq def *h} below.
With the operator ``$*h$'' on matrices,
we introduce an operator ``$*$'' on the quotient ring
$F[X]/\langle X^n-\lambda\rangle$, $a(X)\mapsto a^*(X)$,
cf. Lemma~\ref{lem tau and *} below for details.
That operator becomes an useful technique
for studying the Galois duality property of $\lambda$-constacyclic codes.
That is a methodological innovation of the paper.
\end{itemize}

In Section~\ref {preliminaries}, some preliminaries are sketched.

In Section~\ref{2-Q CC}
we characterize the algebraic structure of the
 $2$-quasi $\lambda$-consta-cyclic codes over the finite field $F$
 and their Galois $p^h$-dual codes.

In Section~\ref{Section Galois duality} we study the
Galois $p^h$-self-dual $2$-quasi $\lambda$-constacyclic codes over $F$.
Our discussion divide into two cases: %consists of two parts according to that
 $\lambda^{1+p^h}\!\neq 1$ %$\lambda\neq\lambda^{-p^{\ell-h}}$
or $\lambda^{1+p^h}\!=1$. % $\lambda=\lambda^{-p^{\ell-h}}$.
In  both the cases,
we exhibit the necessary and sufficient conditions for a
$2$-quasi $\lambda$-constacyclic code being Galois $p^h$-self-dual.
And we show that if $\lambda^{1+p^h}\!\neq 1$,
%$\lambda\neq\lambda^{-p^{\ell-h}}$,
then Galois $p^h$-self-dual $2$-quasi $\lambda$-constacyclic codes
are asymptotically bad.

In Section~\ref{section H self-dual 2-Q cyclic},
the Hermitian self-dual $2$-quasi-cyclic codes over $F$ are proved asymptotically good.

In Section~\ref{section H self-dual 2-QQ cyclic},
assuming that $\ell$ is even, we prove that
the Hermitian self-dual $2$-quasi $\lambda$-constacyclic codes over $F$
are asymptotically good if and only if $\lambda^{1+p^{\ell/2}}=1$.
And, assuming that $p^\ell\,{\not\equiv}\,3~({\rm mod}~4)$,
we show that the Euclidean self-dual $2$-quasi $\lambda$-constacyclic codes over $F$
are asymptotically good if and only if $\lambda^{2}=1$.

Finally, %conclusion is made in Section~\ref{Conclusions}.
we end this paper by a conclusion in Section~\ref{Conclusions}.

\section{Preliminaries}\label{preliminaries}

In this paper, $F$ is always a finite field of cardinality $|F|=q=p^\ell$
(by $|S|$ we denote the cardinality of any set $S$),
where $p$ is a prime and $\ell$ is a positive integer;
and $h$ is an integer such that $0\le h<\ell$; and $n>1$ is an integer.
Any ring $R$ in this paper has identity $1_R$ (or denoted by $1$ for short);
and ring homomorphisms and  subrings are identity preserving.
By $R^\times$ we denote the multiplication group consisting
of all units (invertible elements) of $R$.
In particular, $F^\times=F\,{\setminus}\,\{0\}$.
If a ring $R$ is also an $F$-vector space,
then $R$ is said to be an $F$-algebra.
In that case, $F\to R$, $\alpha\mapsto \alpha 1_R$,
is an embedding, so that %with the embedding
we write that $F\subseteq R$.

Let $R$ and $S$ be $F$-algebras.
A map $\psi: R\to S$ is called an {\em $F$-algebra homomorphism}
if it is both a ring homomorphism and an $F$-linear map,
i.e., $\psi(1_R)=1_S$ and for any $a,b\in R$ and any $\alpha\in F$,
$$
\psi(ab)=\psi(a)\psi(b), ~~  \psi(a+b)=\psi(a)+\psi(b), ~~
 \psi(\alpha a)=\alpha\psi(a),
$$
where the first two mean that $\psi$ is a ring homomorphism,
and the last two mean that it is a linear map.
Recall that for $0\leq h < \ell $, the map $\sigma_{p^h}: F\to F$,
$\sigma_{p^h}(\alpha)=\alpha^{p^h}$ for
$\alpha\in F$, is a Galois automorphism of the field $F$.
If $\psi: R\to S$ is bijective and satisfies that
for any $a, b\in R$ and any $\alpha\in F$,
\begin{align} \label{eq alg iso}
   \psi(ab)=\psi(a)\psi(b), ~~ \psi(a+b)=\psi(a)+\psi(b), ~~
  \psi(\alpha a)=\alpha^{p^h}\psi(a),
\end{align}
then~$\psi$ is called a {\em $\sigma_{p^h}$-algebra isomorphism},
or {\em $p^h$-isomorphism} for short.
Note that the last two equalities of Eq.\eqref{eq alg iso}
mean that $\psi$ is a {\em $p^h$-linear map}.
In particular, if $R=S$,
then~$\psi$ is called a {\em $\sigma_{p^h}$-algebra automorphism},
or a {\em $p^h$-automorphism} for short.
And, if $\psi: R\to S$ is a $p^h$-isomorphism,
then for any ideal $I$ of $R$, %we have
the image $\psi(I)$ is an ideal of $S$ and
$\dim_F \psi(I)=\dim_F I$.
There are two typical examples as follows.

\begin{example}\label{exm} \rm
{\rm (1).} For any polynomial $f(X)=\sum_{i}a_iX^i\in F[X]$,
we denote $f^{(p^h)}(X)=\sum_{i} a_i^{p^h}X^i$.
The Galois automorphism $\sigma_{p^h}: F\to F$, $\alpha\mapsto\alpha^{p^h}$,
induces the map
\begin{align} \label{eq tilde sigma}
\tilde\sigma_{p^h}\!: ~ F[X]\,\longrightarrow\, F[X], ~~~~
 f(X)\,\longmapsto\,f^{(p^h)}(X),
\end{align}
which is a $p^h$-automorphism of $F[X]$.
Let ${\rm ord}(\sigma_{p^h})$ denote the order of %the automorphism
$\sigma_{p^h}$.  It is easy to check that
%${\rm ord}(\sigma_{p^h})=\frac{\ell}{\gcd(h,\ell)}$.%the order
${\rm ord}(\tilde\sigma_{p^h})
 ={\rm ord}(\sigma_{p^h})=\frac{\ell}{\gcd(h,\ell)}$.

{\rm (2).} For the matrix algebra
${\rm M}_n(F)=\big\{ (a_{ij})_{n\times n} \,\big|\, a_{ij}\in F\big\}$
consisting of all $n\times n$ matrices over $F$,
the map
\begin{align} \label{eq dot sigma}
\dot\sigma_{p^h}\!: {\rm M}_n(F)\to {\rm M}_n(F), ~~~~
\big(a_{ij}\big)_{n\times n}
  \longmapsto\big(a_{ij}^{p^h}\big)_{n\times n},
 %~~ \mbox{for}~\sum\limits_{i}a_i X^i \in F[X],
\end{align}
is a $p^h$-automorphism of ${\rm M}_n(F)$.
\end{example}

Let $\lambda\in F^\times$.
%The $\lambda$-constacyclic codes and
%the $2$-quasi $\lambda$-constacyclic codes are defined in
%Eq.\eqref{int def lambda-cyclic} and Eq.\eqref{int def 2 lambda-cyclic}.
%We show another description of them.
For a positive integer $k$, we write $E_k$ to denote the identity matrix of degree $k$.
Let $P_\lambda$ denote
the {\em $\lambda$-constacyclic permutation matrix} of degree $n$ as follows
\begin{align} \label{eq def P_lambda}
P_{\lambda}=\begin{pmatrix} & E_{n-1}\\ \lambda E_1\end{pmatrix}
=\begin{pmatrix} 0 & 1\\ & 0 & \ddots\\  && \ddots & \ddots\\
 & & & \ddots & 1 \\ \lambda &&&&0\end{pmatrix}.
\end{align}
In particular, if $\lambda=1$ then
$P:=P_1=\begin{pmatrix} & E_{n-1}\\ E_1\end{pmatrix}$
is the {\em cyclic permutation matrix}. By matrix multiplication,
for $\mathbf{a}=(a_0,a_1,\dots,a_{n-1})\in{F^n}$ we have that
\begin{align} \label{eq a dot P}
\mathbf{a}\cdot P_\lambda=
(a_0,a_1,a_2,\dots,a_{n-1})\cdot P_{\!\lambda}
=(\lambda a_{n-1},a_0,a_1,\dots,a_{n-2}),
\end{align}
which is the vector obtained by $\lambda$-constacyclically permuting
the items of the vector $(a_0,a_1,\dots,a_{n-1})$;  and that
for $\mathbf{a}=(a_0,a_1,\dots,a_{n-1})\in{F^n}$ and
$\mathbf{a'}=(a'_0,a'_1,\dots,a'_{n-1}) \in{F^n}$,
\begin{align} \label{eq aa dot PP}
\big(\mathbf{a},\,\mathbf{a}'\big)
\begin{pmatrix}P_\lambda\!\\ &\! P_\lambda  \end{pmatrix}
%&=\big(\mathbf{c}\cdot P_{\!\lambda},\,\mathbf{c}'\cdot P_{\!\lambda}\big)\\
&=(\lambda a_{n-1},a_0,a_1,\dots,a_{n-2}, \,
 \lambda a'_{n-1},a'_0,a'_1,\dots, a'_{n-2}).
\end{align}
Thus, we get another description of the $\lambda$-constacyclic codes
(cf. Eq.\eqref{int def lambda-cyclic}) and
the $2$-quasi $\lambda$-constacyclic codes (cf. Eq.\eqref{int def 2 lambda-cyclic})
 as follows.

\begin{lemma} \label{lambda-CC code}
{\bf(1)} Let $C$ be a subspace of $F^n$.
Then $C$ is a $\lambda$-constacyclic code if and only if
$\mathbf{c}\cdot P_\lambda \in C$, for any $\mathbf{c}\in C$.

{\bf(2)} Let $C$ be a subspace of $F^n\times F^{n}$.
Then $C$  is a $2$-quasi $\lambda$-constacyclic code
if and only if
$\Big(\mathbf{c},\,\mathbf{c}'\big)
\begin{pmatrix}P_\lambda\!\\ &\! P_\lambda  \end{pmatrix}
\in C$, for any $\big(\mathbf{c},\,\mathbf{c}'\big)\in C$. \qed
\end{lemma}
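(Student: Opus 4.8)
The plan is to prove both parts by directly unwinding the definitions and reading off the matrix identities already recorded in Eq.~\eqref{eq a dot P} and Eq.~\eqref{eq aa dot PP}; nothing deeper than this bookkeeping is required, so I will keep it brief.

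For \textbf{(1)}, I would fix an arbitrary word $\mathbf{c}=(c_0,c_1,\dots,c_{n-1})\in C$. By Eq.~\eqref{int def lambda-cyclic}, the statement ``$C$ is a $\lambda$-constacyclic code'' means exactly that $(\lambda c_{n-1},c_0,c_1,\dots,c_{n-2})\in C$ for every such $\mathbf{c}$. Applying Eq.~\eqref{eq a dot P} with $\mathbf{a}=\mathbf{c}$ identifies this shifted word with $\mathbf{c}\cdot P_\lambda$, so the defining condition becomes precisely ``$\mathbf{c}\cdot P_\lambda\in C$ for all $\mathbf{c}\in C$'', which is the asserted equivalence. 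I would add the observation that, since $C$ is assumed to be a subspace, invariance under the single permutation $P_\lambda$ already exhausts the content of the definition, so there is nothing further to check.

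Part \textbf{(2)} is handled the same way, one level up. Given $(\mathbf{c},\mathbf{c}')\in C$ with $\mathbf{c}=(c_0,\dots,c_{n-1})$ and $\mathbf{c}'=(c'_0,\dots,c'_{n-1})$, Eq.~\eqref{int def 2 lambda-cyclic} says that ``$C$ is $2$-quasi $\lambda$-constacyclic'' if and only if the doubly shifted word $(\lambda c_{n-1},c_0,\dots,c_{n-2},\,\lambda c'_{n-1},c'_0,\dots,c'_{n-2})$ belongs to $C$ for all $(\mathbf{c},\mathbf{c}')\in C$; and Eq.~\eqref{eq aa dot PP} identifies this word with $(\mathbf{c},\mathbf{c}')\begin{pmatrix}P_\lambda\\ & P_\lambda\end{pmatrix}$. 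Substituting this identity into the membership condition yields the claim.

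I do not anticipate any genuine obstacle: the lemma is simply a translation of Eq.~\eqref{int def lambda-cyclic} and Eq.~\eqref{int def 2 lambda-cyclic} into matrix language, and the only computations used, the two matrix products, are already displayed in the text. The point of recording it separately is that the block-diagonal matrix $\begin{pmatrix}P_\lambda\\ & P_\lambda\end{pmatrix}$, equivalently the action of $P_\lambda$ on each of the two coordinate blocks, is exactly the device needed to pass to the ring- and module-theoretic description of $2$-quasi $\lambda$-constacyclic codes in Section~\ref{2-Q CC}.
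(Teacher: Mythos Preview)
Your proposal is correct and matches the paper's approach exactly: the paper itself gives no proof beyond the \qed marker, treating the lemma as an immediate reformulation of Eq.~\eqref{int def lambda-cyclic} and Eq.~\eqref{int def 2 lambda-cyclic} via the matrix identities Eq.~\eqref{eq a dot P} and Eq.~\eqref{eq aa dot PP}. Your write-up simply spells out this translation, which is all that is needed.
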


In the following we always denote
$\RL=F[X]\big/\langle X^n - \lambda\rangle$,
which is the quotient algebra of the polynomial algebra $F[X]$ over the ideal
$\langle X^n-\lambda\rangle$ generated by $X^n-\lambda$.
Any residue class modulo $X^n-\lambda$ has a unique representative
polynomial with degree less than $n$. Hence we can write %it as
\begin{align} \label{eq R_lambda}
\RL=F[X]\big/\langle X^n - \lambda\rangle
 =\big\{\,  a_0+a_1X+\dots+a_{n-1}X^{n-1}\;\big|\; a_i\in F\,\big\}.
\end{align}
Further, the Cartesian product
\begin{align}
 \RL^2=\RL\times\RL
 =\big\{\big(a(X),a'(X)\big)\;\big|\; a(X),~ a'(X)\in\RL \big\}
\end{align}
is an $\RL$-module.
For $\RL$ and $\RL^2$,
the following identifications and results will be quoted later in this article.

\begin{remark} \label{rk R ident. Fn} \rm
(1). There is a canonical linear isomorphism
\begin{align} \label{eq R to F}
\iota:~~  {\RL}~
\mathop{\longrightarrow}^{\cong}~ F^n,\qquad
a(X) ~\longmapsto~ \mathbf{a},
\end{align}
where $a(X)=a_0+a_1X+\dots+a_{n-1}X^{n-1}\in\RL$
and $\mathbf{a}=(a_0,a_1,\dots,a_{n-1})\in F^n$. It is easy to check that
\begin{align} \label{eq R cong F}
 \iota\big(X a(X)\big) = {\bf a}\cdot P_{\lambda},\qquad
\forall \;a(X)\in\RL.
\end{align}
Then any element $a(X)$ of $\RL$
is identified with the word $\mathbf{a}=(a_0,a_1,\dots,a_{n-1})$ of $F^n$;
and by Lemma~\ref{lambda-CC code}(1),
the $\lambda$-constacyclic codes of length $n$
are identified with the ideals ($\RL$-submodules) of $\RL$.

(2). For the $\RL$-module $\RL^2=\RL\times\RL$,
we have the following canonical linear isomorphism
\begin{align} \label{eq RR to FF}
\iota^{(2)}:~~  {\RL}\times{\RL}~
\mathop{\longrightarrow}^{\cong}~ F^n\times F^n,\qquad
\big(a(X),\,a'(X)\big) ~\longmapsto~ \big(\mathbf{a},\,\mathbf{a}'\big),
\end{align}
where $a(X)=a_0+a_1X+\dots+a_{n-1}X^{n-1}$,
$a'(X)=a'_0+a'_1X+\dots+a'_{n-1}X^{n-1}$,
and $(\mathbf{a},\mathbf{a}')=(a_0,a_1,\dots,a_{n-1},\, a'_0,a'_1,\dots,a'_{n-1})$.
For $\big(a(X),a'(X)\big)\in\RL^2$,
\begin{align} \label{eq RR cong FF}
 \iota^{(2)}\big(X\big(a(X),a'(X)\big)\big)
 =\iota^{(2)}\big(Xa(X),Xa'(X)\big)= \big({\bf a},{\bf a}'\big)
\begin{pmatrix}P_\lambda\!\\ &\! P_\lambda  \end{pmatrix}. %\in F^n\times F^n.
\end{align}
Then any element $\big(a(X),a'(X)\big)\in{{\RL}\times{\RL}}$
is identified with the word $\big(\mathbf{a},\,\mathbf{a}'\big)\in{F^n\times F^n}$,
and by Lemma~\ref{lambda-CC code}(2)
the $2$-quasi $\lambda$-constacyclic codes of length $2n$ are identified
with the ${\RL}$-submodules of $\RL^2$.
\end{remark}

%For $\RL$ and $\RL^2$, the following  will be quoted.

\begin{remark} \label{rk R semisimple} \rm
If $\gcd(n,q)=1$, then the algebra $\RL$ is semisimple, cf. \cite{CDFL}; and
by Ring Theory (e.g., cf. \cite{J} or \cite[Remark 2.4]{FZ23}),
we have the following two.

(1) For any ideal ($\RL$-submodule) $C$ of ${\RL}$,  %(denote by $C\leq\RL$),
there is an idempotent $e_C$ of~${\RL}$ such that
$C=\RL e_C$ and ${\RL}=C\oplus D$ where $D={\RL}(1-e_C)$.
Note that $C=\RL e_C$ is an algebra with identity $e_C$ (but not a subalgebra of $\RL$
in general because $e_C\ne 1$ in general); in particular, $C^\times$ makes sense.
Moreover, if $I=\RL f$ with $f$ being an idempotent and
an ideal $C\subseteq I$, then $I=C\oplus C'$
with $C'=\RL(f-e_C)$.

(2) If $C$ and $C'$ are $\RL$-submodules of $\RL$; and
$\varphi: C\to C'$ is an $\RL$-module isomorphism,
then $C'=C$, and there is a $g\in C^{\times}$
%(recall that $C=\RL e$ is an ring with identity $e$)
such that
$\varphi(c)=cg$ for any $c\in C$.
\end{remark}

If $\lambda=1$, there is another identification.
By Eq.(\ref{eq R_lambda}), we denote
\begin{align} \label{eq R=R_1}
 \R1=\R1_1=F[X]/\langle X^n-1\rangle.
\end{align}
Let $G=\langle\, x\,|\,x^n=1\,\rangle=\{1,x,\dots,x^{n-1}\}$
be the cyclic group of order $n$.
Let $FG$ be the cyclic group algebra, i.e.,
$FG=\big\{\sum_{i=0}^{n-1}a_ix^i\,\big|\,a_i\in F\big\}$
 is an $F$-vector space with basis $G$ and equipped
with the multiplication induced by the multiplication of the group $G$ as follows:
$$
 \Big(\sum_{i=0}^{n-1}a_ix^i\Big)\Big(\sum_{j=0}^{n-1}b_jx^j\Big) =
 \sum_{k=0}^{n-1} %\Big(\sum_{i+j\equiv k\;({\rm mod}~n)} a_ib_j\Big)x^k
  \Big(\sum_{x^i x^j=x^k} a_ib_j\Big)x^k.
$$
There is a canonical algebra isomorphism:
\begin{align} \label{eq R to FG}
\R1=F[X]/\langle X^n-1\rangle
\;\mathop{\longrightarrow}^{\cong}\; FG, ~~~~
 \sum_{i=0}^{n-1} a_iX^i\;\longmapsto\;\sum_{i=0}^{n-1} a_ix^i.
\end{align}
Thus, $\R1$ is identified with the cyclic group algebra
$FG=\{\sum_{i=0}^{n-1}a_ix^i\,|\,a_i\in F\}$.
And by Remark~\ref{rk R ident. Fn}(1), the cyclic codes over $F$ of length $n$
are identified with the ideals of the cyclic group algebra $FG$.
Similarly, $2$-quasi-cyclic codes over $F$ of length $2n$
are identified with the $FG$-submodules of
the $FG$-module $(FG)^2=FG\times FG$.

With the identifications Eq.(\ref{eq R to FG}),
we have more algebraic preliminaries about $\R1$ to introduce.
Assume that $\gcd(n,q)=1$, then %${\gcd(n,q)=1}$,
$\R1$ is semisimple.
Let
\begin{align} \label{eq e_0 ...}
e_0=\frac{1}{n}\sum_{i=0}^{n-1}x^i,~ e_1,~ \dots,~ e_m
\end{align}
be all primitive idempotents of $\R1$. Correspondingly, %we have
the irreducible decomposition of $X^n-1$ in $F[X]$ is as follows
\begin{align} \label{eq X^n-1=...}
X^n-1=\varphi_0(X)\varphi_1(X)\dots \varphi_m(X), \quad
 \mbox{where } \varphi_0(X)=X-1,
\end{align}
such that
\begin{align} \label{eq R=...}
 \begin{array}{l}
 \R1 = \R1 e_0\oplus \R1 e_1 \oplus\dots\oplus \R1 e_m;\\[3pt]
\R1 e_i\cong F[X]/\langle\varphi_i(X)\rangle,\quad i=0,1,\dots,m.
 \end{array}
\end{align}
Since $\varphi_i(X)$ is irreducible over $F$,
each $\R1 e_i$ is an extension field over $F$ with identity $e_i$, and
$\dim_{F}\R1 e_i=\deg\varphi_i(X)$.
As $e_0=\frac{1}{n}\sum_{i=0}^{n-1}x^i$,
$xe_0=e_0$; so
\begin{align} \label{eq R e_0}
\R1 e_0 =\!
\big\{\alpha\!+\!\alpha x+\dots+\!\alpha x^{n-1}\,\big|\,\alpha\!\in\! F\big\}, ~
\dim_{F}\R1 e_0=1, ~ {\rm w}(\R1 e_0)=n.
\end{align}
For $1\leq i\leq m$, we denote
\begin{align} \label{eq mu(n)=...}
\mu(n):=\min_{1\le i\le m} \deg\varphi_i(X)
 =\min_{1\le i\le m} \dim_F\R1 e_i.
\end{align}

\begin{remark} \label{rk semisimple}\rm
In general, in this paper we consider $\RL$ for any $q$ and $n$
 (not restricted to the semisimple case)
unless the hypothesis ``$\gcd(n,q)=1$'' is explicitly assumed.
Once ``$\gcd(n,q)=1$'' is assumed, the above preliminaries on the
semisimple case can be quoted.
\end{remark}

\begin{remark} \label{rk 2-quasi cyclic} \rm
As mentioned in Introduction,
if $q\;{\not\equiv}\;3~({\rm mod}~4)$ then the
self-dual $2$-quasi-cyclic codes are asymptotically good.
To state it more precisely, we need the so-called {\em $q$-entropy function}:
\begin{align} \label{eq def h_q}
h_{q}(\delta)=
\delta \log_{q}(q\!-\!1)-\delta \log_{q}(\delta)-(1\!-\!\delta)\log_q(1\!-\!\delta),
 ~~~
 \delta\in[0,1\!-\!q^{-1}],
\end{align}
which value strictly increases from $0$ to $1$
while $\delta$ increases from $0$ to $1-q^{-1}$.
By \cite[Theorem IV.17]{LF22}, for any real number $\delta$ with
$0<\delta< 1-q^{-1}$ and $h_q(\delta)<\frac{1}{4}$,
there are self-dual $2$-quasi-cyclic codes $C_1,C_2, \dots$ over $F$
(hence ${\rm R}(C_i)=\frac{1}{2}$) such that:
(1) the relative minimum distance $\Delta(C_i)>\delta$\, for $i=1,2,\dots$; and
(2)~the code length $2n_i$ of $C_i$ satisfies that every $n_i$ is odd and coprime to $q$,
 and $\lim\limits_{i\to\infty} \frac{\log_q(n_i)}{\mu(n_i)}=0$
(in particular, $2n_i\to\infty$).
\end{remark}

\begin{definition}\label{def  balanced} \rm
Let $C\subseteq F^n=F^I$ with $I=\{0,1,\dots,n-1\}$ being the index set.
If there are positive integers $s,k,t$ and subsets (repetition is allowed)
$I_1,\dots, I_s$ of $I$ with $|I_j|=k$ for $j=1,\dots,s$
satisfying the following two conditions:
(1) for each $I_j=\{i_1,\cdots,i_k\}$ ($1\le j\le s$)
the projection $\rho_{I_{j}}$: $F^{I}~\to~ F^{I_{j}}$,
$(a_0,a_1,\dots,a_{n-1})\mapsto(a_{i_1},\dots,a_{i_k})$,
maps $C$ bijectively onto~$F^{I_j}$;
and (2) for any~$i$ ($0\le i< n$)
the number of the subsets $I_j$ which contains $i$ (i.e., $i\in I_j$) equals $t$;
then we say that $C$ is a {\em balanced code} over $F$ of length $n$,
and $I_1,\dots,I_s$ are called {\em information index sets} of the code~$C$.
\end{definition}

An important result (see \cite[Corollary 3.4]{FL15}) is that:
if $C\subseteq F^{n}$ is a balance code with cardinality $|C|=q^k$, then
$$
 |C^{\le\delta}|\le q^{kh_q(\delta)}, ~~~ \mbox{for}~~ 0\le\delta\le 1-q^{-1},
$$
where $h_q(\delta)$ is defined in Eq.\eqref{eq def h_q} and
\begin{align} \label{eq def C^<=}
 C^{\le\delta}=\{\,{\bf c}\,|\, {\bf c}\in C,\, {{\rm w}({\bf c})}\le\delta n\,\}.
\end{align}
Constacyclic codes are balanced, see~\cite[Lemma~II.8]{FL22}.
By \cite[Corollary 3.4 and Corollary 3.5]{FL15},
we can easily obtain the following lemma.

\begin{lemma}\label{lem balance}
If $A$ is an ideal of $\RL$, then the $\RL$-submodule $A\times A$ of $\RL^2$ is
a balanced code, hence for $0\le\delta\le 1-q^{-1}$,
$$ 
 |(A\times A)^{\le\delta}| \le q^{2\dim_F (A)\cdot h_q(\delta)}.
\eqno\qed
$$ 
\end{lemma}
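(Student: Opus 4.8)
The plan is to deduce Lemma~\ref{lem balance} directly from the cited results of \cite{FL15} together with the fact (from \cite[Lemma~II.8]{FL22}) that every $\lambda$-constacyclic code is a balanced code. First I would recall from Remark~\ref{rk R ident. Fn}(1) that an ideal $A$ of $\RL$ corresponds, via the linear isomorphism $\iota$, to a $\lambda$-constacyclic code in $F^n$, which is balanced by the quoted result; let $I_1,\dots,I_s\subseteq\{0,\dots,n-1\}$ be its information index sets, all of common size $k=\dim_F A$, with each coordinate covered exactly $t$ times. The main point is then to exhibit balancedness of $A\times A$ inside $\RL^2\cong F^n\times F^n$: I would take the $2s$ subsets $I_1,\dots,I_s$ (read in the first block of $n$ coordinates) together with $I_1+n,\dots,I_s+n$ (the same subsets shifted into the second block). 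Each of these has size $k$; for each, the corresponding projection of $A\times A$ is just the projection of one factor $A$ onto the relevant coordinate block, which is onto $F^k$ by the balancedness of $A$; and every coordinate of the $2n$ positions is covered exactly $t$ times, since the first-block coordinates are covered by $I_1,\dots,I_s$ and the second-block coordinates by the shifted copies. Hence $A\times A$ satisfies Definition~\ref{def balanced} with parameters $2s$, $k=\dim_F A$, $t$; so it is a balanced code of length $2n$ with $|A\times A|=q^{2\dim_F A}$.

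Having established that $A\times A$ is balanced with cardinality $q^{2\dim_F A}$, the displayed bound is immediate from the quoted consequence of \cite[Corollary~3.4]{FL15}: for any balanced code $C$ of cardinality $q^K$ one has $|C^{\le\delta}|\le q^{K h_q(\delta)}$ for $0\le\delta\le 1-q^{-1}$; applying this with $C=A\times A$ and $K=2\dim_F A$ gives exactly $|(A\times A)^{\le\delta}|\le q^{2\dim_F(A)\cdot h_q(\delta)}$. Alternatively, since the excerpt says the lemma follows from \cite[Corollary~3.4 and Corollary~3.5]{FL15}, Corollary~3.5 there presumably already packages ``a product/Cartesian power of a balanced code is balanced'' (or the analogous statement for $2$-quasi constructions), so one may simply cite it and then invoke Corollary~3.4 for the counting estimate; I would phrase the proof to accommodate whichever form is actually in \cite{FL15}.

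The only real obstacle is verifying that $A\times A$ genuinely meets the precise wording of Definition~\ref{def balanced} — in particular that the projections $\rho_{I_j}$ restricted to $A\times A$ are \emph{bijective} onto $F^{I_j}$, not merely surjective. This follows because $\rho_{I_j}|_{A\times A}$ factors through one of the two coordinate blocks as $\rho_{I_j}|_A$ (for the first block) or its shifted analogue (for the second block), composed with the block projection $A\times A\to A$; and $\rho_{I_j}|_A$ is bijective onto $F^{I_j}$ by hypothesis, hence has domain and codomain of the same cardinality $q^k$, so the composite is surjective between sets of equal cardinality $q^{2k}\to q^k$ — wait, that is not bijective. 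The correct reading of Definition~\ref{def balanced} is that the index subsets $I_j$ live in the \emph{full} index set of the code in question; so for $A\times A$ (length $2n$) the subsets must have size $k=\dim_F(A\times A)/? $. Re-examining: the information index sets of a balanced code have size $k$ equal to the dimension, and $\dim_F(A\times A)=2\dim_F A$; so I should instead take subsets of size $2\dim_F A$, each being $I_j\cup(I_{j'}+n)$ for suitable pairs, arranged so that the projection is bijective and the covering number is uniform. I would set this up carefully by combining the $s$ sets for the first block with the $s$ sets for the second block into $s$ (or $s^2$, chosen to keep the covering count constant) subsets of size $2k$, using that $A\times A\to F^{I_j}\times F^{I_{j'}+n}$ is the product of two bijections and hence a bijection onto $F^{2k}$; the covering number becomes $t$ (or a uniform multiple thereof), and the cardinality count $q^{2\dim_F A}$ is unchanged. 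This bookkeeping is routine once the product structure is used, and it is exactly what \cite[Corollary~3.5]{FL15} is designed to handle, so in the written proof I would lean on that citation rather than redo the combinatorics.
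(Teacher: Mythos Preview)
Your final approach is correct and matches the paper's, which gives no proof beyond the citation to \cite[Corollary~3.4 and Corollary~3.5]{FL15}. For a clean write-up, discard the false start with the $2s$ size-$k$ sets and go directly to the corrected construction: with $I_1,\dots,I_s$ the information index sets of $A$ (each of size $k=\dim_F A$, each coordinate covered $t$ times), take $J_j=I_j\cup(I_j+n)\subseteq\{0,\dots,2n-1\}$ for $j=1,\dots,s$; each $J_j$ has size $2k=\dim_F(A\times A)$, the projection $A\times A\to F^{J_j}\cong F^{I_j}\times F^{I_j}$ is the product of two bijections $\rho_{I_j}|_A$ and hence bijective, and every one of the $2n$ coordinates is covered exactly $t$ times. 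Then apply \cite[Corollary~3.4]{FL15} with $|A\times A|=q^{2k}$.
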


\section{$2$-quasi constacyclic codes over finite fields}
\label{2-Q CC}

In this sections we are primarily concerned with the algebra properties of
 $2$-quasi $\lambda$-constacyclic codes over~$F$.
In the following, we always assume that
\begin{align} \label{eq t=...}
\lambda\in F^\times ~~~ \mbox{and}~~~ t={\rm ord}_{F^\times}(\lambda),
\end{align}
where ${\rm ord}_{F^\times}(\lambda)$ denotes
the order of $\lambda$ in the multiplication group $F^\times$.

As remarked in Remark~\ref{rk semisimple},
most of this section discusses $\RL$ for any $q$ and~$n$,
only Theorem~\ref{Goursat C} and its corollaries consider
the semisimple case (i.e., $\gcd(n,q)=1$).

\begin{remark} \label{rk projections} \rm
For $F^n\times F^n$, there are two projections $\rho_1$, $\rho_2$:
$F^n\times F^n \to F^n$ as follows
$$
 \rho_1({\bf a}, {\bf a}'\big)={\bf a},~~~
\rho_2({\bf a}, {\bf a}')={\bf a}', ~~~~~~
\forall~({\bf a},\, {\bf a}')\in F^n\times F^n.
$$
By the linear isomorphism Eq.\eqref{eq RR to FF},
the projections $\rho_1$ and $\rho_2$
are also defined on $\RL^2=\RL\times\RL$:
for $\big(a(X),\,a'(X)\big)\in\RL^2$,
\begin{align*}
 \rho_1\big(a(X),\,a'(X)\big)=a(X), ~~~~~
 \rho_2\big(a(X),\,a'(X)\big)=a'(X).
\end{align*}
For any $\RL$-submodule $C$ of ${\RL}\times{\RL}$,
restricting $\rho_1$ to $C$, we have an ${\RL}$-homomorphism
$\rho_1|_{C}: C\to\rho_1(C)$.
Observe that the kernel of the restricted homomorphism is
\begin{align} \label{kernel of rho}
 {\rm Ker}(\rho_1|_{C})=C\cap(0\times{\RL})
=\big\{\big(0,c'(X)\big)\in C\,\big|\,c'(X)\in{{\RL}}\big\}.
\end{align}
\end{remark}

It is known that for $\lambda_1\ne \lambda_2\in F^\times$,
if a linear code $C\subseteq F^n$ is both
$\lambda_1$-constacyclic and $\lambda_2$-constacyclic,
then either $C=\{{\bf 0}\}$ or $C=F^n$ (cf. {\cite{D12}}).
Extending the result to $F^n\times F^n$, we get the following lemma.
% (thanks are given to Dr. Dinh H.Q. who told the first author the result).

\begin{lemma} \label{lambda_1 lambda_2}
%With the notation as above.
Let $\lambda_1,\lambda_2\in F^\times$ such that $\lambda_1\ne\lambda_2$.
If a subspace $C$ of  $F^n\times F^n$ is
both a $2$-quasi $\lambda_1$-constacyclic code and
 a $2$-quasi $\lambda_2$-constacyclic code, then either $\rho_1(C)=\{{\bf 0}\}$ or $\rho_1(C)=F^n$;
 and it is the same for $\rho_2(C)$.
\end{lemma}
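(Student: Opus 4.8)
The plan is to reduce the $2$-dimensional statement about $F^n\times F^n$ to the known $1$-dimensional fact about $\lambda$-constacyclic codes in $F^n$. First I would observe that the hypothesis ``$C$ is a $2$-quasi $\lambda_j$-constacyclic code'' means, by Lemma~\ref{lambda-CC code}(2), that $C$ is invariant under $\mathrm{diag}(P_{\lambda_j},P_{\lambda_j})$ for $j=1,2$. I then want to push this invariance down through the projection $\rho_1\colon F^n\times F^n\to F^n$ (and symmetrically through $\rho_2$). The key point is the intertwining relation
\begin{align*}
 \rho_1\Big((\mathbf a,\mathbf a')\,\mathrm{diag}(P_{\lambda_j},P_{\lambda_j})\Big)
 =\mathbf a\cdot P_{\lambda_j}
 =\rho_1(\mathbf a,\mathbf a')\cdot P_{\lambda_j},
\end{align*}
which follows at once from Eq.\eqref{eq aa dot PP}. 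Hence $\rho_1(C)$ is invariant under $P_{\lambda_1}$ and under $P_{\lambda_2}$, i.e. by Lemma~\ref{lambda-CC code}(1) it is a subspace of $F^n$ that is simultaneously $\lambda_1$-constacyclic and $\lambda_2$-constacyclic.

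Now I would invoke the cited one-dimensional result (from \cite{D12}), quoted in the paragraph just before the lemma: a linear code in $F^n$ that is both $\lambda_1$-constacyclic and $\lambda_2$-constacyclic for $\lambda_1\ne\lambda_2$ must be either $\{\mathbf 0\}$ or all of $F^n$. Applying this to $\rho_1(C)$ gives $\rho_1(C)=\{\mathbf 0\}$ or $\rho_1(C)=F^n$, and the identical argument with $\rho_2$ in place of $\rho_1$ gives the same dichotomy for $\rho_2(C)$. That completes the proof.

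The only genuinely delicate point — and the step I would write out most carefully — is the verification that $\rho_1(C)$ really is a \emph{subspace} of $F^n$ closed under the relevant permutations, i.e. that the projection of a linear subspace is linear (immediate) and that the invariance transfers cleanly via the displayed intertwining identity; one should also note that $C$ being a subspace (not merely a subset) is exactly what lets us apply Lemma~\ref{lambda-CC code}. Everything else is a direct citation. If one preferred a self-contained argument instead of citing \cite{D12}, one could reprove the $1$-dimensional fact in a line or two: if $D\subseteq F^n$ is $\lambda_1$- and $\lambda_2$-constacyclic and $\mathbf 0\ne\mathbf d=(d_0,\dots,d_{n-1})\in D$, then $\mathbf d\cdot P_{\lambda_1}-\mathbf d\cdot P_{\lambda_2}=((\lambda_1-\lambda_2)d_{n-1},0,\dots,0)\in D$, and iterating the constacyclic shifts on this vector (which is a nonzero scalar multiple of a standard basis vector, after noting some coordinate of $\mathbf d$ is nonzero) forces all of $F^n$ into $D$; hence $D=F^n$ unless $D=\{\mathbf 0\}$. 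But since the excerpt explicitly allows us to quote \cite{D12}, I would simply cite it.
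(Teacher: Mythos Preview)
Your proof is correct, but it takes a different route from the paper's. You reduce to the one-variable situation: you first push the invariance of $C$ under $\mathrm{diag}(P_{\lambda_j},P_{\lambda_j})$ through the projection $\rho_1$ via the intertwining identity, conclude that $\rho_1(C)\subseteq F^n$ is simultaneously $\lambda_1$- and $\lambda_2$-constacyclic, and then invoke the known one-dimensional dichotomy from \cite{D12}. The paper instead works directly in $F^n\times F^n$: assuming $\rho_1(C)\ne\{\mathbf 0\}$, it picks $(\mathbf c,\mathbf c')\in C$ with $c_{n-1}\ne 0$, forms the normalized difference
\[
\frac{1}{(\lambda_1-\lambda_2)c_{n-1}}
\Big((\mathbf c P_{\lambda_1},\mathbf c' P_{\lambda_1})-(\mathbf c P_{\lambda_2},\mathbf c' P_{\lambda_2})\Big)
=(1,0,\dots,0,\,d'_0,\dots,d'_{n-1})\in C,
\]
and iterates the double shift to produce, for every index $i$, a codeword whose first block is the $i$-th standard basis vector; hence $\rho_1(C)=F^n$. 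This is essentially the self-contained one-dimensional argument you sketched at the end, but carried out \emph{inside} $F^n\times F^n$ without ever isolating $\rho_1(C)$ as a constacyclic code in its own right.

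What each approach buys: yours is cleaner and more modular---once the intertwining relation is checked, the lemma becomes an immediate corollary of the cited one-dimensional fact, and the same template would work for any coordinate projection in a $k$-quasi setting. The paper's direct computation is fully self-contained (no appeal to \cite{D12} inside the proof) and makes the mechanism visible at the level of codewords, which is perhaps why the authors chose it even after quoting the one-dimensional result in the preceding paragraph.
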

\begin{proof}
Suppose $\rho_1(C)\ne \{{\bf 0}\}$. There is a codeword
$$
 ({\bf c},\,{\bf c}')=(c_0,\dots, c_{n-1}, \; c'_0,\dots, c'_{n-1}) \,\in\, C
$$
with some $c_j\ne 0$, $0\le j<n$; so we can assume that $c_{n-1}\ne 0$.
By the double $\lambda$-constacyclic permutation in %it $n-i$ times
Eq.\eqref{int def 2 lambda-cyclic},
then $C$ contains the following word
$$
\frac{1}{(\lambda_1-\lambda_2)c_{n-1}}
\Big(\big(\mathbf{c} P_{\!\lambda_1},\,\mathbf{c}'P_{\!\lambda_1}\big)
-\big(\mathbf{c} P_{\!\lambda_2},\,\mathbf{c}'P_{\!\lambda_2}\big)\Big)
=\big( 1,0,\dots,0,\,d'_0,\dots,d'_{n-1}\big),
$$
for some  $d'_{i}\in{F}$, $i=0,1,\dots,n-1$.
Using the double $\lambda$-constacyclic permutation $i$ times, we see that
$C$ contains such words
$$
 \big( \mathop{0,\dots,0,\, 1,\,0,\dots, 0,}
 \limits_{\mbox{\scriptsize index\,$i$}}\,b'_0,\dots,b'_{n-1}\big),
 \qquad i=0,1,\dots,n-1,
$$
for some  $b'_{i}\in{F}$, $i=0,1,\dots,n-1$.
It follows that $\rho_1(C)$ contains a basis of~$F^n$, hence $\rho_1(C)=F^n$.

By the same argument, either $\rho_2(C)=\{{\bf 0}\}$ or $\rho_2(C)=F^n$.
\end{proof}

For any matrix $A=\big(a_{ij}\big)_{m\times n}$ with $a_{ij}\in F$,
we denote $A^{(p^h)}=\big(a_{ij}^{p^h}\big)_{m\times n}$
(cf. Example~\ref{exm}(2)).
By $A^T$ we denote the transpose of~$A$.
Let
\begin{align} \label{eq def *h}
A^{*h}=(A^{(p^h)})^T =\big(a_{ji}^{p^h}\big)_{n\times m}
 =\begin{pmatrix} a_{11}^{p^h} & \cdots & a_{m1}^{p^h} \\
   \dots & \dots & \dots \\ a_{1n}^{p^h} & \cdots & a_{mn}^{p^h} \end{pmatrix}
\end{align}
be the transpose of the matrix~$A^{(p^h)}$.
We call $A^{*h}$ the {\em Galois $p^h$-transpose} of $A$.
If $h=0$, $A^{*0}=A^T$ is just the transpose matrix of~$A$.
If $\ell$ is even and $h=\frac{\ell}{2}$, $A^{*\frac{\ell}{2}}$ is the Hermitian transpose
of $A$.

For $B=(b_{ij})_{m\times n}$, $C=(c_{ij})_{n\times k}$ and $\alpha\in F$,
it is easy to check that
\begin{align} \label{eq *h satisfy}
(AC)^{*h}=C^{*h}A^{*h}, ~~  (A+B)^{*h}=A^{*h}+B^{*h}, ~~
 (\alpha A)^{*h}=\alpha^{p^h} A^{*h}.
\end{align}
Hence, if $m=n$, the operator ``$*h$'' is a $p^h$-anti-automorphism
of the matrix algebra ${\rm M}_{n}(F)$
(compare it with Eq\eqref{eq alg iso} and Example~\ref{exm}(2)).

With the identification Eq.\eqref{eq R to F} and the operator ``$*h$'',
 we can compute the Galois $p^h$-inner product on ${\RL}$
(see Eq.\eqref{int Galois inner})
in a matrix version:
\begin{align}\label{eq G1 inner prod.}
 \big\langle a,\,b\big\rangle_h =\mathbf{a}\cdot \mathbf{b}^{*h},
  \qquad \forall~ a,b\in{\RL}.
\end{align}
And for any $\lambda$-constacyclic code $C$  (i.e., any idea of $\RL$),
the Galois $p^h$-dual code of $C$ is as follows:
$$C^{\bot h}=\big\{\,a\in{{\RL}}~\big|~ \big\langle c,\,a\big\rangle_h=0,
~ \forall\, c\in{C}\,\big\}.
$$
Similarly, with the identification Eq.\eqref{eq RR to FF}
the Galois $p^h$-inner product on $\RL^{2}$ is computed in a matrix version:
\begin{align} \label{inner as matrix product}
 \big\langle (a, a'),\,(b, b')\big\rangle_h
=\big(\mathbf{a},\mathbf{a}'\big)\cdot\big(\mathbf{b},\mathbf{b}'\big)^{*h},
\qquad \forall ~\big(a,a'\big),\,\big(b,b'\big)\in\RL^2.
\end{align}
And for any $2$-quasi $\lambda$-constacyclic code $C$ ($\RL$-submodule of $\RL^{2}$),
the Galois $p^h$-dual code of $C$ is
$$C^{\bot h}=\big\{\,(a,a')\in {\RL^{2}}~\big|~
 \big\langle (c, c'),\,(a, a')\big\rangle_h=0,~ \forall\,(c,c')\in{C}\,\big\}.$$
If $C=C^{\bot h}$, then we say that $C$ is {\em Galois $p^h$-self-dual},
or {\em Galois self-dual}.

\begin{lemma} \label{G C^bot lambda^-1}
If $C\subseteq F^n\times F^n$ is a $2$-quasi $\lambda$-constacyclic code,
then $C^{\bot h}$ is a $2$-quasi $\lambda^{\!-\!p^{\ell-h}}\!$-constacyclic code.
In particular, if $h=0$,
$C^{\bot 0}=C^\bot$ is a $2$-quasi $\lambda^{\!-\!1}$-constacyclic code.
\end{lemma}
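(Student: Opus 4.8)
The plan is to argue entirely with the matrix descriptions already available. Writing $\widehat P_\lambda=\begin{pmatrix}P_\lambda\\ & P_\lambda\end{pmatrix}$ (an invertible matrix), Lemma~\ref{lambda-CC code}(2) says a subspace $C\subseteq F^n\times F^n$ is $2$-quasi $\lambda$-constacyclic iff $C\widehat P_\lambda\subseteq C$; since $\widehat P_\lambda$ is invertible, a dimension count upgrades this to $C\widehat P_\lambda=C$, hence $C\widehat P_\lambda^{-1}=C$ as well. I would then set $\mu=\lambda^{-p^{\ell-h}}$ and first establish the single computational identity $P_\mu^{*h}=P_\lambda^{-1}$. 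The point is that raising the entries $0,1,\mu$ of $P_\mu$ to the power $p^h$ merely replaces $\mu$ by $\mu^{p^h}$, so $P_\mu^{(p^h)}=P_{\mu^{p^h}}$; transposing the explicit matrix of Eq.\eqref{eq def P_lambda} moves the scalar from the lower-left corner into the upper-right corner, so $P_\nu^{T}=\begin{pmatrix}0 & \nu\\ E_{n-1} & 0\end{pmatrix}$, which is exactly $P_\lambda^{-1}$ precisely when $\nu=\lambda^{-1}$; and $\mu^{p^h}=\lambda^{-p^{\ell-h}\cdot p^h}=\lambda^{-p^\ell}=\lambda^{-1}$ since $\alpha^{p^\ell}=\alpha$ for every $\alpha\in F$. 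Thus $P_\mu^{*h}=P_\lambda^{-1}$, and therefore $\widehat P_\mu^{*h}=\widehat P_\lambda^{-1}$.

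Granting that, the conclusion follows by a one-line duality computation. Take arbitrary $(\mathbf a,\mathbf a')\in C^{\bot h}$ and $(\mathbf c,\mathbf c')\in C$; using Eq.\eqref{inner as matrix product} and the rules Eq.\eqref{eq *h satisfy} for $*h$ one gets
\begin{align*}
\big\langle(\mathbf c,\mathbf c'),\,(\mathbf a,\mathbf a')\widehat P_\mu\big\rangle_h
&=(\mathbf c,\mathbf c')\big((\mathbf a,\mathbf a')\widehat P_\mu\big)^{*h}
=(\mathbf c,\mathbf c')\,\widehat P_\mu^{*h}\,(\mathbf a,\mathbf a')^{*h}\\
&=(\mathbf c,\mathbf c')\,\widehat P_\lambda^{-1}\,(\mathbf a,\mathbf a')^{*h}
=\big\langle(\mathbf c,\mathbf c')\widehat P_\lambda^{-1},\,(\mathbf a,\mathbf a')\big\rangle_h=0,
\end{align*}
the last equality because $(\mathbf c,\mathbf c')\widehat P_\lambda^{-1}\in C\widehat P_\lambda^{-1}=C$ while $(\mathbf a,\mathbf a')\in C^{\bot h}$. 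Since $(\mathbf c,\mathbf c')\in C$ was arbitrary this shows $(\mathbf a,\mathbf a')\widehat P_\mu\in C^{\bot h}$, and since $(\mathbf a,\mathbf a')$ was arbitrary, $C^{\bot h}\widehat P_\mu\subseteq C^{\bot h}$. As $C^{\bot h}$ is an $F$-subspace (its defining equations are closed under $F$-linear combinations, the inner product being $p^h$-semilinear in the second slot), Lemma~\ref{lambda-CC code}(2) then gives that $C^{\bot h}$ is a $2$-quasi $\mu$-constacyclic code, i.e.\ $2$-quasi $\lambda^{-p^{\ell-h}}$-constacyclic; the case $h=0$ yields $\mu=\lambda^{-p^\ell}=\lambda^{-1}$, which is the stated special case.

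I do not expect a genuine obstacle. The whole content sits in the identity $P_\mu^{*h}=P_\lambda^{-1}$, which is essentially the remark that the Galois transpose feeds $\lambda$ through $\sigma_{p^h}$, so the exponent of $\lambda$ in the dual must be the one that $\sigma_{p^h}$ sends back to $\lambda^{-1}$, namely $-p^{\ell-h}$. The only place demanding a little care is matching up, after transposition, the corner in which the constacyclic scalar sits with the form of $P_\lambda^{-1}$ coming from Eq.\eqref{eq def P_lambda}; beyond that it is formal manipulation with Eq.\eqref{eq *h satisfy} and the invertibility of $\widehat P_\lambda$.
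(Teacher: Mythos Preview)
Your proposal is correct and follows essentially the same approach as the paper: both establish the key identity $(P_{\lambda^{-p^{\ell-h}}})^{*h}=P_\lambda^{-1}$ and then unwind the Galois inner product using Eq.\eqref{eq *h satisfy} and Eq.\eqref{inner as matrix product}. The only cosmetic difference is that the paper writes $P_\lambda^{-1}=P_\lambda^{nt-1}$ (using $P_\lambda^{nt}=E_n$) to land back in $C$ via iterated constacyclic shifts, whereas you use invertibility of $\widehat P_\lambda$ together with a dimension count to get $C\widehat P_\lambda^{-1}=C$ directly; both are equivalent ways to justify $(\mathbf c,\mathbf c')\widehat P_\lambda^{-1}\in C$.
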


\begin{proof}
Assume that $\big(\mathbf{a},\, \mathbf{a}'\big)\in C^{\bot h}$.
By Lemma~\ref{lambda-CC code}(2), it is enough to prove that
for any $\big(\mathbf{c}, \mathbf{c}'\big)\in C$
we have
$$
\left\langle\big(\mathbf{c},\, \mathbf{c}\big),\,\big(\mathbf{a},\, \mathbf{a}'\big)
\begin{pmatrix}P_{\lambda^{\!-\!p^{\ell-h}}}\!\\ &\! P_{\lambda^{\!-\!p^{\ell-h}}}
  \end{pmatrix}\right\rangle_{\!h}=0.
$$
Applying Eq.\eqref{eq *h satisfy} and Eq.\eqref{inner as matrix product} yields
\begin{align*}
&\left\langle \big(\mathbf{c}, \mathbf{c}'\big),\,\big(\mathbf{a},\, \mathbf{a}'\big)
\begin{pmatrix}P_{{\lambda^{\!-\!p^{\ell-h}}}}\!\\ &\!
 P_{{\lambda^{\!-\!p^{\ell-h}}}} \end{pmatrix}\right\rangle_{\!h}
\\
&=\big(\mathbf{c}, \mathbf{c}'\big)\cdot\left(
 \big(\mathbf{a},\, \mathbf{a}'\big)
 \begin{pmatrix}P_{{\lambda^{\!-\!p^{\ell-h}}}}\!\\
 &\! P_{{\lambda^{\!-\!p^{\ell-h}}}}\end{pmatrix}\right)^{*h} \\
&=\big(\mathbf{c}, \mathbf{c}'\big)\begin{pmatrix}P_{\lambda^{\!-\!p^{\ell-h}}}\!\\
 &\! P_{\lambda^{\!-\!p^{\ell-h}}}\end{pmatrix}^{*h}\big(\mathbf{a},\,
 \mathbf{a}'\big)^{*h}\\
&=\big(\mathbf{c},
\mathbf{c}'\big)\begin{pmatrix}\big(P_{\lambda^{\!-\!p^{\ell-h}}}\big)^{*h}\!\\
 &\! \big(P_{\lambda^{\!-\!p^{\ell-h}}}\big)^{*h}\end{pmatrix}\big(\mathbf{a},\,
 \mathbf{a}'\big)^{*h}.
\end{align*}
Because $(\lambda^{\!-\!p^{\ell-h}})^{p^h}=(\lambda^{p^{\ell}})^{-1}=\lambda^{-1}$,
it is easy to check that
$$
P_{\!\lambda}\big(P_{\lambda^{\!-\!p^{\ell-h}}}\big)^{*h}
=\begin{pmatrix} & E_{n-1}\\ \lambda E_1\end{pmatrix}
\begin{pmatrix} &\! (\lambda^{\!-\!p^{\ell-h}})^{p^h}E_{1}\\
  E_{n-1}\!\end{pmatrix} = E_{n}.
$$
Since $P_{\!\lambda}^n=\lambda E_n$ and ${\rm ord}_{F^{\times}}(\lambda)=t$
(see Eq\eqref{eq t=...}),
we have $P_{\!\lambda}^{nt}=E_n$,
hence $P_{\!\lambda}^{-1}=P_{\!\lambda}^{nt-1}$.
So
\begin{align} \label{G P_lambda^-1}
\big(P_{\lambda^{\!-\!p^{\ell-h}}}\big)^{*h}
=P_{\!\lambda}^{-1}=P_{\!\lambda}^{nt-1}.
\end{align}
By Lemma \ref{lambda-CC code}(2),
$\big(\mathbf{c}, \mathbf{c}'\big)
\begin{pmatrix}P_{\!\lambda}\!\\ &\! P_{\!\lambda}
  \end{pmatrix}^{nt-1}\in C$. We get that
\begin{align*}
\bigg\langle\big(\mathbf{c}, \mathbf{c}'\big),\,\big(\mathbf{a}, \mathbf{a}'\big)
\begin{pmatrix}P_{\!\lambda^{\!-\!p^{\ell-h}}}\!\!\\
 &\!\! P_{\!\lambda^{\!-\!p^{\ell-h}}}
  \end{pmatrix}\bigg\rangle_{\!h}
=\bigg\langle\big(\mathbf{c}, \mathbf{c}'\big)
\begin{pmatrix}P_{\!\lambda}\!\\ &\! P_{\!\lambda}
  \end{pmatrix}^{nt-1}\!,\,
\big(\mathbf{a}, \mathbf{a}'\big)\bigg\rangle_{\!h}=0.
\end{align*}
We are done.
\end{proof}

\begin{remark} \label{lambda^1+p^h} \rm
 We note that
 $\lambda=\lambda^{-p^{\ell-h}}$ if and only if $\lambda^{1+p^{h}}=1$, since
$$
\lambda=\lambda^{-p^{\ell-h}} \;\iff\;
\lambda^{p^h}=(\lambda^{-p^{\ell-h}})^{p^h}=\lambda^{-p^{\ell}}=\lambda^{-1}
\;\iff\; \lambda^{1+p^{h}}=1.
$$
If $\lambda=\lambda^{-p^{\ell-h}}$, then
for any $2$-quasi $\lambda$-constacyclic code $C$,
its Galois $p^h$-dual code $C^{\bot h}$ is
still a $2$-quasi $\lambda$-constacyclic code.
Otherwise (i.e., $\lambda\ne\lambda^{-p^{\ell-h}})$,
Lemma \ref{lambda_1 lambda_2} implies that,
for many $2$-quasi $\lambda$-constacyclic codes,
their Galois $p^h$-dual codes are no longer $2$-quasi $\lambda$-constacyclic codes.
\end{remark}

%We introduce a kind of $2$-quasi constacyclic codes.
\begin{remark} \label{rk C_a} \rm
For  $a(X)=\sum_{i=0}^{n-1}a_iX^i\in {\RL}$,
by $C_{a}$ we denote the ideal
of ${\RL}$ generated by  $a(X)$.
Similarly, for $(a(X),a'(X))\in \RL^2$,
by $C_{a,a'}$ we denote the ${\RL}$-submodule of $\RL^2$
generated by $\big(a(X),\,a'(X)\big)$.
Note that %in the semisimple case
 any ideal of $\RL$ is generated by one element
 (cf. Remark~\ref{rk R semisimple}(1) for semisimple case,
and cf. \cite[Lemma 4.3]{F21} for general case).
However, some $\RL$-submodules of $\RL^2$
can not be generated by one element. For example, as an $\RL$-submodule
$\RL^2$ can not be generated by one element
(because any $\RL$-module generated by one element
is a quotient of the regular module).
\end{remark}

\begin{definition} \label{def consta-circulant} \rm
For any $a(X)=a_0+a_1X+\cdots+a_{n-1}X^{n-1}\in{F[X]}$
with $\deg(a(X))<n$,
we have an $n\times n$ matrix
\begin{align} \label{lambda-circulant matrix}
a(P_{\!\lambda})=a_0E+a_1P_{\!\lambda}+\cdots+a_{n-1}P_{\!\lambda}^{n-1}
=\begin{pmatrix}
 a_0 & a_1 & \cdots & a_{n-1} \\
 \lambda a_{n-1} & a_0 &\cdots & a_{n-2} \\
 \cdots & \cdots &\cdots & \cdots \\
 \lambda a_{1} & \lambda a_2 &\cdots & a_{0}
\end{pmatrix},
\end{align}
whose first row is the vector $\mathbf{a}=(a_0,a_1,\dots,a_{n-1})$,
and each next row is obtained
by $\lambda$-constacyclically permuting the present row (cf. Eq.\eqref{eq a dot P}).
We call $a(P_{\!\lambda})$
the {\em $\lambda$-consta circulant matrix}
associated with the polynomial $a(X)$.
\end{definition}

\begin{lemma} \label{lem C_a,a'}
Let $a(X),a'(X)\in\RL$. Then we have:

{\bf(1)}~ $C_{a}$ is linearly generated by the rows of the $n\times n$
matrix $a(P_{\!\lambda})$.

{\bf(2)}~ $C_{a,a'}$ is linearly generated by the rows of the $n\times 2n$
matrix $\big(a(P_{\!\lambda}), a'(P_{\!\lambda})\big)$.
\end{lemma}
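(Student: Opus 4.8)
The plan is to unwind the definitions: $C_a$ is by definition the ideal of $\RL$ generated by $a(X)$, so as an $F$-vector space it is spanned by $\{X^j a(X) : j\ge 0\}$ (reduced modulo $X^n-\lambda$), and likewise $C_{a,a'}$ is spanned by $\{X^j(a(X),a'(X)) : j\ge 0\}$. The key observation is the identification set up in Remark~\ref{rk R ident. Fn}: under $\iota$ the element $X^j a(X)$ corresponds to $\mathbf{a}\cdot P_\lambda^{\,j}$, i.e.\ to $\mathbf{a}$ cyclically permuted $j$ times. Since $P_\lambda^n=\lambda E_n$, all powers $P_\lambda^{\,j}$ for $j\ge n$ give nothing new as scalar multiples of earlier rows, so it suffices to take $j=0,1,\dots,n-1$. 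But those $n$ vectors $\mathbf{a}, \mathbf{a}\cdot P_\lambda, \dots, \mathbf{a}\cdot P_\lambda^{\,n-1}$ are exactly the rows of the matrix $a(P_\lambda)$ by Definition~\ref{def consta-circulant} (its first row is $\mathbf{a}$ and each subsequent row is the $\lambda$-constacyclic shift of the previous one). This gives part~(1).

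For part~(2), I would run the same argument in $\RL^2$. The submodule $C_{a,a'}$ is the $\RL$-span of $(a(X),a'(X))$, hence the $F$-span of $\{X^j(a(X),a'(X)) : 0\le j< n\}$ (again using $X^n$ acts as multiplication by $\lambda$ coordinatewise, so higher powers are redundant). Under the isomorphism $\iota^{(2)}$ of Eq.~\eqref{eq RR to FF}, using Eq.~\eqref{eq RR cong FF}, the vector $X^j(a(X),a'(X))$ corresponds to $(\mathbf{a},\mathbf{a}')\begin{pmatrix}P_\lambda\!\\ &\!P_\lambda\end{pmatrix}^{\!j}$, which is the pair $(\mathbf{a}\cdot P_\lambda^{\,j},\,\mathbf{a}'\cdot P_\lambda^{\,j})$. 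The $j$-th such pair is precisely the $j$-th row of the block matrix $\big(a(P_\lambda),\,a'(P_\lambda)\big)$, since the $j$-th rows of $a(P_\lambda)$ and of $a'(P_\lambda)$ are $\mathbf{a}\cdot P_\lambda^{\,j}$ and $\mathbf{a}'\cdot P_\lambda^{\,j}$ respectively. Hence $C_{a,a'}$ is linearly generated by the $n$ rows of $\big(a(P_\lambda),\,a'(P_\lambda)\big)$.

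There is no real obstacle here; the only point that needs a line of care is the reduction from ``all $j\ge 0$'' to ``$0\le j<n$''. This is handled by the relation $P_\lambda^{\,n}=\lambda E_n$ (equivalently $X^n=\lambda$ in $\RL$), which shows that $X^{n+k}a(X)=\lambda X^k a(X)$, so the span over $j\ge 0$ coincides with the span over $0\le j<n$. I would state that identity explicitly and then cite Definition~\ref{def consta-circulant} to recognize the resulting $n$ vectors as the rows of the consta-circulant matrix. The whole proof is a couple of lines once the identifications from Remark~\ref{rk R ident. Fn} are invoked.

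\begin{proof}
By definition, $C_a$ is the ideal of $\RL$ generated by $a(X)$, hence
$C_a=\sum_{j\ge 0}F\cdot X^j a(X)$ as an $F$-vector space.
Since $X^n=\lambda$ in $\RL$, we have $X^{n+k}a(X)=\lambda\, X^k a(X)$ for all $k\ge 0$,
so $C_a=\sum_{j=0}^{n-1}F\cdot X^j a(X)$.
By Eq.\eqref{eq R cong F}, under the identification $\iota$ the element $X^j a(X)$
corresponds to $\mathbf{a}\cdot P_\lambda^{\,j}$, which is the $(j{+}1)$-th row
of the matrix $a(P_\lambda)$ by Definition~\ref{def consta-circulant}.
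Thus $C_a$ is linearly generated by the rows of $a(P_\lambda)$, proving {\bf(1)}.

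For {\bf(2)}, $C_{a,a'}$ is the $\RL$-submodule of $\RL^2$ generated by
$\big(a(X),a'(X)\big)$, hence $C_{a,a'}=\sum_{j\ge 0}F\cdot X^j\big(a(X),a'(X)\big)$;
and as above, using $X^n=\lambda$, this equals $\sum_{j=0}^{n-1}F\cdot X^j\big(a(X),a'(X)\big)$.
By Eq.\eqref{eq RR cong FF}, under $\iota^{(2)}$ the element $X^j\big(a(X),a'(X)\big)$
corresponds to $\big(\mathbf{a},\mathbf{a}'\big)\begin{pmatrix}P_\lambda\!\\ &\! P_\lambda\end{pmatrix}^{\!j}
=\big(\mathbf{a}\cdot P_\lambda^{\,j},\,\mathbf{a}'\cdot P_\lambda^{\,j}\big)$,
which is the $(j{+}1)$-th row of the block matrix $\big(a(P_\lambda),a'(P_\lambda)\big)$.
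Therefore $C_{a,a'}$ is linearly generated by the rows of $\big(a(P_\lambda),a'(P_\lambda)\big)$.
\end{proof}
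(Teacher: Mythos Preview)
Your proof is correct and follows essentially the same approach as the paper: both arguments reduce $C_a$ (resp.\ $C_{a,a'}$) to the $F$-span of $X^j a(X)$ (resp.\ $X^j(a(X),a'(X))$) for $0\le j<n$, and then invoke Eq.\eqref{eq R cong F} (resp.\ Eq.\eqref{eq RR cong FF}) together with Definition~\ref{def consta-circulant} to identify these with the rows of $a(P_\lambda)$ (resp.\ $\big(a(P_\lambda),a'(P_\lambda)\big)$). The only cosmetic difference is that the paper writes a general element of $C_a$ as $f(X)a(X)$ with $\deg f<n$ to get the spanning set directly, while you start from all $j\ge 0$ and reduce via $X^n=\lambda$; this is the same content.
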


\begin{proof}
(1). Let $C_a$ be the ideal of $\RL$ generated by $a(X)$, i.e.,
$$C_a=\Big\{f(X)a(X)~\Big|~f(X)=\sum_{i=0}^{n-1}f_iX^i\in\RL, f_i\in F\Big\}.$$
Obviously,
$f(X)a(X)=f_0\cdot a(X)+f_1\cdot Xa(X)+\dots+f_{n-1}\cdot X^{n-1}a(X)$.
So~$C_a$ is the subspace of $\RL$ linearly generated by
$a(X), Xa(X), \dots, X^{n-1}a(X)$.
Eq.\eqref{eq R cong F} and Eq.\eqref{lambda-circulant matrix} imply that
 $a(X)$ is identified with the row vector ${\bf a}$
which is just the first row of the matrix $a(P_\lambda)$;
for $i=1, \dots, n-1$,
 $X^{i}a(X)$ is identified with the row vector ${\bf a}\cdot P_\lambda^{\,i}$,
which is just the $i$'th row of the matrix $a(P_\lambda)$.
Therefore, $C_a$ is linearly generated by the rows of the matrix $a(P_{\lambda})$.

Obviously, (2) is proved in a similar way.
\end{proof}

\begin{corollary}\label{Cor.C1,a}
For $a(X)\in{\RL}$, the $2$-quasi $\lambda$-constacyclic code
$C_{1,a}$ has a generating matrix $\big(E_n,\,a(P_{\!\lambda})\big)$.
\end{corollary}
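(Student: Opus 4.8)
The plan is to derive the generating matrix of $C_{1,a}$ directly from Lemma~\ref{lem C_a,a'}(2) by specializing $a'(X)$ to $a(X)$ and $a(X)$ to the constant polynomial $1$. First I would observe that $C_{1,a}$ is, by the notation of Remark~\ref{rk C_a}, the $\RL$-submodule of $\RL^2$ generated by the single pair $\big(1,\,a(X)\big)$, so it is exactly $C_{b,b'}$ with $b(X)=1$ and $b'(X)=a(X)$. Lemma~\ref{lem C_a,a'}(2) then says that $C_{1,a}$ is linearly generated by the rows of the $n\times 2n$ matrix $\big(b(P_{\!\lambda}),\,b'(P_{\!\lambda})\big)=\big(1\cdot E_n + 0\cdot P_\lambda + \cdots,\ a(P_{\!\lambda})\big)$.

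The one small computation is to identify $b(P_\lambda)$ for $b(X)=1$. By Definition~\ref{def consta-circulant}, the $\lambda$-consta circulant matrix associated with the constant polynomial $1$ is $1\cdot E + 0\cdot P_\lambda + \cdots + 0\cdot P_\lambda^{n-1}=E_n$, the identity matrix of degree $n$. Hence the generating matrix from Lemma~\ref{lem C_a,a'}(2) is precisely $\big(E_n,\,a(P_{\!\lambda})\big)$. Since the $n$ rows of this matrix linearly span $C_{1,a}$ and the left block $E_n$ has rank $n$, these rows are automatically linearly independent, so $\big(E_n,\,a(P_{\!\lambda})\big)$ is indeed a generating matrix (i.e., a matrix whose rows form a basis of $C_{1,a}$); in particular $\dim_F C_{1,a}=n$.

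There is essentially no obstacle here: the corollary is an immediate specialization of the preceding lemma, and the only thing to check is the triviality that the $\lambda$-consta circulant matrix of the constant $1$ is $E_n$. If one wants to be fully explicit, one can alternatively argue from scratch: $C_{1,a}=\{f(X)\cdot(1,a(X))\mid f(X)\in\RL\}=\{(f(X),\,f(X)a(X))\mid f(X)\in\RL\}$, and writing $f(X)=\sum_{i=0}^{n-1}f_iX^i$ shows this space is spanned by the vectors $\iota^{(2)}\big(X^i,X^ia(X)\big)=\big(\mathbf{e}_i,\,\mathbf{e}_i\cdot a(P_\lambda)\big)$ for $i=0,\dots,n-1$, where $\mathbf{e}_i$ is the $i$-th standard basis row vector (using Eq.~\eqref{eq R cong F} and that $1$ is identified with $\mathbf{e}_0$); stacking these rows gives exactly $\big(E_n,\,a(P_\lambda)\big)$.

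\begin{proof}
By Remark~\ref{rk C_a}, $C_{1,a}$ is the $\RL$-submodule of $\RL^2$ generated by
$\big(1,\,a(X)\big)$, i.e. it is $C_{b,b'}$ with $b(X)=1$ and $b'(X)=a(X)$.
By Definition~\ref{def consta-circulant}, the $\lambda$-consta circulant matrix
associated with the constant polynomial $b(X)=1$ is
$b(P_{\!\lambda})=1\cdot E+0\cdot P_{\!\lambda}+\cdots+0\cdot P_{\!\lambda}^{n-1}=E_n$.
Hence, by Lemma~\ref{lem C_a,a'}(2), $C_{1,a}$ is linearly generated by the rows of
the $n\times 2n$ matrix $\big(b(P_{\!\lambda}),\,b'(P_{\!\lambda})\big)=\big(E_n,\,a(P_{\!\lambda})\big)$.
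Since the left block $E_n$ of this matrix is invertible, its $n$ rows are linearly
independent; therefore $\big(E_n,\,a(P_{\!\lambda})\big)$ is a generating matrix of
$C_{1,a}$.
\end{proof}
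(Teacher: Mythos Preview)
Your proof is correct and follows essentially the same approach as the paper: apply Lemma~\ref{lem C_a,a'}(2) to the generator $(1,a(X))$ to see that the rows of $\big(E_n,\,a(P_{\!\lambda})\big)$ span $C_{1,a}$, then observe that these rows are linearly independent. Your version is simply a bit more explicit about the identification $1(P_{\!\lambda})=E_n$ and the reason for linear independence.
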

\begin{proof}
By Lemma~\ref{lem C_a,a'}, the rows of the matrix
$\big(E_n,\,a(P_{\!\lambda})\big)$ linearly generate the  code $C_{1,a}$.
And the rows of the matrix $\big(E_n,\,a(P_{\!\lambda})\big)$
are linearly independent.
\end{proof}

In the rest of this section, we turn to the semisimple case, i.e., $\gcd(n,q)=1$.
Extending \cite[Theorem 3.2]{FZ23} which characterized the algebraic structure
of $2$-quasi-cyclic codes in the semisimple case, %in that case
we characterize the algebraic structure of $2$-quasi constacyclic
codes as follows.

\begin{theorem}\label{Goursat C}
Assume that $\gcd(n,q)=1$.
If $C$ is an $\RL$-submodule of $\RL^2$,
then there are ideals $C_1,C_2,C_{12}$ of ${\RL}$
satisfying that $C_1\cap C_{12}=C_2\cap C_{12}=0$
and an element $g\in C_{12}^\times$ such that
\begin{equation}\label{structure C}
 C = (C_1\times C_2)\oplus\widehat C_{12},~~ \mbox{where}~~
\widehat C_{12}=
 \big\{\big(c_{12},c_{12}g\big)\,\big|\,c_{12}\in C_{12}\big\}\cong C_{12}.
\end{equation}
Conversely, if there are ideals $C_1, C_2, C_{12}$ of ${\RL}$
with $C_1\cap C_{12}=C_2\cap C_{12}=0$ and an element
$g\in C_{12}^\times$, then $C$ in Eq.\eqref{structure C}
is an $\RL$-submodule of $\RL^{2}$.
\end{theorem}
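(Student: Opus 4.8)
The plan is to imitate the Goursat-lemma analysis for $2$-quasi-cyclic codes in \cite[Theorem 3.2]{FZ23}, adapting each step to the constacyclic setting via the ring $\RL$. Since $\gcd(n,q)=1$, the ring $\RL$ is semisimple (Remark~\ref{rk R semisimple}), so every ideal is a direct summand generated by an idempotent, and every $\RL$-module isomorphism between submodules of $\RL$ is right multiplication by a unit of the common submodule (Remark~\ref{rk R semisimple}(2)). First I would set up the two projections $\rho_1,\rho_2\colon\RL^2\to\RL$ of Remark~\ref{rk projections} and restrict them to $C$. Put $C_1'=\rho_1(C)$ and $C_2'=\rho_2(C)$; these are ideals of $\RL$. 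Let $C_1=\{c\in\RL\mid (c,0)\in C\}={\rm Ker}(\rho_2|_C)$ and $C_2=\{c'\in\RL\mid (0,c')\in C\}={\rm Ker}(\rho_1|_C)$, which are ideals contained in $C_1'$, $C_2'$ respectively. By semisimplicity, choose complementary ideals: $C_1'=C_1\oplus C_{12}$ for some ideal $C_{12}$, so that $C_1\cap C_{12}=0$ automatically.

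The heart of the argument is Goursat's lemma applied to the subdirect situation. The quotient $C_1'/C_1\cong C_2'/C_2$ via the isomorphism induced by sending $\rho_1(x)\bmod C_1$ to $\rho_2(x)\bmod C_2$ for $x\in C$; this is well-defined and bijective precisely because $C_1={\rm Ker}(\rho_2|_C)$ and $C_2={\rm Ker}(\rho_1|_C)$. Identifying $C_1'/C_1$ with the summand $C_{12}$ and likewise $C_2'/C_2$ with a complementary ideal $C_{12}'$ of $C_2$ inside $C_2'$, we obtain an $\RL$-module isomorphism $\varphi\colon C_{12}\to C_{12}'$. By Remark~\ref{rk R semisimple}(2) this forces $C_{12}'=C_{12}$ and $\varphi(c_{12})=c_{12}g$ for a unique $g\in C_{12}^\times$. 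One then checks that $C=(C_1\times C_2)\oplus\widehat C_{12}$ with $\widehat C_{12}=\{(c_{12},c_{12}g)\mid c_{12}\in C_{12}\}$: the inclusion $\supseteq$ is immediate since $(C_1,0)\subseteq C$, $(0,C_2)\subseteq C$, and each $(c_{12},c_{12}g)\in C$ by construction of $g$; the reverse inclusion follows by a dimension count or, more directly, by writing an arbitrary $(a,a')\in C$, subtracting the unique element of $\widehat C_{12}$ with first coordinate $a\bmod C_1$, and noting the difference lies in $C_1\times C_2$. The sum is direct because $(C_1\times C_2)\cap\widehat C_{12}$ forces $c_{12}\in C_1\cap C_{12}=0$. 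That $C_2\cap C_{12}=0$: here $C_{12}$ is a summand of $C_1'$, not a priori related to $C_2$, so this needs the observation that $g\in C_{12}^\times$ has identity $e_{C_{12}}$, whence $c_{12}g\in C_{12}$ for all $c_{12}\in C_{12}$, and one arranges (by enlarging $C_2$ if necessary, i.e.\ choosing the complement $C_{12}$ of $C_1$ inside $C_1'$ to also be disjoint from the copy of $C_2$-part) — more cleanly, replace $C_{12}$ at the outset by an ideal with $C_{12}\cap(C_1+\text{(image of }C_2))=0$; semisimplicity makes all such choices available.

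For the converse, given ideals $C_1,C_2,C_{12}$ with $C_1\cap C_{12}=C_2\cap C_{12}=0$ and $g\in C_{12}^\times$, one verifies directly that $C=(C_1\times C_2)\oplus\widehat C_{12}$ is closed under addition and under the $\RL$-action: for $r\in\RL$, $r\cdot(c_{12},c_{12}g)=(rc_{12},rc_{12}g)\in\widehat C_{12}$ since $rc_{12}\in C_{12}$, and $r\cdot(C_1\times C_2)\subseteq C_1\times C_2$ as $C_1,C_2$ are ideals; hence $C$ is an $\RL$-submodule. The directness of the sum is again the condition $C_1\cap C_{12}=0$. I expect the main obstacle to be bookkeeping the two "disjointness" conditions $C_1\cap C_{12}=0$ and $C_2\cap C_{12}=0$ simultaneously while keeping $C_{12}$ a single ideal serving as the Goursat "diagonal" block — in the constacyclic case this is no harder than in the cyclic case of \cite{FZ23}, since the only ring-theoretic facts used are the semisimplicity of $\RL$ and the structure of its ideals and module automorphisms, all recorded in Remark~\ref{rk R semisimple}; the constacyclic permutation $P_\lambda$ enters only through the identification of $\RL^2$-submodules with $2$-quasi $\lambda$-constacyclic codes (Remark~\ref{rk R ident. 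Fn}(2)), which is already in place.
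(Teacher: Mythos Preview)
Your approach is essentially the paper's: Goursat's lemma on the two projections, semisimple splitting of the quotients, and Remark~\ref{rk R semisimple}(2) to identify the diagonal block with multiplication by a unit. The paper carries this out with the same notation $\tilde C_i=\rho_i(C)$, $C_i$, and complements $C_{12},C_{12}'$, then uses $C_{12}'=C_{12}$ and $g\in C_{12}^\times$ exactly as you do.

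The one place where you stumble is the verification of $C_2\cap C_{12}=0$. You already have everything needed: you chose $C_{12}'$ as a complement of $C_2$ inside $C_2'$, so $C_2\cap C_{12}'=0$; and you then invoked Remark~\ref{rk R semisimple}(2) to conclude $C_{12}'=C_{12}$. Substituting gives $C_2\cap C_{12}=0$ immediately---no further choice or enlargement is required. Your suggested ``fix'' (replacing $C_{12}$ by an ideal disjoint from $C_1+\text{(image of }C_2)$, or enlarging $C_2$) is both unnecessary and dangerous: altering $C_{12}$ after the fact would decouple it from the Goursat isomorphism and break the description of $C$. Drop that paragraph and simply observe that $C_2\cap C_{12}=C_2\cap C_{12}'=0$ by construction; then your proof matches the paper's.
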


\begin{proof}
The ``conversely'' part is obviously true because
both $C_1\times C_2$ and $\widehat C_{12}$
are $\RL$-submodules of $\RL^2$, and
$(C_1\times C_2)\cap \widehat C_{12}=0$.

Assume that $C$ is an $\RL$-submodule of $\RL^2$,
and  $\rho_1$, $\rho_2$
are defined in Remark~\ref{rk projections}.
We consult the module version of {Goursat Lemma}
(see\cite[Remark~3.1]{FZ23}). Take
\begin{align}\label{}
\begin{array}{l}
\tilde{C_1}=\rho_1(C)=\big\{a_1\in{{\RL}}~|~(a_1, a_2)\in{C} ~
 \mbox{for some} ~a_2\in{{\RL}}\big\},\\[2pt]
{C_1}=\rho_1(C\cap({\RL} \times 0))=\big\{a_1\in{{\RL}}~|~(a_1, 0)\in{C}\big\},\\[2pt]
\tilde{C_2}=\rho_2(C)=\big\{a_2\in{{\RL}}~|~(a_1, a_2)\in{C} ~\mbox{for some} ~a_1\in{{\RL}}\big\},\\[2pt]
{C_2}=\rho_2(C\cap(0 \times {\RL}))=\big\{a_2\in{{\RL}}~|~(0, a_2)\in{C}\big\}.
\end{array}
\end{align}
Then $C_i\subseteq \tilde{C_i}$ are ideals of ${\RL}$, $i= 1, 2$.
For any $c_1+C_1\in{\tilde{C_1}/C_1}$, there is a unique $c_2 +C_2 \in{\tilde{C_2}/C_2}$
such that $(c_1, c_2)\in{C}$, hence we have the map
\begin{equation}\label{varphi}
 \vph: \tilde C_1/C_1\to \tilde C_2/C_2, ~
 c_1+C_1\mapsto c_2+C_2,
\end{equation}
which is an ${\RL}$-isomorphism,  and
\begin{equation}\label{structure}
 C=\big\{ (c_1,c_2)\;\big|\; c_i\in\tilde C_i~{\rm for} ~i=1,2; ~\vph(c_1+C_1)=c_2+C_2\big\}.
\end{equation}
Since ${\RL}$ is semisimple and
${C}_1\subseteq \tilde C_1$ are ideals of ${\RL}$,
there is an ideal $C_{12}$ of $\RL$ such that
$\tilde C_1=C_1\oplus C_{12}$; see Remark~\ref{rk R semisimple}(1).
Similarly, we have an ideal $C_{12}'$ of ${\RL}$ such that
$\tilde C_2=C_2\oplus C_{12}'$.
Then $\tilde C_1/C_{1}\cong C_{12}$ and $\tilde C_2/C_{2}\cong C'_{12}$.
The ${\RL}$-isomorphism $\vph$ in Eq.\eqref{varphi} induces an
${\RL}$-isomorphism $\vph': C_{12}\to C_{12}'$ such that
$\vph\big(c+C_1\big)=\vph'(c)+C_2$ for all $c\in C_{12}$.
Thus the image $C_{12}'=C_{12}$, and there is a $g\in C_{12}^\times$
such that $\vph'(c)=cg$ for all $c\in C_{12}$; cf. Remark~\ref{rk R semisimple}(2).
In conclusion,
we have an ideal $C_{12}$ of $\RL$ such that
$ %\begin{equation*}
 \tilde C_1 =C_1\oplus C_{12},~
 \tilde C_2 =C_2\oplus C_{12};
$ %\end{equation*}
and a $g\in C_{12}^\times$ such that
\begin{equation*} %\label{more structure C}
 C=\big\{\,(c_1+c_{12},\;c_2+c_{12}g)\;\big|\;
 c_1\in C_1, \, c_2 \in C_2,\,  c_{12}\in C_{12}\,\big\}.
%\!=\!(C_1\times C_2)\oplus \widehat C_{12}.
\end{equation*}
Obviously, $(c_1\!+\!c_{12},\;c_2\!+\!c_{12}g)=(c_1,c_2)+(c_{12}, c_{12}g)$.
Thus Eq.\eqref{structure C} holds.
\end{proof}

\begin{lemma}\label{lem Ca cap Cb}
Assume that $\gcd(n,q)=1$. Then for $a, b\in{\RL}$,
$C_a\cap C_b=0$ if and only if $ab=0$.
\end{lemma}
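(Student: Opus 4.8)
The plan is to work in the semisimple quotient ring $\RL$, where every ideal is generated by a single idempotent (Remark~\ref{rk R semisimple}(1)) and to translate the condition $C_a\cap C_b=0$ into a statement about those idempotents. First I would recall that $C_a=\RL e_a$ and $C_b=\RL e_b$ for uniquely determined idempotents $e_a,e_b\in\RL$; here $e_a$ is the identity element of the ideal $C_a$, and similarly for $e_b$. The key elementary fact I would use is that for idempotents $e,f$ in a commutative semisimple ring, $\RL e\cap\RL f=\RL(ef)$, since $ef$ is again an idempotent and $\RL(ef)\subseteq \RL e\cap \RL f$ while conversely any element $x$ in the intersection satisfies $x=xe=xf$, hence $x=xef\in\RL(ef)$. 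Thus $C_a\cap C_b=0$ is equivalent to $e_ae_b=0$.

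Next I would connect $e_ae_b=0$ back to $ab=0$. The point is that $a\in C_a=\RL e_a$, so $a=ae_a$, and likewise $b=be_b$; hence $ab=ab\,e_ae_b$, which shows $e_ae_b=0\implies ab=0$. For the converse I would use that $a$ generates $C_a$, so $e_a\in\RL a$, say $e_a=u a$, and similarly $e_b=vb$ for some $u,v\in\RL$; then $ab=0$ forces $e_ae_b=uv\,ab=0$. Combining the two directions with the previous paragraph gives $C_a\cap C_b=0\iff e_ae_b=0\iff ab=0$, which is the claim.

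Alternatively — and perhaps more cleanly — I could avoid idempotents entirely and argue directly: $C_a\cap C_b\supseteq \RL\,ab$ always (since $ab\in C_a$ and $ab\in C_b$), so $ab=0$ is necessary. For sufficiency, decompose $\RL=\bigoplus_i \RL e_i$ into its simple components (the extension fields $F[X]/\langle\varphi_i\rangle$ from Eq.~\eqref{eq R=...}, adapted from $X^n-1$ to $X^n-\lambda$); in each component $\RL e_i$, which is a field, the ideal $C_a$ projects to either $0$ or all of $\RL e_i$ according to whether the $i$-th coordinate of $a$ vanishes, and similarly for $b$. Then $C_a\cap C_b=0$ means that for every $i$ at least one of $a,b$ vanishes in $\RL e_i$, which is exactly the condition that the product $ab$ vanishes in every component, i.e. $ab=0$.

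I do not anticipate a serious obstacle here; the statement is essentially a structural triviality once the semisimple decomposition of $\RL$ is in hand. The only point requiring a little care is that $\RL=F[X]/\langle X^n-\lambda\rangle$ with $\gcd(n,q)=1$ is indeed semisimple (so that $X^n-\lambda$ is separable and splits into distinct irreducible factors over $F$); this is cited in Remark~\ref{rk R semisimple} and \cite{CDFL}, and the component-wise description of ideals follows as for $\R1$. Everything else is routine commutative algebra, so I would expect the written proof to be just a few lines along the lines of the direct argument above.
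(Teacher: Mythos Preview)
Your proposal is correct, and your idempotent-based argument is essentially the paper's own proof: the paper likewise invokes the idempotent generator $e_b$ of $C_b$ from Remark~\ref{rk R semisimple}(1), writes $e_b=fb$, and observes that any $c=ra=sb\in C_a\cap C_b$ satisfies $c=ce_b=rafb=rf(ab)=0$ (a hair more direct than your detour through $C_a\cap C_b=\RL(e_ae_b)$, but the same idea). Your component-wise alternative via the simple decomposition is also fine but is not the route the paper takes.
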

\begin{proof}
If $C_a\cap C_b=0$ then $ab\in C_a\cap C_b =0$.
Conversely, assume that $ab=0$.
Note that $C_b=\RL\, e_b$ for an idempotent~$e_b$
which is the identity of the ring~$C_b$
(cf. Remark~\ref{rk R semisimple}(1)).
There is an element $f\in\RL$ such that $e_b=fb$.
For any element $c \in{C_a\cap C_b}$, we can write $c=ra=sb$ with $r,\,s\in{\RL}$.
Then $c=ra=rae_b=rafb=0$. Thus  $C_a\cap C_b=0$.
\end{proof}

\begin{corollary}\label{lem structure C again}
Keep the notation in Theorem~\ref{Goursat C}
(in particular, ${\gcd(n,q)=1}$).
Any $\RL$-submodule $C$ of $\RL^2$ can be
 written as
 \begin{align} \label{eq structure C again}
C= (C_a\times C_{a'})\oplus C_{b,bg}
=C_{a,0}\oplus C_{0,a'}\oplus C_{b,bg},
\end{align}
 where $a,a',b\in\RL$ and $g\in C_{b}^\times$ satisfy that
$ab=a'b=0$.
\end{corollary}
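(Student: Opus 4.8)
The plan is to obtain the statement by specializing Theorem~\ref{Goursat C} to principal generators and then re-reading its hypotheses through Lemma~\ref{lem Ca cap Cb}. First I would apply Theorem~\ref{Goursat C}: since $C$ is an $\RL$-submodule of $\RL^2$ and $\gcd(n,q)=1$, there are ideals $C_1,C_2,C_{12}$ of $\RL$ with $C_1\cap C_{12}=C_2\cap C_{12}=0$ and an element $g\in C_{12}^\times$ so that $C=(C_1\times C_2)\oplus\widehat C_{12}$, where $\widehat C_{12}=\{(c_{12},c_{12}g)\mid c_{12}\in C_{12}\}$ as in Eq.\eqref{structure C}.

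Next I would use that $\RL$ is semisimple, so every ideal is principal (Remark~\ref{rk R semisimple}(1), cf.\ also Remark~\ref{rk C_a}): choose $a,a',b\in\RL$ with $C_1=C_a$, $C_2=C_{a'}$, $C_{12}=C_b$. Because $\RL$ is commutative we have $C_b=\RL b$, hence $\widehat C_{12}=\{(rb,rbg)\mid r\in\RL\}=\RL\cdot(b,bg)=C_{b,bg}$; and $C_a\times C_{a'}=(C_a\times 0)\oplus(0\times C_{a'})=C_{a,0}\oplus C_{0,a'}$ directly from the definition of $C_{\bullet,\bullet}$ in Remark~\ref{rk C_a}. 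Combining these gives both displayed forms in Eq.\eqref{eq structure C again}.

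Finally I would translate the conditions of Theorem~\ref{Goursat C}: the unit $g$ lies in $C_{12}^\times=C_b^\times$ by construction, and $C_1\cap C_{12}=0$, $C_2\cap C_{12}=0$ become $C_a\cap C_b=0$ and $C_{a'}\cap C_b=0$, which by Lemma~\ref{lem Ca cap Cb} are equivalent to $ab=0$ and $a'b=0$ respectively. I do not expect a genuine obstacle here; the only points deserving a little care are the bookkeeping identifications $\widehat C_{12}=C_{b,bg}$ and $C_a\times C_{a'}=C_{a,0}\oplus C_{0,a'}$ (immediate from commutativity of $\RL$ and the definitions), and keeping in mind that $C_b^\times$ refers to units of the ring $C_b=\RL e_b$ with its own identity $e_b$, not of $\RL$. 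The substance of the corollary is already carried by Theorem~\ref{Goursat C} together with Lemma~\ref{lem Ca cap Cb}.
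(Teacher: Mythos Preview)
Your proposal is correct and follows essentially the same approach as the paper: apply Theorem~\ref{Goursat C}, write the ideals $C_1,C_2,C_{12}$ as principal ideals $C_a,C_{a'},C_b$, and invoke Lemma~\ref{lem Ca cap Cb} to convert the intersection conditions into $ab=a'b=0$. The paper's proof is simply a terse version of what you wrote.
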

\begin{proof}
 Take
$C_1=C_a$, $C_2=C_{a'}$ and
$C_{12}=C_b$, hence $\widehat C_{12}=C_{b,bg}$.
By the above Lemma, we get Eq.(\ref{eq structure C again}) immediately.
\end{proof}

\section{Galois self-dual $2$-quasi constacyclic codes}
\label{Section Galois duality}

In this section
we investigate the Galois $p^h$-self-dual
$2$-quasi $\lambda$-constacyclic codes over $F$.
By Lemma~\ref{G C^bot lambda^-1} and Remark~\ref{lambda^1+p^h},
any Galois $p^h$-self-dual $2$-quasi $\lambda$-constacyclic code
(i.e., $C=C^{\bot h}$) is also $\lambda^{-p^{\ell-h}}$-constacyclic.
We study them in two cases:
$\lambda \neq \lambda^{-p^{\ell-h}}$ (i.e., $\lambda^{1+p^h} \neq 1$),
or $\lambda = \lambda^{-p^{\ell-h}}$ (i.e., $\lambda^{1+p^h} = 1$).

Still, this section discusses $\RL$ for any $q$ and~$n$
except for Theorem~\ref{th lambda eq iff} which considers
the semisimple case (i.e., $\gcd(n,q)=1$).

\subsection{The case that $\lambda^{1+p^h}\neq 1$}
%$\lambda\neq\lambda^{-p^{\ell-h}}$}
Our concern in this subsection is
the Galois $p^h$-self-dual 2-quasi $\lambda-$constacyclic codes over $F$
under the assumption that $\lambda^{1+p^{h}}\ne 1$.
%($\lambda\neq\lambda^{-p^{\ell-h}}$ equivalently).

\begin{theorem}\label{th lambda neq iff}
Assume that $\lambda^{1+p^{h}}\ne 1$
 (i.e., $\lambda\ne\lambda^{-p^{\ell-h}}$).
The following three are equivalent to each other:

{\bf(1)} $C$ is a Galois $p^h$-self-dual $2$-quasi $\lambda$-constacyclic code
over $F$ of length~$2n$. %, i.e., $C=C^{\bot h}$.

{\bf(2)}
$C=C_{1,\alpha}$ is a ${\RL}$-submodule of $\RL^2$
generated by $(1,\alpha)$, where
$\alpha\in F^{\times}$ with $\alpha^{1+p^h}=-1$, and
$1,\alpha$ are viewed as the constant polynomials of ${\RL}$.

{\bf(3)}
 $C$ is an $F$-linear code of length $2n$ with a generating matrix
$\big( E_n,\alpha E_n\big)$, where $\alpha\in F^\times$
with $\alpha^{1+p^h}=-1$.
\end{theorem}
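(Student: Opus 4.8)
\medskip
\noindent\textbf{Proof proposal.}
The plan is to dispatch $(2)\Leftrightarrow(3)$ and $(2)\Rightarrow(1)$ quickly, then concentrate on $(1)\Rightarrow(2)$. For $(2)\Leftrightarrow(3)$ I would just quote Corollary~\ref{Cor.C1,a}: since $\alpha$ is a constant, $\alpha(P_\lambda)=\alpha E_n$, so $C_{1,\alpha}$ has generating matrix $(E_n,\alpha E_n)$, and conversely a code with this generating matrix is $C_{1,\alpha}$. For $(2)\Rightarrow(1)$: $C=C_{1,\alpha}$ is an $\RL$-submodule of $\RL^2$, hence a $2$-quasi $\lambda$-constacyclic code by Remark~\ref{rk R ident. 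Fn}(2); and writing a general element of $C$ as $(f,\alpha f)$ with $f\in\RL$, the definition of the Galois $p^h$-inner product gives $\langle(f,\alpha f),(g,\alpha g)\rangle_h=\langle f,g\rangle_h+\alpha^{1+p^h}\langle f,g\rangle_h=(1+\alpha^{1+p^h})\langle f,g\rangle_h=0$ because $\alpha^{1+p^h}=-1$, so $C\subseteq C^{\bot h}$; since the rows of $(E_n,\alpha E_n)$ are $F$-independent, $\dim_F C=n$, and since the Galois $p^h$-inner product on $F^{2n}$ is non-degenerate, $\dim_F C^{\bot h}=n=\dim_F C$, whence $C=C^{\bot h}$.

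For $(1)\Rightarrow(2)$, let $C$ be Galois $p^h$-self-dual and $2$-quasi $\lambda$-constacyclic, so $\dim_F C=n$. By Lemma~\ref{G C^bot lambda^-1}, $C=C^{\bot h}$ is also $2$-quasi $\lambda^{-p^{\ell-h}}$-constacyclic, and $\lambda\ne\lambda^{-p^{\ell-h}}$ by the hypothesis $\lambda^{1+p^h}\ne1$ (Remark~\ref{lambda^1+p^h}); hence Lemma~\ref{lambda_1 lambda_2} gives that each of $\rho_1(C),\rho_2(C)$ is $\{\mathbf0\}$ or $F^n$. I would rule out $\{\mathbf0\}$: if, say, $\rho_1(C)=\{\mathbf0\}$ then $C\subseteq 0\times F^n$, and the dimension count forces $C=0\times F^n$, which is not contained in its Galois $p^h$-dual (e.g. $\langle(0,1),(0,1)\rangle_h=1\ne0$), contradicting $C=C^{\bot h}$. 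So $\rho_1(C)=\rho_2(C)=F^n$, and since $\dim_F C=n$ both $\rho_1|_C$ and $\rho_2|_C$ are $\RL$-module isomorphisms onto $\RL$. Thus $C$ is the graph of the $\RL$-module automorphism $\psi=\rho_2|_C\circ(\rho_1|_C)^{-1}$ of $\RL$; since every $\RL$-endomorphism of the regular module is multiplication by its value at the identity, putting $u=\psi(1)\in\RL^\times$ yields $C=\{(f,fu)\mid f\in\RL\}=C_{1,u}$.

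It remains to force $u$ to be a field scalar, which I expect to be the crux. Write $\mu=\lambda^{-p^{\ell-h}}\ne\lambda$. In the ring $\RL$, the double $\mu$-constacyclic shift acts by $(a,a')\mapsto(Xa+(\mu-\lambda)a_{n-1},\,Xa'+(\mu-\lambda)a'_{n-1})$, where $a_{n-1}$ is the coefficient of $X^{n-1}$. Applying this to the codewords $(X^k,X^ku)\in C$ for $0\le k\le n-2$: the coefficient of $X^{n-1}$ in $X^k$ is $0$, while in $X^ku$ reduced modulo $X^n-\lambda$ it equals $u_{n-1-k}$, so the shifted word is $(X^{k+1},\,X^{k+1}u+(\mu-\lambda)u_{n-1-k})$, and its membership in $C=\{(f,fu)\}$ forces $u_{n-1-k}=0$; running $k=0,1,\dots,n-2$ kills $u_{n-1},\dots,u_1$, so $u=u_0$ is constant. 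Finally $(1,u_0)\in C=C^{\bot h}$ gives $1+u_0^{1+p^h}=0$, so in particular $u_0\ne0$; setting $\alpha:=u_0\in F^\times$ we get $\alpha^{1+p^h}=-1$ and $C=C_{1,\alpha}$, which is $(2)$. The main obstacle is thus the interplay in this last direction: one has to extract from $\lambda^{1+p^h}\ne1$, via Lemma~\ref{lambda_1 lambda_2} and the elimination of the degenerate projections, that $C$ is a diagonal code $C_{1,u}$, and then see that a diagonal code admits a second, distinct constacyclic shift only when $u\in F$.
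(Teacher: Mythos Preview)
Your proof is correct and follows essentially the same strategy as the paper's. The only cosmetic difference is in how you force $u$ to be a scalar: the paper compares the two generating matrices $(E_n,a(P_\lambda))$ and $(E_n,a(P_{\lambda^{-p^{\ell-h}}}))$ to conclude $a(P_\lambda)=a(P_{\lambda^{-p^{\ell-h}}})$ and hence $a_i=0$ for $i\ge1$, whereas you apply the $\mu$-shift directly to the codewords $(X^k,X^ku)$ --- both amount to the same computation.
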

\begin{proof}
Observe that by Corollary~\ref{Cor.C1,a},
the statements (2) and (3) are equivalent. Suppose that (3) holds.
%$C$ is a linear code of length $2n$ with a generating matrix
%$\big( E_n,\,\alpha E_n\big)$, where $\alpha\in F^\times$ with $\alpha^{1+p^h}=-1$.
Then
$
 \big( E_n,\,\alpha E_n\big)\big( E_n,\,\alpha E_n\big)^{*h}
 =(1+\alpha^{p^h+1})E_n=0,
$
which implies $\langle C,C\rangle_h=0$, cf. Eq.\eqref{inner as matrix product}.
Since the rank of the generating matrix $\big( E_n,\,\alpha E_n\big)$ is $n$,
we have $\dim_{F}C=n$, and so the statement (1) follows.
Therefore, it suffices to show that (1) implies (3).

Assume that (1) holds, i.e., $C=C^{\bot h}$.
By Lemma \ref{G C^bot lambda^-1},
$C$ is both a $2$-quasi $\lambda$-constacyclic code
and a $2$-quasi $\lambda^{\!-\!p^{\ell-h}}$-constacyclic code.
Since $\lambda\ne\lambda^{-p^{\ell-h}}$,
we deduce from Lemma~\ref{lambda_1 lambda_2} that
either $\rho_1(C)=0$ or $\rho_1(C)={\RL}$.
Suppose that $\rho_1(C)=0$.
Since $\dim_{F} C=n$,  by %Theorem~\ref{Goursat C} or
Eq.\eqref{kernel of rho},
we have that $C=0\times{\RL}$,
which is impossible because $0\times{\RL}$
is not Galois $p^{h}$-self-dual.
So, it must be the case that $\rho_1(C)={\RL}$.
By the same argument, we have $\rho_2(C)={\RL}$.

As $\rho_1(C)={\RL}$, we can take $(1,a)\in C$
with $a=a(X){=\sum_{i=0}^{n-1}a_iX^i\in{{\RL}}}$.
Let $C_{1,a}$ be the ${\RL}-$submodule
of $\RL^2$ generated by $(1,a)$.
Obviously, $C_{1,a}\subseteq C$.
By Corollary~\ref{Cor.C1,a},  $C_{1,a}$ has the generating matrix
\begin{align} \label{G lambda a matrix}
\big(E_{n},\;a(P_{\lambda})\big)=
\begin{pmatrix}
1 &&& && a_0 & a_1 & \cdots & a_{n-1} \\
 & 1 && && \lambda a_{n-1} & a_0 &\cdots & a_{n-2} \\
&  &\ddots & && \cdots & \cdots &\cdots & \cdots \\
&  && 1 && \lambda a_{1} & \lambda a_2 &\cdots & a_{0}
\end{pmatrix}_{n\times 2n}.
\end{align}
The rank of the matrix is $n$, hence $\dim_{F} C_{1,a}=n$. In a word,
$C=C_{1,a}$ is linearly generated by the $n$ rows of
the matrix in Eq.\eqref{G lambda a matrix}.

Because $C$ is also a $2$-quasi $\lambda^{\!-\!p^{\ell-h}}\!$-constacyclic code,
by the same way we can also get that
$C$ is linearly generated by the $n$ rows of the matrix
$\big(E_{n}, a(P_{\!\lambda^{-p^{\ell-h}}})\big)$.
Thus any row of $\big(E_{n},a(P_{\!\lambda^{-p^{\ell-h}}})\big)$ is a linear combination of
the rows of the matrix in Eq.\eqref{G lambda a matrix}.
So there is an $n\times n$ matrix $A$ such that
\begin{align*}
\big(E_{n},\;a(P_{\!\lambda^{-p^{\ell-h}}})\big)=A\Big(E_{n},\;a(P_{\lambda})\Big)
=\Big(AE_{n},\;A\cdot a(P_{\lambda})\Big).
\end{align*}
It follows that $A=E_{n}$, and so $a(P_{\!\lambda^{\!-\!p^{\ell-h}}})
=A\cdot a(P_{\!\lambda})=a(P_{\!\lambda})$. The latter equality is as follows:
\begin{align*}
\begin{pmatrix}
 a_0 & a_1 & \cdots & a_{n-1} \\
 \lambda^{\!-\!p^{\ell-h}} a_{n-1} & a_0 &\cdots & a_{n-2} \\
 \cdots & \cdots &\cdots & \cdots \\
\lambda^{\!-\!p^{\ell-h}} a_{1} & \lambda^{\!-\!p^{\ell-h}} a_2 &\cdots & a_{0}
\end{pmatrix}
=
\begin{pmatrix}
 a_0 & a_1 & \cdots & a_{n-1} \\
 \lambda a_{n-1} & a_0 &\cdots & a_{n-2} \\
 \cdots & \cdots &\cdots & \cdots \\
 \lambda a_{1} & \lambda a_2 &\cdots & a_{0}
\end{pmatrix}.
\end{align*}
Since $\lambda^{-p^{\ell-h}}\ne\lambda$,
it follows that $a_i=0$ for $i=1,\dots,n-1$;
hence the polynomial $a(X)$ is a constant polynomial: $a(X)=\alpha$
for some $\alpha\in{F}$,
and $C$ has a generating matrix $\big(E_{n},\,\alpha E_{n}\big)$.
Because $C$ is Galois $p^h$-self-dual,
\begin{align*}
0=\big(E_{n},\,\alpha E_{n}\big)\cdot\big(E_{n},\,\alpha E_{n}\big)^{*h}
=E_{n}+\alpha^{1+\alpha^{p^h}}E_{n}=(1+\alpha^{1+p^h})E_{n},
\end{align*}
hence $1+\alpha^{1+p^h}=0$. In conclusion, (3) holds.
\end{proof}

\begin{corollary}
Assume that $\lambda^{1+p^h}\ne 1$.
%By Theorem~\ref{th lambda neq iff} we have that:
The Galois $p^h$-self-dual $2$-quasi $\lambda$-constacyclic codes
of length $2n$ exist
if and only if the polynomial $X^{1+p^h}+1$ has roots in~$F$;
and in that case,
the ${\RL}$-submodules $C_{1,\alpha}$ of $\RL^2$ with
$\alpha\in F$ being a root of $X^{1+p^h}+1$ are all the Galois $p^h$-self-dual
$2$-quasi $\lambda$-constacyclic codes of length $2n$.
\qed
\end{corollary}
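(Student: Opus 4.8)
The corollary is an immediate consequence of Theorem~\ref{th lambda neq iff}, so the plan is to simply unpack that theorem. First I would observe that the existence question is governed entirely by the equivalence of statements (1) and (3): a Galois $p^h$-self-dual $2$-quasi $\lambda$-constacyclic code of length $2n$ exists if and only if there is some $\alpha\in F^\times$ with $\alpha^{1+p^h}=-1$, equivalently if and only if $X^{1+p^h}+1$ has a root in $F$. (Note that any root $\alpha$ of $X^{1+p^h}+1$ is automatically nonzero, since $0^{1+p^h}=0\neq -1$, so the distinction between ``root in $F$'' and ``root in $F^\times$'' is vacuous; I would either mention this in one line or silently absorb it.) The forward direction uses (1)$\Rightarrow$(3) of the theorem, and the reverse direction uses (3)$\Rightarrow$(1); both were already established.

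Next, for the second assertion, I would again quote Theorem~\ref{th lambda neq iff}: the implication (1)$\Rightarrow$(2) says that every Galois $p^h$-self-dual $2$-quasi $\lambda$-constacyclic code of length $2n$ is of the form $C_{1,\alpha}$ for some $\alpha\in F$ with $\alpha^{1+p^h}=-1$, i.e. for some root $\alpha$ of $X^{1+p^h}+1$; and the implication (2)$\Rightarrow$(1) says conversely that each such $C_{1,\alpha}$ is Galois $p^h$-self-dual. Combining these two directions yields exactly the claimed characterization: the codes $C_{1,\alpha}$, as $\alpha$ ranges over the roots of $X^{1+p^h}+1$ in $F$, are precisely all the Galois $p^h$-self-dual $2$-quasi $\lambda$-constacyclic codes of length $2n$.

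There is no real obstacle here — the corollary is a restatement of the theorem in the language of the polynomial $X^{1+p^h}+1$. The only minor point worth a sentence is that distinct roots $\alpha$ may in principle give the same or different codes $C_{1,\alpha}$; the corollary as stated only claims that this family exhausts the self-dual codes, not that the parametrization is injective, so nothing further needs to be checked. I would therefore keep the proof to two or three sentences, citing Theorem~\ref{th lambda neq iff} for both the existence criterion and the enumeration, and noting in passing that $0$ is never a root of $X^{1+p^h}+1$ so ``roots in $F$'' and ``roots in $F^\times$'' coincide.
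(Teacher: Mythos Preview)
Your proposal is correct and matches the paper's approach exactly: the paper marks the corollary with a \qed\ and gives no separate proof, indicating it is an immediate consequence of Theorem~\ref{th lambda neq iff}, which is precisely what you do. Your side remarks (that roots are automatically nonzero, and that injectivity of $\alpha\mapsto C_{1,\alpha}$ is not claimed) are accurate and harmless additions.
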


Note that in the Euclidean case ``$h=0$'',
$\lambda^{1+p^h}\neq 1$ if and only if $\lambda\ne\pm 1$.

\begin{corollary}
Assume that $\lambda\ne\pm 1$.
The self-dual $2$-quasi $\lambda$-constacyclic codes over $F$ of length $2n$
exist if and only if
$q\,{\not\equiv}\,3~({\rm mod}~4)$;
and in that case,
the ${\RL}$-submodules $C_{1,\alpha}$ of $\RL^2$ with $\alpha\in F$ satisfying
$\alpha^2=-1$ are the all self-dual $2$-quasi
$\lambda$-constacyclic codes over $F$ of length $2n$ .
\end{corollary}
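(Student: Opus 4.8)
The plan is to derive this corollary directly from the preceding one by specializing to the Euclidean case $h=0$. First I would note that when $h=0$, the hypothesis $\lambda^{1+p^h}\ne 1$ becomes $\lambda^{2}\ne 1$, i.e.\ $\lambda\ne\pm 1$, which is exactly the standing assumption here; and the Galois $p^0$-inner product is the ordinary Euclidean inner product, so ``Galois $p^0$-self-dual'' means ``self-dual''. Thus the previous corollary applies verbatim: the self-dual $2$-quasi $\lambda$-constacyclic codes of length $2n$ exist if and only if the polynomial $X^{1+p^0}+1=X^{2}+1$ has a root in $F$, and in that case the submodules $C_{1,\alpha}$ with $\alpha^{2}=-1$ are precisely all such codes.

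Next I would translate the condition ``$X^{2}+1$ has a root in $F$'' into the arithmetic condition ``$q\not\equiv 3\pmod 4$''. The key step is the classical fact that $-1$ is a square in $F=\mathbb{F}_q$ if and only if $q\not\equiv 3\pmod 4$. For $q$ even this is immediate since every element of a field of characteristic $2$ is a square (and indeed $-1=1=1^2$), and $q$ even forces $q\equiv 0\pmod 2$, hence $q\not\equiv 3\pmod4$. For $q$ odd, $F^\times$ is cyclic of even order $q-1$, and $-1$ is the unique element of order $2$; it lies in the subgroup of squares (which has index $2$) if and only if $(q-1)/2$ is even, i.e.\ $q\equiv 1\pmod 4$. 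Combining the two cases, $X^2+1$ has a root in $F$ exactly when $q\not\equiv 3\pmod 4$.

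Finally I would assemble the two observations: by the previous corollary the existence of self-dual $2$-quasi $\lambda$-constacyclic codes of length $2n$ is equivalent to $X^2+1$ having a root in $F$, which by the arithmetic fact is equivalent to $q\not\equiv 3\pmod4$; and in that case the complete list of such codes is $\{C_{1,\alpha}\mid \alpha\in F,\ \alpha^2=-1\}$, which is the displayed description. There is essentially no obstacle here—the only mild point to be careful about is handling the characteristic-$2$ case in the square-root statement, but as noted that case is trivial and also consistent with the residue condition. This gives the result.
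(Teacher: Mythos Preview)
Your proposal is correct and follows essentially the same route as the paper's proof: reduce to the previous corollary with $h=0$, and then argue that $X^{2}+1$ has a root in $F$ if and only if $q\not\equiv 3\pmod 4$ by treating the even and odd $q$ cases separately. Your write-up is simply more detailed on the arithmetic step than the paper's one-line justification.
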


\begin{proof}
The polynomial $X^2+1$ has roots in $F$ if and only if
$q$ is even or $q$ is odd and $4\,|\,(q-1)$,
if and only if $q\,{\not\equiv}\,3~({\rm mod}~4)$.
\end{proof}

As a comparison, the Hermitian self-dual ones always exist.

\begin{corollary}
Assume that $\ell$ is even and $\lambda^{1+p^{\ell/2}}\neq 1$.
Then the Hermitian self-dual $2$-quasi $\lambda$-constacyclic codes over $F$
of length $2n$ always exist;
and in that case, the ${\RL}$-submodules $C_{1,\alpha}$ of $\RL^2$ with
$\alpha\in F$ satisfying $\alpha^{1+p^{\ell/2}}=-1$ are the all Hermitian self-dual
$2$-quasi $\lambda$-constacyclic codes over $F$ of length $2n$.
\end{corollary}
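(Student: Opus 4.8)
The plan is to obtain this statement as the special case $h=\ell/2$ of Theorem~\ref{th lambda neq iff}, supplemented by an elementary fact about the norm map of finite fields. The Hermitian inner product is by definition the Galois $p^{\ell/2}$-inner product, and the hypothesis $\lambda^{1+p^{\ell/2}}\neq 1$ is exactly the hypothesis ``$\lambda^{1+p^h}\neq 1$'' of Theorem~\ref{th lambda neq iff} with $h=\ell/2$. So by that theorem the Hermitian self-dual $2$-quasi $\lambda$-constacyclic codes of length $2n$ are precisely the $\RL$-submodules $C_{1,\alpha}$ of $\RL^2$ with $\alpha\in F^\times$ and $\alpha^{1+p^{\ell/2}}=-1$; this is the classification asserted in the corollary, and it reduces the existence claim to verifying that the polynomial $X^{1+p^{\ell/2}}+1$ has a root in $F$ whenever $\ell$ is even.

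For the existence I would set $q_0=p^{\ell/2}$, so that $F$ contains a subfield $F_0$ of order $q_0$ and $|F|=q_0^2$. For $\alpha\in F$ one has $\alpha^{1+q_0}=\alpha\cdot\alpha^{q_0}=N_{F/F_0}(\alpha)$, the relative norm, which always lies in $F_0$ (it is fixed by the Frobenius $x\mapsto x^{q_0}$ of $F$ over $F_0$). The norm map $N_{F/F_0}\colon F^\times\to F_0^\times$ is a surjective group homomorphism: its kernel is $\{\alpha\in F^\times\mid\alpha^{q_0+1}=1\}$, which in the cyclic group $F^\times$ of order $q_0^2-1=(q_0-1)(q_0+1)$ has order exactly $q_0+1$, so the image has order $(q_0^2-1)/(q_0+1)=q_0-1=|F_0^\times|$. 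Since $-1$ lies in the prime subfield $\mathbb{F}_p\subseteq F_0$ and $-1\neq 0$, we get $-1\in F_0^\times$, hence there is $\alpha\in F^\times$ with $\alpha^{1+p^{\ell/2}}=N_{F/F_0}(\alpha)=-1$. (When $p=2$ this is trivial, since then $-1=1=1^{1+q_0}$.) Therefore the Hermitian self-dual $2$-quasi $\lambda$-constacyclic codes of length $2n$ exist for every $n$, which together with the classification above completes the proof.

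There is no genuine obstacle here: the argument is a direct specialization of the already-established Theorem~\ref{th lambda neq iff}, and the one extra ingredient — surjectivity of the finite-field norm, equivalently solvability of $X^{1+p^{\ell/2}}=-1$ in $F$ — is classical and follows in one line from the cyclicity of $F^\times$. One could alternatively phrase the existence part via the criterion of the preceding corollary (``$X^{1+p^{h}}+1$ has a root in $F$'') specialized to $h=\ell/2$, the point being that for $\ell$ even that polynomial is always split-enough to have a root, in contrast to the Euclidean case $h=0$ where one needs $q\not\equiv 3\ (\mathrm{mod}\ 4)$.
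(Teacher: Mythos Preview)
Your proof is correct and follows essentially the same approach as the paper: both obtain the classification directly from Theorem~\ref{th lambda neq iff} with $h=\ell/2$, and both establish existence by showing $X^{1+p^{\ell/2}}+1$ has a root in $F$ via the cyclicity of $F^\times$. The only cosmetic difference is that you phrase the existence as surjectivity of the norm $N_{F/F_0}$, whereas the paper exhibits a root as a generator of the cyclic subgroup of $F^\times$ of order $2(p^{\ell/2}+1)$ (for $p$ odd) --- these are two standard formulations of the same underlying fact.
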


\begin{proof}
If $q=p^\ell$ is even,
then the polynomial $X^{1+p^{\ell/2}}+1=X^{1+p^{\ell/2}}-1$
always has roots in $F$.
Assume that $p$ is odd. The order of the multiplication group
$$|F^\times|=p^\ell-1=(p^{\ell/2}+1)(p^{\ell/2}-1), $$
and $2\,\big|\,(p^{\ell/2}-1)$. So, there is a subgroup $H$
of the multiplication group $F^\times$ with order $|H|=2(p^{\ell/2}+1)$.
Then any generator of the group $H$ is a root of
the polynomial $X^{1+p^{\ell/2}}+1$.
\end{proof}

As a consequence, we get the following.

\begin{theorem} \label{thm case not good}
Assume that $\lambda^{1+p^h}\ne 1$ (i.e., $\lambda\ne\lambda^{-p^{\ell-h}}$).
Then the Galois $p^h$-self-dual $2$-quasi $\lambda$-constacyclic codes over $F$
are asymptotically bad.
\end{theorem}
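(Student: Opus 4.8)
The plan is to exploit the rigid classification obtained in Theorem~\ref{th lambda neq iff}. Under the hypothesis $\lambda^{1+p^h}\ne 1$, that theorem tells us that every Galois $p^h$-self-dual $2$-quasi $\lambda$-constacyclic code $C$ of length $2n$ has a generating matrix $\big(E_n,\,\alpha E_n\big)$ for some $\alpha\in F^\times$ with $\alpha^{1+p^h}=-1$; equivalently, as a subspace of $F^n\times F^n$,
\[
 C=\big\{(\mathbf{a},\,\alpha\mathbf{a})\ \big|\ \mathbf{a}\in F^n\big\}.
\]
First I would dispose of the degenerate possibility: if $X^{1+p^h}+1$ has no root in $F$, then no admissible $\alpha$ exists, so by Theorem~\ref{th lambda neq iff} there are no such codes for any $n$, and the class is vacuously asymptotically bad. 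Hence I may assume such an $\alpha$ has been fixed, and work with codes of the above shape.

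The core step is a one-line weight estimate. Since $\alpha\ne 0$, scaling does not change the support, so ${\rm w}(\alpha\mathbf{a})={\rm w}(\mathbf{a})$ for every $\mathbf{a}\in F^n$; hence for $\mathbf{0}\ne\mathbf{a}\in F^n$ we have ${\rm w}(\mathbf{a},\alpha\mathbf{a})=2\,{\rm w}(\mathbf{a})\ge 2$, with equality attained by taking $\mathbf{a}$ to be any standard basis vector of $F^n$ (here the standing convention $n>1$ makes this harmless, though $n=1$ works too). Therefore ${\rm d}(C)={\rm w}(C)=2$ for every such code, independently of $n$, and so its relative minimum distance is
\[
 \Delta(C)=\frac{{\rm d}(C)}{2n}=\frac{1}{n}
\]
(the rate being of course $\tfrac12$, since $C$ is self-dual, so it is precisely the relative distance that degenerates).

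Finally I would conclude the asymptotic statement directly from the definition: in any sequence $C_1,C_2,\dots$ of Galois $p^h$-self-dual $2$-quasi $\lambda$-constacyclic codes whose lengths $2n_i$ tend to infinity, we have $n_i\to\infty$, hence $\Delta(C_i)=1/n_i\to 0$; thus there is no real number $\delta>0$ with $\Delta(C_i)>\delta$ for all $i$, so no such sequence is asymptotically good, and the class is asymptotically bad. Essentially all the substance was already carried by Theorem~\ref{th lambda neq iff}; there is no genuine obstacle here, the only points requiring a moment of care being the vacuous case handled above and the trivial observation that multiplying a vector by a nonzero scalar preserves its Hamming weight.
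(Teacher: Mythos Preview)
Your proof is correct and follows essentially the same approach as the paper: both invoke Theorem~\ref{th lambda neq iff} to reduce to the generating matrix $(E_n,\alpha E_n)$, dispose of the vacuous case where no $\alpha$ exists, and then observe that the minimum weight is $2$ (the paper notes that each row of $(E_n,\alpha E_n)$ has weight~$2$, while you equivalently describe $C=\{(\mathbf{a},\alpha\mathbf{a})\}$ and compute ${\rm w}(\mathbf{a},\alpha\mathbf{a})=2\,{\rm w}(\mathbf{a})$), so that $\Delta(C)=1/n\to 0$.
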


\begin{proof}
If the polynomial $X^{1+p^h}+1$ has no root in $F$, then
Galois $p^h$-self-dual $2$-quasi $\lambda$-constacyclic codes over $F$ do not
exist, hence Galois $p^h$-self-dual $2$-quasi $\lambda$-constacyclic codes over $F$
are asymptotically bad.

Otherwise, any Galois $p^h$-self-dual $2$-quasi $\lambda$-constacyclic code
$C_{1,\alpha}$ has minimum weight ${\rm w}(C_{1,\alpha})=2$,
because any row of the generating matrix $\big(E_n, \alpha E_n\big)$
has weight $2$. The relative minimum distance
$\Delta(C_{1,\alpha})=\frac{2}{2n}\to 0$ while ${n\to\infty}$.
So Galois $p^h$-self-dual $2$-quasi $\lambda$-constacyclic codes over $F$
are asymptotically bad.
\end{proof}

\subsection{The case that $\lambda^{1+p^h}=1$} \label{sub lambda=}

We start with a general result about the $\lambda$-consta circulant matrices
(cf. Definition~\ref{def consta-circulant}).
As in Example~\ref{exm}(2), ${\rm M}_n(F)$ denotes the $n\times n$
matrix algebra over $F$.
% consisting of all $F$-matrices of degree $n$.
We consider its subset consisting of
all the $\lambda$-consta circulant matrices of degree~$n$:
\begin{align*}
{\rm M}_{n\dash\lambda\dash{\rm circ}}(F)=
\big\{\,a(P_\lambda)\;\big|\;a(X)\in\RL\,\big\},
\end{align*}
where $P_\lambda$ is defined in Eq.\eqref{eq def P_lambda}
and $a(P_\lambda)$ is defined in Eq.\eqref{lambda-circulant matrix}.

\begin{lemma} \label{lem n circ}
With the notation as above.
${\rm M}_{n\dash\lambda\dash{\rm circ}}(F)$
is a subalgebra of ${\rm M}_n(F)$ and
the following is an algebra isomorphism:
\begin{align} \label{R_n,lambda cong ...}
\nu:\; {\RL}\; \mathop{\longrightarrow} %\limits^{\cong}
\; {\rm M}_{n\dash\lambda\dash{\rm circ}}(F),
 \quad a(X)\,\longmapsto\, a(P_{\!\lambda}).
\end{align}
\end{lemma}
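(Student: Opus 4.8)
The plan is to exhibit $\nu$ as the restriction of a map we already understand, namely the action of $\RL$ on itself by multiplication, read through the coordinate isomorphism $\iota\colon\RL\xrightarrow{\cong}F^n$ of Eq.\eqref{eq R to F}. First I would recall from Lemma~\ref{lem C_a,a'}(1) (and its proof) that, for $a(X)\in\RL$, the rows of $a(P_\lambda)$ are exactly the coordinate vectors of $a(X),\,Xa(X),\dots,X^{n-1}a(X)$; equivalently, the $i$-th row of $a(P_\lambda)$ is $\iota\bigl(X^i a(X)\bigr)$. Since $\{1,X,\dots,X^{n-1}\}$ is the standard basis of $\RL$ matched by $\iota$ with the standard basis $\{{\bf e}_0,\dots,{\bf e}_{n-1}\}$ of $F^n$, this says precisely that $a(P_\lambda)$ is the matrix (acting on row vectors from the left, i.e. ${\bf v}\mapsto {\bf v}\,M$) of the $F$-linear endomorphism $m_a\colon\RL\to\RL$, $f(X)\mapsto f(X)a(X)$, computed in the standard basis. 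Indeed, by Eq.\eqref{eq R cong F} we have $\iota(X^i a)={\bf a}\cdot P_\lambda^{\,i}$, so $a(P_\lambda)=\sum_i a_i P_\lambda^{\,i}$ is genuinely the value of the polynomial $a$ at $P_\lambda$, consistent with Definition~\ref{def consta-circulant}.

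Next I would invoke the standard fact that the assignment $a\mapsto m_a$ is an injective $F$-algebra homomorphism from a commutative $F$-algebra $\RL$ into $\mathrm{End}_F(\RL)\cong {\rm M}_n(F)$: it is $F$-linear and additive trivially, it sends $1_\RL$ to the identity, and $m_{ab}=m_a\circ m_b$ because $\RL$ is commutative (so that right multiplication by $b$ then by $a$ is multiplication by $ab$); injectivity holds since $m_a(1_\RL)=a$, so $m_a=0$ forces $a=0$. Composing with the (contravariant-free, since we use the convention ${\bf v}\mapsto {\bf v}M$ and $(m_a\circ m_b)$ corresponds to $a(P_\lambda)b(P_\lambda)$ — one checks the order works out because ${\bf v}\mapsto({\bf v}M_b)M_a$ computes $m_a(m_b({\bf v}))=m_a({\bf v}b)={\bf v}ba={\bf v}ab$) matrix realization in the standard basis, we conclude that $\nu$ is an injective $F$-algebra homomorphism. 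Its image is by definition ${\rm M}_{n\dash\lambda\dash{\rm circ}}(F)$, which is therefore a subalgebra of ${\rm M}_n(F)$ (being the image of an algebra homomorphism), and $\nu\colon\RL\to{\rm M}_{n\dash\lambda\dash{\rm circ}}(F)$ is surjective onto its image, hence an algebra isomorphism.

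The only point needing care — and the one I expect to be the main (minor) obstacle — is the bookkeeping of left/right conventions: matrices here act on the right (${\bf v}\mapsto{\bf v}P_\lambda$, Eq.\eqref{eq a dot P}), so one must verify that matrix product $a(P_\lambda)b(P_\lambda)$ corresponds to multiplication by $a(X)b(X)$ with the correct order, rather than $b(X)a(X)$; since $\RL$ is commutative this is harmless, but it should be stated cleanly. Concretely, I would verify the multiplicativity at the level of the generating matrix entries: the $i$-th row of $a(P_\lambda)b(P_\lambda)$ equals ${\bf e}_i P_\lambda^{\,?}\cdots$ — more cleanly, $a(P_\lambda)b(P_\lambda)=\bigl(\sum_i a_iP_\lambda^{\,i}\bigr)\bigl(\sum_j b_jP_\lambda^{\,j}\bigr)=\sum_{i,j}a_ib_jP_\lambda^{\,i+j}=(ab)(P_\lambda)$, using that $P_\lambda$ commutes with itself and that reduction of $X^n$ to $\lambda$ in $\RL$ matches $P_\lambda^{\,n}=\lambda E_n$ (Eq.\eqref{eq def P_lambda}). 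This last identity, $P_\lambda^{\,n}=\lambda E_n$, together with $F$-linearity, is really all that is required, so the whole proof reduces to: $\nu$ is a well-defined $F$-algebra homomorphism because $P_\lambda^{\,n}=\lambda E_n$ (so $X^n-\lambda$ lies in the kernel of $F[X]\to{\rm M}_n(F)$, $X\mapsto P_\lambda$), and $\nu$ is injective because the first row of $a(P_\lambda)$ recovers the coefficient vector of $a(X)$. Everything else is definitional.
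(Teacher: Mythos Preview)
Your proposal is correct and, in its final reduction, takes essentially the same approach as the paper: the paper defines the evaluation homomorphism $F[X]\to{\rm M}_n(F)$, $f(X)\mapsto f(P_\lambda)$, uses $P_\lambda^n=\lambda E_n$ to identify the kernel as $\langle X^n-\lambda\rangle$, and applies the Homomorphism Theorem. Your regular-representation framing is pleasant extra scaffolding that you yourself discard at the end; the paper simply omits it.
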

\begin{proof}
The following is obviously an $F$-algebra homomorphism:
\begin{align*} %\label{eq FX to Mn}
 \delta:~~ F[X]\to {\rm M}_n(F),~~~ f(X)\mapsto f(P_{\!\lambda}).
\end{align*}
Since $P_{\!\lambda}^n=\lambda E_n$,
the kernel of the homomorphism $\delta$ %Eq.\eqref{eq FX to Mn}
is the ideal $\langle X^n-\lambda\rangle$ of~$F[X]$ generated by $X^n-\lambda$.
Note that the quotient algebra $F[X]/\langle X^n-\lambda\rangle={\RL}$, see Eq.\eqref{eq R_lambda}.
By Homomorphism Theorem, the homomorphism $\delta$ %Eq.\eqref{eq FX to Mn}
induces an injective homomorphism:
\begin{align*}
 \hat \delta:~~ \RL=F[X]/\langle X^n-\lambda\rangle~\to~ {\rm M}_n(F),~~~ a(X)~\mapsto~ a(P_{\!\lambda}).
\end{align*}
The image of this homomorphism $\hat \delta$ is exactly
${\rm M}_{n\dash\lambda\dash{\rm circ}}(F)$,
cf.~Definition \ref{def consta-circulant}.
Thus ${\rm M}_{n\dash\lambda\dash{\rm circ}}(F)$
is a subalgebra of ${\rm M}_n(F)$
and the homomorphism $\hat \delta$ induces
the algebra isomorphism $\nu$ in Eq.\eqref{R_n,lambda cong ...}.
\end{proof}

In the rest of this subsection we assume that $\lambda^{1+p^h}=1$
i.e., $\lambda=\lambda^{-p^{\ell-h}}$, cf. Remark~\ref{lambda^1+p^h}.
By Eq.\eqref{eq def *h}, we have
the following map
\begin{align*} %\label{eq anti nu}
\tilde\tau:~ {\rm M}_{n}(F) \,\longrightarrow\, {\rm M}_{n}(F), ~~~
A\,\longmapsto\, A^{*h};
\end{align*}
and by Eq.\eqref{eq *h satisfy}, for any $A,B\in {\rm M}_{n}(F)$ and $\alpha\in F$,
\begin{align*}
(AB)^{*h}=B^{*h} A^{*h}, ~~
(A+B)^{*h}=A^{*h}+B^{*h}, ~~
 (\alpha A)^{*h}=\alpha^{p^h}\!A^{*h}.
\end{align*}
So $\tilde\tau$
is a $p^h$-anti-automorphism of the matrix algebra ${\rm M}_n(F)$.

\begin{lemma}
Keep the notation as above.
Assume that $\lambda^{1+p^{h}}=1$. %(i.e., $\lambda=\lambda^{-p^{\ell-h}}$).
Then $\tilde\tau\big({\rm M}_{n\dash\lambda\dash{\rm circ}}(F)\big)
\subseteq {\rm M}_{n\dash\lambda\dash{\rm circ}}(F)$,
and the restricted map
$\tau=\tilde\tau|_{{\rm M}_{n\dash\lambda\dash{\rm circ}}(F)}$
as follows is a $p^h$-automorphism of the $F$-algebra
${\rm M}_{n\dash\lambda\dash{\rm circ}}(F)$:
%(where the operator $*h$ is defined in Eq.\eqref{eq def *h})
\begin{align} \label{*h is an auto}
\tau:\; {\rm M}_{n\dash\lambda\dash{\rm circ}}(F)
 \;\mathop{\longrightarrow} %\limits^{\cong}
 \; {\rm M}_{n\dash\lambda\dash{\rm circ}}(F), \quad
  a(P_\lambda)\,\longmapsto\, \big(a(P_\lambda)\big)^{*h}.
\end{align}
\end{lemma}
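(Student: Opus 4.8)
The plan is to reduce everything to one computation of the Galois $p^h$-transpose of the basic permutation matrix $P_\lambda$, and then to exploit the commutativity of the consta-circulant algebra. First I would record that, under the standing hypothesis $\lambda^{1+p^h}=1$ (equivalently $\lambda=\lambda^{-p^{\ell-h}}$, cf.\ Remark~\ref{lambda^1+p^h}), the matrix $P_\lambda^{*h}$ is again a $\lambda$-consta circulant matrix, in fact a power of $P_\lambda$ itself. Indeed, reading off the explicit shape of $P_\lambda$ from Eq.\eqref{eq def P_lambda} and applying the definition Eq.\eqref{eq def *h}, one finds that $P_\lambda^{*h}$ has $\lambda^{p^h}$ in its top-right corner and $1$'s just below the diagonal; since $\lambda^{p^h}=\lambda^{-1}$ under our hypothesis, this is exactly the computation already carried out inside the proof of Lemma~\ref{G C^bot lambda^-1} (cf.\ Eq.\eqref{G P_lambda^-1}), which gives $P_\lambda^{*h}=P_\lambda^{-1}=P_\lambda^{\,nt-1}$, where $t={\rm ord}_{F^\times}(\lambda)$ as in Eq.\eqref{eq t=...}. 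In particular $P_\lambda^{*h}\in{\rm M}_{n\dash\lambda\dash{\rm circ}}(F)$.

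Next I would deduce the inclusion $\tilde\tau\bigl({\rm M}_{n\dash\lambda\dash{\rm circ}}(F)\bigr)\subseteq {\rm M}_{n\dash\lambda\dash{\rm circ}}(F)$. Writing an arbitrary consta-circulant matrix as $a(P_\lambda)=\sum_{i=0}^{n-1}a_iP_\lambda^{\,i}$ and using that $\tilde\tau$ is a $p^h$-anti-automorphism of ${\rm M}_n(F)$ (Eq.\eqref{eq *h satisfy}), one gets $\bigl(a(P_\lambda)\bigr)^{*h}=\sum_{i=0}^{n-1}a_i^{p^h}\bigl(P_\lambda^{*h}\bigr)^{i}=\sum_{i=0}^{n-1}a_i^{p^h}P_\lambda^{-i}=a^{(p^h)}\!\bigl(P_\lambda^{-1}\bigr)$; since $P_\lambda^{-1}=P_\lambda^{\,nt-1}$ is a power of $P_\lambda$, this is a polynomial in $P_\lambda$ and hence lies in ${\rm M}_{n\dash\lambda\dash{\rm circ}}(F)$. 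This proves the inclusion and shows that the restriction $\tau=\tilde\tau|_{{\rm M}_{n\dash\lambda\dash{\rm circ}}(F)}$ is well defined.

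It then remains to check that $\tau$ is a $p^h$-automorphism of the $F$-algebra ${\rm M}_{n\dash\lambda\dash{\rm circ}}(F)$. Additivity, the $p^h$-semilinearity $\tau(\alpha A)=\alpha^{p^h}\tau(A)$, and $\tau(E_n)=E_n^{*h}=E_n$ are all inherited verbatim from Eq.\eqref{eq *h satisfy}. For multiplicativity, the one place where ``automorphism'' rather than merely ``anti-automorphism'' is used is the fact that ${\rm M}_{n\dash\lambda\dash{\rm circ}}(F)$ is \emph{commutative}, being isomorphic to the commutative ring $\RL$ via the map $\nu$ of Lemma~\ref{lem n circ}: for $A,B\in{\rm M}_{n\dash\lambda\dash{\rm circ}}(F)$ we then have $\tau(AB)=(AB)^{*h}=B^{*h}A^{*h}=\tau(B)\tau(A)=\tau(A)\tau(B)$. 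Finally, $\tau$ is injective because $\tilde\tau$ is injective on all of ${\rm M}_n(F)$; and an injective self-map of the finite set ${\rm M}_{n\dash\lambda\dash{\rm circ}}(F)$ is automatically surjective, so $\tau$ is bijective. Altogether $\tau$ is a $p^h$-automorphism, as claimed.

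I do not expect a real obstacle here: the whole argument rests on the identity $P_\lambda^{*h}=P_\lambda^{-1}$ (essentially already available from Lemma~\ref{G C^bot lambda^-1}) together with commutativity of the consta-circulant algebra. The only points requiring a little care are keeping straight the exponents $p^h$ and $p^{\ell-h}$ when invoking $\lambda^{p^h}=\lambda^{-1}$, and noting that one genuinely needs commutativity --- not just the anti-automorphism property of $*h$ --- to conclude that $\tau$ preserves products.
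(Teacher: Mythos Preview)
Your proof is correct and follows essentially the same route as the paper: both arguments hinge on invoking Eq.\eqref{G P_lambda^-1} under the hypothesis $\lambda=\lambda^{-p^{\ell-h}}$ to obtain $P_\lambda^{*h}=P_\lambda^{-1}=P_\lambda^{\,nt-1}$, then push this through the expansion $a(P_\lambda)=\sum_i a_iP_\lambda^{\,i}$ to get the inclusion, and finally appeal to the commutativity of ${\rm M}_{n\dash\lambda\dash{\rm circ}}(F)\cong\RL$ to upgrade the restricted anti-automorphism to an automorphism. Your explicit bijectivity argument via finiteness is a small addendum the paper leaves implicit, but otherwise the two proofs coincide.
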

\begin{proof}
Since $\lambda=\lambda^{-p^{\ell-h}}$,
by Eq.\eqref{G P_lambda^-1} we deduce that
\begin{align} \label{GG P_lambda^-1}
 P_\lambda^{\,*h}=P_{\!\lambda}^{-1}=P_{\!\lambda}^{nt-1}
\in{\rm M}_{n\dash\lambda\dash{\rm circ}}(F).
\end{align}
By Eq.(\ref{R_n,lambda cong ...}),
any $a(P_\lambda)\in{\rm M}_{n\dash\lambda\dash{\rm circ}}(F)$
is associated with $a(X)\in{\RL}$,
where $a(X)=\sum_{i=0}^{n-1}a_iX^i\in\RL$, so
$$
\big(a(P_\lambda)\big)^{*h}
 =\Big(\sum_{i=0}^{n-1}a_{i}P_\lambda^{\,i}\Big) ^{*h}
 =\sum_{i=0}^{n-1}a_{i}^{p^h}(P_\lambda^{\,*h})^i
  \in {\rm M}_{n\dash\lambda\dash{\rm circ}}(F).
$$
Thus $\tilde\tau\big({\rm M}_{n\dash\lambda\dash{\rm circ}}(F)\big)
\subseteq {\rm M}_{n\dash\lambda\dash{\rm circ}}(F)$.
Restricting the $p^h$-anti-automorphism $\tilde\tau$
to ${\rm M}_{n\dash\lambda\dash{\rm circ}}(F)$, we
get the $p^h$-anti-automorphism Eq.\eqref{*h is an auto},
which is in fact a $p^h$-automorphism because
 ${\rm M}_{n\dash\lambda\dash{\rm circ}}(F)\cong \RL$
is a commutative algebra.
\end{proof}

Next, we introduce an operator ``$*$'' on $\RL$,
which is the key to obtaining the necessary and sufficient conditions for
$2$-quasi $\lambda$-constacyclic codes being Galois $p^h$-self-dual.
With the isomorphism $\nu$ in Eq.\eqref{R_n,lambda cong ...},
inspiring by Eq.\eqref{*h is an auto} and Eq.\eqref{GG P_lambda^-1},
for $a(X)=\sum_{i=0}^{n-1}a_iX^i\in\RL$ we define
\begin{align} \label{def *}
a^*(X)=a^{(p^h)}(X^{nt-1})~({\rm mod}~X^n-\lambda),
\end{align}
where $a^{(p^h)}(X)=\sum_{i=0}^{n-1}a_i^{p^h}X^i$, cf. Example~\ref{exm}(1).

\begin{lemma} \label{lem tau and *}
Assume that $\lambda^{1+p^{h}}=1$. %(i.e., $\lambda=\lambda^{-p^{\ell-h}}$).
Let $a^*(X)$, $\nu$ and $\tau$ be as in  Eq.\eqref{def *},
Eq.\eqref{R_n,lambda cong ...} and Eq.\eqref{*h is an auto}
respectively.
Then
\begin{align} \label{nu * =}
a^*(X)=\nu^{-1}\tau\nu\big(a(X)\big), ~~~~ \forall\; a(X)\in\RL;
\end{align}
and the following map is a $p^h$-automorphism of the algebra $\RL$:
\begin{align} \label{* is an auto}
 *: ~ \RL\;\mathop{\longrightarrow}%\limits^{\cong}
\; \RL, ~~~  a(X)\,\longmapsto\, a^*(X)\,.
\end{align}
\end{lemma}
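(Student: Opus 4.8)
The plan is to establish the displayed identity $a^*(X)=\nu^{-1}\tau\nu\big(a(X)\big)$ by an explicit computation, and then to read off that $*$ is a $p^h$-automorphism of $\RL$ as a purely formal consequence of that identity together with the properties of $\nu$ and $\tau$ already proved (namely that $\nu$ is an $F$-algebra isomorphism by Lemma~\ref{lem n circ}, and that $\tau$ is a $p^h$-automorphism of the commutative algebra ${\rm M}_{n\dash\lambda\dash{\rm circ}}(F)$ by the previous lemma). No new structural input is needed.

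First I would unwind the right-hand side. Writing $a(X)=\sum_{i=0}^{n-1}a_iX^i$, we have $\nu\big(a(X)\big)=a(P_\lambda)=\sum_{i=0}^{n-1}a_iP_\lambda^{\,i}$. Applying $\tau$, i.e. the Galois $p^h$-transpose, and using $(A+B)^{*h}=A^{*h}+B^{*h}$, $(\alpha A)^{*h}=\alpha^{p^h}A^{*h}$, $(AB)^{*h}=B^{*h}A^{*h}$ from Eq.\eqref{eq *h satisfy}, together with the key relation $P_\lambda^{\,*h}=P_\lambda^{-1}=P_\lambda^{\,nt-1}$ from Eq.\eqref{GG P_lambda^-1} (which holds precisely because $\lambda^{1+p^h}=1$ forces $\lambda=\lambda^{-p^{\ell-h}}$, cf. Remark~\ref{lambda^1+p^h}), I get
$$
\tau\nu\big(a(X)\big)=\big(a(P_\lambda)\big)^{*h}
=\sum_{i=0}^{n-1}a_i^{p^h}\big(P_\lambda^{\,*h}\big)^i
=\sum_{i=0}^{n-1}a_i^{p^h}\big(P_\lambda^{\,nt-1}\big)^i
=a^{(p^h)}\big(P_\lambda^{\,nt-1}\big).
$$
Then I would apply $\nu^{-1}$. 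Since $\nu$ is the isomorphism induced (as in the proof of Lemma~\ref{lem n circ}) by the evaluation homomorphism $\delta\colon F[X]\to{\rm M}_n(F)$, $f(X)\mapsto f(P_\lambda)$, whose kernel is $\langle X^n-\lambda\rangle$, for every $f(X)\in F[X]$ one has $\nu^{-1}\big(f(P_\lambda)\big)=f(X)\bmod(X^n-\lambda)$. Taking $f(X)=a^{(p^h)}(X^{nt-1})$ yields $\nu^{-1}\tau\nu\big(a(X)\big)=a^{(p^h)}(X^{nt-1})\bmod(X^n-\lambda)$, which is exactly $a^*(X)$ by Eq.\eqref{def *}. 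This proves Eq.\eqref{nu * =}.

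For the second assertion, I would simply observe that $*=\nu^{-1}\circ\tau\circ\nu$ is a composite of three bijections, hence bijective; it preserves sums and products because each of $\nu$, $\tau$, $\nu^{-1}$ does (here it is essential that $\tau$, though a priori only a $p^h$-anti-automorphism of ${\rm M}_n(F)$, is an automorphism of the commutative subalgebra ${\rm M}_{n\dash\lambda\dash{\rm circ}}(F)$); and the scalars compose as $\nu(\alpha a)=\alpha\,\nu(a)$, then $\tau(\alpha M)=\alpha^{p^h}\tau(M)$, then $\nu^{-1}(\alpha^{p^h}N)=\alpha^{p^h}\nu^{-1}(N)$, so $(\alpha a)^*=\alpha^{p^h}a^*$. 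Thus $*$ satisfies Eq.\eqref{eq alg iso} and is a $p^h$-automorphism of $\RL$.

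I do not expect a genuine obstacle; the argument is bookkeeping. The point requiring care is the final step of the computation of $\tau\nu(a(X))$: substituting $X\mapsto X^{nt-1}$ in $a^{(p^h)}$ and reducing modulo $X^n-\lambda$ must match raising $P_\lambda^{\,nt-1}$ to powers, which works because $\nu$ (being an algebra homomorphism) intertwines polynomial substitution, and because $P_\lambda^{\,nt}=\lambda^tE_n=E_n$ makes $P_\lambda^{\,nt-1}$ an honest inverse of $P_\lambda$ so nothing is lost in the reduction. A minor additional point worth recording is that $a^*(X)$ is well defined as an element of $\RL$ (it is defined as a residue modulo $X^n-\lambda$, and Eq.\eqref{nu * =} confirms independence of the representative of $a(X)$).
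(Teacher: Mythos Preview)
Your proposal is correct and follows essentially the same route as the paper: both establish Eq.\eqref{nu * =} by the same chain of equalities using $P_\lambda^{*h}=P_\lambda^{nt-1}$ from Eq.\eqref{GG P_lambda^-1}, and both deduce that $*$ is a $p^h$-automorphism formally from $*=\nu^{-1}\tau\nu$ with $\nu$ an algebra isomorphism and $\tau$ a $p^h$-automorphism of the commutative algebra ${\rm M}_{n\dash\lambda\dash{\rm circ}}(F)$. The only cosmetic difference is that the paper verifies $\nu(a^*(X))=\tau\nu(a(X))$ by computing $a^*(P_\lambda)$ directly, whereas you compute $\tau\nu(a(X))$ first and then apply $\nu^{-1}$; these are the same computation read in opposite directions.
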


\begin{proof}
 For $a(X)\in\RL$,
by the definition of $a^*(X)$ in Eq.\eqref{def *}
and by Eq.\eqref{GG P_lambda^-1}, %we have
\begin{align*}
a^*(P_{\!\lambda})
&=\sum_{i=0}^{n-1}a_i^{p^h}(P_{\!\lambda}^{nt-1})^i
 =\sum_{i=0}^{n-1}a_i^{p^h}(P_{\!\lambda}^{*h})^i
 =\sum_{i=0}^{n-1}a_i^{p^h}(P_{\!\lambda}^{i})^{*h} \\
&=\sum_{i=0}^{n-1}(a_i P_{\!\lambda}^{i})^{*h}
 =\Big(\sum_{i=0}^{n-1} a_i P_{\!\lambda}^{i}\Big)^{*h}
 =\big(a(P_{\!\lambda})\big)^{*h}.
\end{align*}
That is,
$\nu\big(a^*(X)\big)=a^*(P_{\!\lambda})=\big(a(P_{\!\lambda})\big)^{*h}
 =\tau\nu\big(a(X)\big)$. So
Eq.\eqref{nu * =} holds; equivalently, the following diagram is commutative:
\begin{align} \label{eq * eta tau}
\begin{array}{ccc}
\RL & \mathop{\longrightarrow}\limits^{\nu}
  & {\rm M}_{n\dash\lambda\dash{\rm circ}}(F) \\[5pt]
 *\,\big\downarrow~ && \big\downarrow\,\tau\\[5pt]
\RL & \mathop{\longrightarrow}\limits_{\nu}
  & {\rm M}_{n\dash\lambda\dash{\rm circ}}(F)
\end{array}
\end{align}
Because $\tau$ is a $p^h$-automorphism and both $\nu$ and $\nu^{-1}$
are algebra isomorphisms, by Eq.\eqref{nu * =}
we see that Eq.\eqref{* is an auto} is a $p^h$-automorphism of $\RL$.
\end{proof}

%\begin{definition}\label{def type 1}
%For any $g\in{\RL}$, we call
%$$C_{1,g}=\{(u,ug)~|~u\in{{\RL}}\}\leq \RL^2$$
%is a 2-quasi constacyclic code of type I.
%\end{definition}

Recall that %the $\lambda$-constacyclic code $C_a$ generated by $a\in\RL$ and
the $2$-quasi $\lambda$-constacyclic code $C_{a,a'}$ generated by $(a,a')\in\RL^2$
has been defined in Remark~\ref{rk C_a}.
For the $\RL$-submodules of $\RL^2$ generated by one element,
we have the following Galois self-duality criteria.

%\begin{lemma} \label{lem p-hC_a,a'}
%Assume that $\lambda^{1+p^h}=1$.

%{\bf(1)}\; $\langle C_{a}, C_{a}\rangle_{h}=0$ if and only if
%$aa^*=0$.

%{\bf(2)}\;
%$\langle C_{a,a'}, C_{a,a'}\rangle_{h}=0$ if and only if
%$aa^*+a'{a'}^*=0$.
%\end{lemma}
%\begin{proof}
%Observe that by Lemma \ref{lem C_a,a'}, $C_{a,a'}$ is linearly generated by the rows of the
%matrix $\big(a(P_{\!\lambda}),\,a'(P_{\!\lambda})\big)$.
%Thus $\langle C_{a,a'}, C_{a,a'}\rangle_{h}=0$ if and only if
%$$
%\big(a(P_{\!\lambda}),\,a'(P_{\!\lambda})\big)\cdot
%\big(a(P_{\!\lambda}),\,a'(P_{\!\lambda})\big)^{*h}
%  =a(P_{\!\lambda})\cdot a(P_{\!\lambda})^{*h}
% + a'(P_{\!\lambda})\cdot a'(P_{\!\lambda})^{*h}=0.
%$$
%It follows from Lemma \ref{lem tau and *}
%(cf. %the commutative diagram
%Eq.\eqref{eq * eta tau}) that
%$\langle C_{a,a'}, C_{a,a'}\rangle_{h}=0$ if and only if
%$aa^*+a'{a'}^*=0$.
%And (1) is proved in a similar way.
%\end{proof}

\begin{lemma} \label{lem C1,g}
Assume that $\lambda^{1+p^h}=1$.

{\bf(1)} The $2$-quasi $\lambda$-constacyclic code
$C_{a,a'}$ is Galois $p^h$-self-dual if and only if
$aa^*+a'a'^*\!=\!0$ and the rate ${\rm R}(C_{a,a'})=1/2$.

{\bf(2)} The $2$-quasi $\lambda$-constacyclic code $C_{1,g}$
 is Galois $p^h$-self-dual if and only if $gg^*=-1$.
\end{lemma}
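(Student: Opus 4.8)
The plan is to reduce both statements to the matrix identities already assembled in the preceding lemmas. The first thing I would record is that the Galois $p^h$-inner product on $\RL^2$ is nondegenerate --- it is the composite of the ordinary Euclidean pairing with the bijective $p^h$-linear map $\dot\sigma_{p^h}$ of Example~\ref{exm}(2) --- so that $\dim_F C^{\bot h}=2n-\dim_F C$ for every subspace $C$ of $\RL^2$. Hence $C=C^{\bot h}$ holds exactly when $C\subseteq C^{\bot h}$ (equivalently $\langle C,C\rangle_h=0$) and $\dim_F C=n$, i.e.\ ${\rm R}(C)=\tfrac12$. Both parts of the lemma have this shape, so the real work is to turn the self-orthogonality condition into a polynomial equation in $\RL$.

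For part (1), I would use Lemma~\ref{lem C_a,a'}(2): $C_{a,a'}$ is linearly spanned by the rows of the $n\times 2n$ matrix $M=\big(a(P_\lambda),\,a'(P_\lambda)\big)$. Writing two arbitrary codewords as $\mathbf{u}M$ and $\mathbf{v}M$ and applying Eq.\eqref{inner as matrix product} together with Eq.\eqref{eq *h satisfy}, one gets $\langle \mathbf{u}M,\mathbf{v}M\rangle_h=\mathbf{u}\,(MM^{*h})\,(\mathbf{v}^{(p^h)})^T$; since $\sigma_{p^h}$ is bijective, $\mathbf{v}^{(p^h)}$ ranges over all of $F^n$, so $\langle C_{a,a'},C_{a,a'}\rangle_h=0$ if and only if $MM^{*h}=0$. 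Expanding blockwise, $MM^{*h}=a(P_\lambda)\,a(P_\lambda)^{*h}+a'(P_\lambda)\,a'(P_\lambda)^{*h}$; now I would invoke the identity established inside Lemma~\ref{lem tau and *}, namely $\big(a(P_\lambda)\big)^{*h}=a^*(P_\lambda)$, together with the fact that $\nu$ of Lemma~\ref{lem n circ} is an algebra isomorphism, to rewrite this as $\nu\big(aa^*+a'a'^*\big)$. Since $\nu$ is injective, $MM^{*h}=0$ is equivalent to $aa^*+a'a'^*=0$, and combining with the first paragraph gives (1).

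For part (2), Corollary~\ref{Cor.C1,a} gives $C_{1,g}$ the generating matrix $\big(E_n,\,g(P_\lambda)\big)$, which has rank $n$; so $\dim_F C_{1,g}=n$ and the rate condition ${\rm R}(C_{1,g})=\tfrac12$ holds automatically. I would then apply part (1) with $a=1$, $a'=g$: since $*$ is a ring homomorphism (Lemma~\ref{lem tau and *}), $1^*=1$, hence $C_{1,g}$ is Galois $p^h$-self-dual if and only if $1+gg^*=0$, i.e.\ $gg^*=-1$.

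I do not anticipate a genuine obstacle: essentially all the substance is already packaged in the identity $a(P_\lambda)^{*h}=a^*(P_\lambda)$ and the isomorphism $\nu$. The one place to be a little careful is the bookkeeping in the first paragraph --- one must confirm that ``$\langle C,C\rangle_h=0$ and ${\rm R}(C)=\tfrac12$'' is genuinely equivalent to ``$C=C^{\bot h}$'', which hinges on the nondegeneracy of the Galois $p^h$-inner product and hence on $\dim_F C^{\bot h}=2n-\dim_F C$.
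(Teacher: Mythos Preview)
Your proposal is correct and follows essentially the same route as the paper's own proof: reduce self-duality to $\langle C,C\rangle_h=0$ together with ${\rm R}(C)=\tfrac12$, use Lemma~\ref{lem C_a,a'}(2) to generate $C_{a,a'}$ by the rows of $M=\big(a(P_\lambda),a'(P_\lambda)\big)$, compute $MM^{*h}$, and apply Lemma~\ref{lem tau and *} and the isomorphism $\nu$ of Lemma~\ref{lem n circ} to translate $MM^{*h}=0$ into $aa^*+a'a'^*=0$. Your write-up is in fact a bit more explicit than the paper's (you spell out nondegeneracy of $\langle\,,\,\rangle_h$ and the bijectivity of $\sigma_{p^h}$ on the second slot), but there is no substantive difference.
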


\begin{proof}
{\rm (1)}~The $C_{a,a'}$ is Galois $p^h$-self-dual if and only if
$\langle C_{a,a'}, C_{a,a'}\rangle_{h}=0$ and $\dim_{F}C_{a,a'}=n$.
What remains is to show that
\begin{align}\label{eq p-hC_a,a'}
  \langle C_{a,a'}, C_{a,a'}\rangle_{h}=0 \;\iff\; aa^*+a'{a'}^*=0.
\end{align}
Observe that by Lemma \ref{lem C_a,a'},
$C_{a,a'}$ is linearly generated by the rows of the
matrix $\big(a(P_{\!\lambda}),\,a'(P_{\!\lambda})\big)$.
Thus $\langle C_{a,a'}, C_{a,a'}\rangle_{h}=0$ if and only if
$$
\big(a(P_{\!\lambda}),\,a'(P_{\!\lambda})\big)\cdot
\big(a(P_{\!\lambda}),\,a'(P_{\!\lambda})\big)^{*h}
  =a(P_{\!\lambda})\cdot a(P_{\!\lambda})^{*h}
 + a'(P_{\!\lambda})\cdot a'(P_{\!\lambda})^{*h}=0.
$$
Thus, Eq.(\ref{eq p-hC_a,a'}) follows from Lemma \ref{lem tau and *}
(cf.  Eq.\eqref{eq * eta tau}) immediately.

{\rm (2)}~
By Corollary \ref{Cor.C1,a}, $C_{1,g}$ has the generating matrix
$\big(E_n,\,g(P_{\!\lambda})\big)$.
In particular, $\dim_F C_{1,g}=n$,
i.e., ${\rm R}(C_{1,g})=1/2$.
 Similarly to the proof of $(1)$, we have
$\langle C_{1,g}, C_{1,g}\rangle_{h}=0$ if and only if $1+gg^*=0$.
\end{proof}

In the semisimple case,
extending \cite[Theorem 4.2]{FZ23}, we have the following theorem
to characterize the Galois self-dual $2$-quasi $\lambda$-constacyclic codes.
If $\gcd(n,q)=1$, by Corollary~\ref{lem structure C again}
any $\RL$-submodule $C$ of $\RL^2$ can be written as
\begin{align} \label{any 2-QCC}
 C=C_{a,0}\oplus C_{0,a'}\oplus C_{b,bg},
\end{align}
where $a,a',b\in{\RL}$ with $ab=a'b=0$ and $g\in{C_b^{\times}}$.

\begin{theorem}\label{th lambda eq iff}
Assume that $\lambda^{1+p^h}=1$ and $\gcd(n,q)=1$.
Let $C$ in Eq.\eqref{any 2-QCC} be any $2$-quasi $\lambda$-constacyclic code.
Then $C$ is Galois $p^h$-self-dual if and only if
the following two hold:

{\rm (1)}~
 $aa^*=a'{a'}^*=ab^*=a^*b=a'b^*=a'^*b=bb^*(1+gg^*)=0;$

{\rm (2)}~ $\dim_{F}C=n$.
\end{theorem}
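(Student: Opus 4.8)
The plan is to reduce the self-duality condition $C = C^{\bot h}$ to the two stated conditions by first writing $C$ concretely via Eq.\eqref{any 2-QCC} and then computing the Galois $p^h$-inner products of all pairs of generators. Since $\gcd(n,q)=1$, the ring $\RL$ is semisimple, so every ideal is of the form $\RL e$ for an idempotent $e$, and the $p^h$-automorphism $*$ of Lemma~\ref{lem tau and *} sends idempotents to idempotents; this will let me pass freely between ``$xy=0$'' statements and ``$C_x \perp_h C_y$'' statements. First I would record that, because $*$ is a ring automorphism, $C_{a,0}$, $C_{0,a'}$, $C_{b,bg}$ are generated (as $\RL$-submodules of $\RL^2$) by $(a,0)$, $(0,a')$, $(b,bg)$ respectively, and that the Galois $p^h$-inner product of two such generators, say $(u,u')$ and $(v,v')$, equals $uv^* + u'{v'}^*$ by the matrix computation used in the proof of Lemma~\ref{lem C1,g}(1) together with the commutative diagram Eq.\eqref{eq * eta tau}.

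Next I would argue the two implications. For ``$\Leftarrow$'': assuming (1) and (2), I would check $\langle C, C\rangle_h = 0$ by verifying it on all pairs of the three families of generators. The self-pairings give $aa^* = 0$, $a'{a'}^* = 0$, and $bb^* + (bg)(bg)^* = bb^*(1 + gg^*) = 0$ (using that $g\in C_b^\times$ so $g^*$ lies in $C_b^* = C_{b^*}$ and $*$ is multiplicative, hence $(bg)^* = b^*g^*$ and $bb^*gg^*$ makes sense inside the ring $C_b C_{b^*}$). The cross-pairings give $ab^* = 0$, $a'{b^*}$ paired against $(bg)$-component... more precisely the pairing of $(a,0)$ with $(b,bg)$ is $ab^*$, of $(0,a')$ with $(b,bg)$ is $a'{(bg)}^* = a'b^*g^*$, which vanishes once $a'b^* = 0$; symmetrically for the reversed pairings one needs $a^*b = 0$ and ${a'}^*b = 0$. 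Together with $\dim_F C = n$ and the fact that $R(C) = \tfrac12$ forces $C = C^{\bot h}$ once $C \subseteq C^{\bot h}$, this gives Galois $p^h$-self-duality. For ``$\Rightarrow$'': if $C = C^{\bot h}$ then certainly $\dim_F C = n$, giving (2); and $\langle C,C\rangle_h = 0$ forces every one of the generator pairings above to vanish. The subtlety is that the seven displayed products are not literally the pairings but are what the pairings reduce to after stripping the idempotent factors; here I would use that in a semisimple commutative ring, $\RL e \cdot \RL f = \RL (ef)$, so e.g. $C_a \perp_h C_b$ means $a b^* = 0$ in $\RL$ directly, and the condition $a b^* = 0$ is equivalent to $C_a \cap C_{b^*} = 0$ by Lemma~\ref{lem Ca cap Cb}. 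I would also note $a^*b = 0 \iff ab^* = 0$ is \emph{not} automatic (the two sides differ by applying $*$), which is why both appear in the list.

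The main obstacle I anticipate is bookkeeping the interaction between the $g$-twisted component $C_{b,bg}$ and the rest: one has to be careful that $g$ is only a unit \emph{in the ring $C_b$}, not in $\RL$, so expressions like $gg^*$ live in $C_b \cdot C_{b^*}$ and the manipulations $(bg)^* = b^* g^*$, $bb^*(1+gg^*) = bb^* + (bg)(bg)^*$ must be justified by working inside the appropriate Wedderburn component rather than in $\RL$ itself. A secondary point is verifying that the conditions in (1) are exactly the minimal set: I would check that $a a^* = 0$ etc.\ are symmetric enough (e.g.\ $a'{a'}^*=0$ is equivalent to $C_{a'} \cap C_{{a'}^*}=0$, and since $*$ is an automorphism this is symmetric in the two factors), whereas the mixed conditions genuinely require listing both $ab^*$ and $a^*b$. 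Finally I would remark that the hypothesis $\lambda^{1+p^h}=1$ is used only to guarantee that $*$ is a well-defined $p^h$-automorphism of $\RL$ (Lemma~\ref{lem tau and *}), so that the inner-product-to-product reduction is available; everything else is the semisimple structure theory of Theorem~\ref{Goursat C} and Corollary~\ref{lem structure C again}.
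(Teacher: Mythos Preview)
Your proposal is correct and follows essentially the same route as the paper's proof: reduce $C=C^{\bot h}$ to $\langle C,C\rangle_h=0$ together with $\dim_F C=n$, expand $\langle C,C\rangle_h=0$ into the nine generator pairings among $C_{a,0}$, $C_{0,a'}$, $C_{b,bg}$, and translate each pairing via Eq.\eqref{eq p-hC_a,a'} into a product in $\RL$ using the $*$-automorphism. The only point where the paper is more explicit than your sketch is the forward-direction step extracting $a'b^*=0$ and $a'^*b=0$ from the raw conditions $a'b^*g^*=0$ and $a'^*bg=0$: the paper does this by writing $g g'=e_b$ with $g'\in C_b$, so that $a'^*b=a'^*b e_b=a'^*bgg'=0$ and $a'b^*=a'b^*e_b^*=a'b^*g^*g'^*=0$; your proposed alternative of passing to Wedderburn components accomplishes the same thing.
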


\begin{proof}
The $C$ is Galois $p^h$-self-dual if and only if
$\langle C,C\rangle_h=0$ and $\dim_F C=n$.
Note that the inner product $\langle -,-\rangle_h$ is linear for the first
variable,  and it is $p^h$-linear for the second variable,
but it is not symmetric in general.
So $\langle C,C\rangle_h=0$ is equivalent to the following
\begin{align*}
&\langle C_{a,0}, C_{0,a'}\rangle_h=\langle C_{0, a'}, C_{a,0}\rangle_h=0, \\
& \langle C_{a,0}, C_{a,0}\rangle_h=\langle C_{0,a'}, C_{0,a'}\rangle_h=0, \\
& \langle C_{a,0}, C_{b,bg}\rangle_h=\langle C_{b,bg}, C_{a,0}\rangle_h=0, \\
& \langle C_{0,a'}, C_{b,bg}\rangle_h=\langle C_{b,bg}, C_{0,a'}\rangle_h=0,  \\
&\langle C_{b,bg}, C_{b,bg}\rangle_h=0 .
\end{align*}
The first line holds obviously. By Eq.(\ref{eq p-hC_a,a'}),
the second and the third lines are equivalent to that
$aa^*=a'a'^*=0$ and $ab^*=a^*b=0$, respectively.
The last line is equivalent to that $bb^*(1+gg^*)=0$.

Turn to the forth line which is equivalent to $a'b^*g^*=a'^*bg=0$.
Note that $C_b=\RL\,e_b$ is a ring with identity $e_b$ which is an idempotent,
cf. Remark~\ref{rk R semisimple}(1), hence
``$g\in C_b^\times$'' implies that $gg'=e_b$ for a $g'\in C_b$.
So $a'^*b=a'^*be_b=a'^*bgg'=0g'=0$. Similarly,
$a'b^*=a'b^*e_b^*=a'b^*g^*g'^*=0g'^*=0$.

The theorem is proved.
\end{proof}

\section{Hermitian self-dual $2$-quasi-cyclic codes}
\label{section H self-dual 2-Q cyclic}

In this section, we always assume that $\ell$ is even and $h=\ell/2$,
and ${\gcd(n,q)=1}$.
The map $\sigma_{\ell/2}: F\to F$, $\alpha\mapsto \alpha^{p^{\ell/2}}$, is
a Galois automorphism of order $2$, and
\begin{align*}
\langle{\bf a},{\bf a'}\rangle_{\ell/2}
 =\sum_{i=0}^{n-1}a_i {a'_i}^{p^{\ell/2}},
~~~~ \forall\, {\bf a}=(a_0,\cdots,a_{n-1}),\,{\bf a'}=(a'_0,\cdots,a'_{n-1})\in F^n,
%{\bf a},{\bf b}\in F^n,
\end{align*}
is the Hermitian inner product on $F^{n}$.
In this section we consider $\R1=\R1_1=F[X]/\langle X^n-1\rangle$
and $\R1^2=\R1\times \R1$, cf. Eq.\eqref{eq R=R_1};
and prove that the Hermitian self-dual $2$-quasi-cyclic codes
are asymptotically good.

\subsection{The operator ``$*$'' on $\R1=F[X]/\langle X^n-1\rangle$}

Since $1^{1+p^{\ell/2}} =1$,
the results in Subsection~\ref{sub lambda=}
can be quoted freely for $\R1$ and $\R1^2$.
In particular, the operator ``$*$'' in Lemma~\ref{lem tau and *}
is a $p^{\ell/2}$-automorphism of~$\R1$:
\begin{align} \label{* on cyclic}
 *:~~ \R1\; \longrightarrow\;\R1, ~~~ a(X)\;\longmapsto \; a^*(X),
\end{align}
where $a(X)=\sum_{i=0}^{n-1}a_iX^i$ and
$a^*(X)=a^{(p^{\ell/2})}(X^{n-1})~({\rm mod}~X^n-1)$, i.e.,
$a^*(X)
=
\sum_{i=0}^{n-1}a_i^{p^{\ell/2}}(X^{n-1})^i~({\rm mod}~X^n-1)$,
cf. Eq.\eqref{def *}.
By Eq.\eqref{eq R to FG}, we have the identification:
\begin{align*}
  \R1\cong FG, ~~~~ \sum_{i=0}^{n-1}a_iX^i~ \mapsto~ \sum_{i=0}^{n-1}a_ix^i,
\end{align*}
where $G=\langle\,x\,|\,x^n=1\rangle$ is the cyclic group of order $n$
and $FG$ is the cyclic group algebra.
So the symbol ``$X$'' can be identified with the element $x$ of the group~$G$,
and we can write $a(x)=\sum_{i=0}^{n-1}a_ix^i\in\R1$, and the expressions $x^{-1}$,
$a(x^{-1})$ etc. make sense. Hence $x^{n-1}=x^{-1}$ and
\begin{align} \label{eq a(x^-1)}
 a^*(x)=\sum_{i=0}^{n-1}a_i^{p^{\ell/2}}(x^{-1})^i
={a}^{(p^{\ell/2})}(x^{-1}), ~~~~\forall\;a(x)\in\R1.
\end{align}

\begin{lemma} \label{lem * order 2}
The $p^{\ell/2}$-automorphism ``$*$'' of $\R1$ in
Eq.\eqref{* on cyclic}
is of order 2.
\end{lemma}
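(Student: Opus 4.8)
The plan is to use the group-algebra picture $\R1\cong FG$ from Eq.\eqref{eq R to FG} and to realize the operator $*$ as a composition of two obvious involutions. Recall from Eq.\eqref{eq a(x^-1)} that for $a(x)=\sum_{i=0}^{n-1}a_ix^i\in\R1$ we have $a^*(x)=a^{(p^{\ell/2})}(x^{-1})$. Let $\iota_-\colon\R1\to\R1$ be the algebra automorphism induced by $x\mapsto x^{-1}$ (equivalently $X\mapsto X^{n-1}$ modulo $X^n-1$), and let $\tilde\sigma_{p^{\ell/2}}\colon\sum_i a_ix^i\mapsto\sum_i a_i^{p^{\ell/2}}x^i$ be the coefficientwise Frobenius of Example~\ref{exm}(1); by that example ${\rm ord}(\tilde\sigma_{p^{\ell/2}})=\ell/\gcd(\ell/2,\ell)=2$. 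Then $*=\iota_-\circ\tilde\sigma_{p^{\ell/2}}$.

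First I would observe that $\iota_-$ and $\tilde\sigma_{p^{\ell/2}}$ commute: in either order they send $\sum_i a_ix^i$ to $\sum_i a_i^{p^{\ell/2}}x^{-i}$, since one operator acts only on the coefficients and the other only on the group elements. Hence $*^2=\iota_-^2\circ\tilde\sigma_{p^{\ell/2}}^2$, and here $\iota_-^2={\rm id}$ because $(x^{-1})^{-1}=x$, while $\tilde\sigma_{p^{\ell/2}}^2={\rm id}$ because ${\rm ord}(\tilde\sigma_{p^{\ell/2}})=2$ (equivalently, $\beta^{p^{\ell/2}\cdot p^{\ell/2}}=\beta^{p^\ell}=\beta$ for all $\beta\in F$). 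Therefore $*^2={\rm id}$, so the order of $*$ divides $2$. Equivalently, one may verify this by a direct substitution: with $b=a^*$ one has $b^{(p^{\ell/2})}(x)=a(x^{-1})$ (the coefficientwise Frobenius having been applied twice), whence $(a^*)^*(x)=b^{(p^{\ell/2})}(x^{-1})=a(x)$.

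It remains only to rule out order $1$. Since $*$ is a $p^{\ell/2}$-automorphism of $\R1$ (Lemma~\ref{lem tau and *}) with $\ell/2\not\equiv 0\pmod{\ell}$, it cannot be the identity: $\sigma_{p^{\ell/2}}$ is a nontrivial automorphism of $F$, so there is $\alpha\in F$ with $\alpha^{p^{\ell/2}}\ne\alpha$, and then the constant $\alpha\in\R1$ satisfies $\alpha^*=\alpha^{p^{\ell/2}}\ne\alpha$. Hence the order of $*$ is exactly $2$. I do not anticipate a genuine obstacle; the only points needing care are that $\iota_-$ and $\tilde\sigma_{p^{\ell/2}}$ are commuting involutions and that $\sigma_{p^{\ell/2}}\ne{\rm id}_F$ (so that $*\ne{\rm id}$).
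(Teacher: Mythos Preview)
Your proof is correct and essentially mirrors the paper's own argument: the paper directly computes $(\sum a_i x^i)^{**}=\sum a_i x^i$ and then rules out order~$1$ via an $\alpha\in F$ with $\alpha^{p^{\ell/2}}\ne\alpha$, which is exactly what your direct-substitution paragraph and your final paragraph do. Your factorization $*=\iota_-\circ\tilde\sigma_{p^{\ell/2}}$ into commuting involutions is a mild conceptual repackaging of the same computation rather than a different route.
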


\begin{proof}
Since $x^{n-1}=x^{-1}$, for any $a(x)\in\R1$ by Eq.\eqref{eq a(x^-1)} we have
\begin{align*}
\Big(\sum_{i=0}^{n-1} a_ix^i\Big)^{**}
=\Big(\sum_{i=0}^{n-1} a_i^{p^{\ell/2}}(x^{-1})^i\Big)^*
=\sum_{i=0}^{n-1} (a_i^{p^{\ell/2}})^{p^{\ell/2}} ((x^{-1})^{-1})^i
=\sum_{i=0}^{n-1} a_ix^i.
\end{align*}
Thus the order of the operator ``$*$'' equals $1$ or $2$.
There is an $\alpha\in F$ such that $\alpha^{p^{\ell/2}}\!\neq\alpha$.
Then in $\R1$ we have
$(\alpha 1)^*\neq\alpha 1$.
So the order of the operator ``$*$'' equals $2$.
\end{proof}

\begin{corollary} \label{cor * order 2}
If $C$ is a non-zero ideal of $\R1$ which is invariant
by the operator~``$*$'' (i.e., $C^*=C$),
then the restriction of the operator ``$*$'' to $C$
induces a $p^{\ell/2}$-automorphism of~$C$ of order $2$.
\end{corollary}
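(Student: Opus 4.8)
The plan is to reduce the statement to the previous lemma plus the two elementary facts that the operator ``$*$'' is an $F$-algebra-compatible bijection on $\R1$ and that it has order $2$. First I would recall that by Lemma~\ref{lem tau and *} (applied with $\lambda=1$, $h=\ell/2$, which is legitimate since $1^{1+p^{\ell/2}}=1$), the map $*:\R1\to\R1$ is a $p^{\ell/2}$-automorphism of the algebra $\R1$. By Lemma~\ref{lem * order 2} it has order $2$. Now suppose $C$ is a nonzero ideal of $\R1$ with $C^*=C$. Then the restriction $*|_C:C\to C$ is well-defined by hypothesis, and it remains to check that it is again a $p^{\ell/2}$-automorphism of $C$ of order $2$.

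The core of the argument is that restricting a $p^{\ell/2}$-automorphism of $\R1$ to an invariant ideal preserves all the relevant structure. Being $p^{\ell/2}$-linear, additive, and multiplicative are properties that hold for all elements of $\R1$, hence in particular for all elements of $C$, so $*|_C$ satisfies $(ab)^*=a^*b^*$, $(a+b)^*=a^*+b^*$ and $(\alpha a)^*=\alpha^{p^{\ell/2}}a^*$ for $a,b\in C$ and $\alpha\in F$. Injectivity of $*|_C$ is inherited from injectivity of $*$ on $\R1$. For surjectivity onto $C$: given $c\in C$, since $*$ is a bijection of $\R1$ there is $b\in\R1$ with $b^*=c$; applying $*$ to both sides and using that $*$ has order $2$ gives $b=c^*\in C^*=C$, so $c=b^*$ with $b\in C$. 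Thus $*|_C$ is a bijective $p^{\ell/2}$-semilinear ring isomorphism $C\to C$, i.e.\ a $p^{\ell/2}$-automorphism of $C$. (One should note that $C$, being an ideal of the semisimple algebra $\R1$, is itself an algebra with identity $e_C$ by Remark~\ref{rk R semisimple}(1), though a subalgebra of $\R1$ only in the non-unital sense; the automorphism notion here is the natural one on this algebra.)

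Finally, the order of $*|_C$ is exactly $2$: it is at most $2$ because $(*|_C)^2=(*)^2|_C=\mathrm{id}|_C$, and it is not the identity because $C\neq 0$. Concretely, pick $0\neq c\in C$ and write $c=\sum_{i}c_ix^i$; then $c^*=c^{(p^{\ell/2})}(x^{-1})$, and one can choose $\alpha\in F$ with $\alpha^{p^{\ell/2}}\neq\alpha$ so that $(\alpha c)^* = \alpha^{p^{\ell/2}}c^* \neq \alpha c^* = $ the image of $\alpha c$ under an order-$1$ map would force $\alpha c^*=\alpha c$, hence already $c^*\neq c$ or the $p^{\ell/2}$-semilinearity is violated; in either case $*|_C$ cannot be the identity. (More cleanly: if $*|_C=\mathrm{id}_C$ then for $0\neq c\in C$ and any $\alpha\in F$ we would have $\alpha^{p^{\ell/2}}c = (\alpha c)^* = \alpha c$, forcing $\alpha^{p^{\ell/2}}=\alpha$ for all $\alpha\in F$, a contradiction.)

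I do not anticipate a genuine obstacle here — the statement is essentially a formal consequence of the preceding two results. The only point requiring a moment's care is the surjectivity of the restricted map, where one genuinely uses both the bijectivity of $*$ on the ambient $\R1$ and the order-$2$ property (equivalently, that $*$ is its own inverse) together with the invariance $C^*=C$; writing $c = (c^*)^*$ with $c^*\in C$ makes this transparent.
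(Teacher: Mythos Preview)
Your proposal is correct and follows essentially the same line as the paper: both use Lemma~\ref{lem * order 2} to bound the order by $2$ and then rule out the identity via $p^{\ell/2}$-semilinearity together with a choice of $\alpha\in F$ with $\alpha^{p^{\ell/2}}\ne\alpha$. You are simply more explicit than the paper in verifying that the restriction is a bijective $p^{\ell/2}$-semilinear ring map on $C$ (surjectivity via $c=(c^*)^*$), which the paper leaves implicit; your parenthetical ``more cleanly'' argument at the end is exactly the paper's.
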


\begin{proof}
By Lemma \ref{lem * order 2},
the restriction of ``$*$'' to $C$ is of order $1$ or $2$.
Let $c\in C$ with $c\neq 0$. If $c^*\neq c$,
then the restriction of ``$*$'' to $C$ is not the identity.
Otherwise,  $c^*=c$; there is an $\alpha\in F$
with $\alpha^{p^{\ell/2}}\!\neq \alpha$; so
$(\alpha c)^*=\alpha^{p^{\ell/2}}\!c^*=\alpha^{p^{\ell/2}}c\neq \alpha c$.
In conclusion, the restriction of ``$*$'' to $C$ is of order $2$.
\end{proof}

Note that ``$\gcd(n,q)=1$'' is assumed in this section.
Recall from Eq.\eqref{eq e_0 ...} that
$e_0=\frac{1}{n}\sum_{i=0}^{n-1}x^i$, $e_1,\dots, e_m$
are all primitive idempotents of~$\R1$.
Then the $p^h$-automorphism ``$*$'' permutes the primitive idempotents,
i.e., every $e_i^*$ is still a primitive idempotent.
Note that $e_0^*=e_0$. We can reorder the other primitive idempotents
\begin{align} \label{eq idempotents reorder}
 e_1, \dots, e_r,\; e_{r+1}, e_{r+1}^*, \dots, e_{r+s}, e_{r+s}^*,
\end{align}
such that $e_i^*=e_i$ for $i=1,\dots,r$,
$e_{r+j}^*\ne e_{r+j}$ but $e_{r+j}^{**}=e_{r+j}$ for $j=1,\dots,s$.
%and $1+r+2s=m$.

\begin{remark}\label{rk restriction *} \rm
{\rm (1)} For $i=0,1, \dots, r$, we get $(\R1 e_i)^{*}=\R1 e_i^*=\R1 e_i$. %$e_i^*=e_i$.
The restriction of the map ``$*$'' in Eq.(\ref{* on cyclic})
to $\R1 e_i$ induces a $p^{\ell/2}$-automorphism of $\R1 e_i$
of order 2 (cf. Corollary \ref{cor * order 2}) as follows
\begin{equation*} %\label{}
 *\!|_{\R1 e_i}:~ \R1 e_i \rightarrow \R1 e_i, ~~~ a\mapsto a^*.
\end{equation*}
Let $d_i=\dim_F\R1 e_i$. By Eq.(\ref{eq R=...}),
the ideal $\R1 e_i$ is a field extension over $F$ with
 cardinality $|\R1 e_i|=p^{d_i\ell}$.
So $a^*=a^{p^{d_i\ell/2}}$ for $a\in\R1 e_i$.

{\rm (2)} For $j=1, \dots, s$, $(\R1 e_{r+j})^{*}=\R1 e_{r+j}^{*}\neq \R1 e_{r+j}$,
and the restriction of the map ``$*$'' in Eq.(\ref{* on cyclic})
to $\R1 e_{r+j}$ induces a $p^{\ell/2}$-isomorphism as follows
\begin{equation*}
 *\!|_{\R1 e_{r+j}}:~
\R1 e_{r+j} \rightarrow \R1 e_{r+j}^{*}, ~~~ a\mapsto a^*.
\end{equation*}
Let $d_{r+j}=\dim_F\R1 e_{r+j}$,
then $\dim_F \R1 e_{r+j}^* =\dim_F \R1 e_{r+j}=d_{r+j}$.
Denote $\widehat e_{r+j}=e_{r+j}+e_{r+j}^*$,
by Eq.(\ref{eq R=...}) we have that
\begin{align*}
& \R1\widehat e_{r+j}=\R1 e_{r+j}\oplus\R1 e_{r+j}^*
=\{a'+a''\,|\,a'\in\R1 e_{r+j},\,a''\in\R1e_{r+j}^*\}, \\
& \dim_F \R1\widehat e_{r+j} =2d_{r+j}.
\end{align*}
It is easy to check that
$\widehat e_{r+j}^{\,*}=(e_{r+j}+e_{r+j}^*)^{*}=e_{r+j}^*+e_{r+j}=\widehat e_{r+j}$.
So $(\R1 \widehat e_{r+j})^{*}=\R1 \widehat e_{r+j}^{\;*}=\R1 \widehat e_{r+j}$.
The restriction of the map ``$*$'' in Eq.(\ref{* on cyclic})
to $\R1\widehat e_{r+j}$ induces a
$p^{\ell/2}$-automorphism of $\R1 \widehat e_{r+j}$ of order 2
(cf. Corollary \ref{cor * order 2}) as follows
\begin{equation} \label{eq *|_Re_i+j}
 *\!|_{\R1 \widehat e_{r+j}}:~
\R1 \widehat e_{r+j} \rightarrow \R1 \widehat e_{r+j}, ~~~
a'+a''\,\mapsto\, a''^*+a'^*.
\end{equation}
\end{remark}

For convenience, in the following we denote $\widehat e_i=e_i$ for $i=0,1,\dots, r$.
Then $\widehat e_{i}^{\,*}=\widehat e_{i}$ for $i=0,1,\dots, r+s$, and
\begin{align} \label{eq 1=e_0+...}
 1=\widehat e_0+\widehat e_1+\dots+\widehat e_{r+s}, ~~~~~
 \widehat e_i\widehat e_j
  =\begin{cases} \widehat e_i, & i=j;\\ 0, &i\ne j. \end{cases}
\end{align}
Thus, $\R1$ can be rewritten as  % , cf. Eq.(\ref{eq R=...}),
\begin{align} \label{eq R1=}
\R1=\R1 \widehat e_0\oplus \R1 \widehat e_1\oplus\dots\oplus\R1 \widehat e_r \oplus
 \R1 \widehat e_{r+1}\oplus\dots\oplus \R1\widehat e_{r+s}.
\end{align}
Any $a\in \R1$ is decomposed into
\begin{align}  \label{eq a=a_0+...}
 a=a_0+a_1+\cdots+ a_{r+s}, ~~~~~
 \mbox{where}~~ a_i=a\widehat e_i, ~ i=0,1,\dots, r+s.
\end{align}
The $a_i=a\widehat e_i$ is called the {\em $\widehat e_i$-component} of $a$.
For $a=\sum_{i=0}^{r+s}a_i, b=\sum_{i=0}^{r+s}b_i\in\R1$,
by Eq.\eqref{eq 1=e_0+...} it is trivial to check that
\begin{align} \label{eq ab=...}
ab=a_0b_0+a_1b_1+\cdots+a_{r+s}b_{r+s}.
\end{align}
Keep the notation in Remark~\ref{rk restriction *},
we have $d_{i}=\dim_F\R1 \widehat e_{i}$ for $i=0,1,\dots, r$;
and $2d_{i}=\dim_F\R1 \widehat e_{i}$ for $i=r+1,\dots, r+s$. Thus,
\begin{align} \label{eq d_0+...}
 \dim_{F}{\R1}%=\sum_{i=0} ^{r+s}\dim_F\R1 \widehat e_{i}
 = d_0+d_1+\dots+d_r+2d_{r+1}+\dots+2d_{r+s}=n,
\end{align}
where $d_0=1$ and $d_i\ge \mu(n)$ for $i=1,\dots,r+s$; cf. Eq.\eqref{eq mu(n)=...}.
%we have

\subsection{A class of Hermitian self-dual $2$-quasi-cyclic codes}
\label{H self-dual 2-Q}

Recall that the $2$-quasi-cyclic code $C_{1,g}$ of $\R1^2$
(defined in Remark~\ref{rk C_a})
is Hermitian self-dual if and only if $gg^*=-1$, see Lemma~\ref{lem C1,g}.
So we denote
\begin{align} \label{eq def D}
{\cal D}=\big\{\, g\;\big|\;g\in\R1,\; gg^*=-1\,\big\}.
\end{align}
Any $g\in{\cal D}$ corresponds to a Hermitian self-dual
$2$-quasi-cyclic code $C_{1,g}$.
%For $i=0,1,\dots,r+s$, we set
Set
\begin{align} \label{eq D_i=...}
{\cal D}_i=\{\,z\;|\; z\in\R1\widehat e_i,\;zz^*=-\widehat e_i\}, ~~~~~
i=0,1,\dots,r+s.
\end{align}
By Eq.\eqref{eq 1=e_0+...}, Eq.\eqref{eq a=a_0+...} and  Eq.\eqref{eq ab=...},
\begin{align}\label{eq g in D}
 g=g_0+g_1+\dots+g_{r+s}\in{\cal D}~\iff ~
g_i\in{\cal D}_i, ~ \forall\; i=0,1,\dots, r+s.
\end{align}

\begin{lemma} \label{lem |D_i|=...}
{\bf(1)}
If $0\le i\le r$ (i.e. $\widehat e_i=e_i=e_i^*$),
then $|{\cal D}_i|=p^{d_i\ell/2}+1$.

{\bf(2)}
If $r< i\le r+s$ (i.e. $e_i\neq e_i^*$ and $\widehat e_i=e_i+e_i^*$),
then $|{\cal D}_i|=p^{d_i\ell}-1$.
\end{lemma}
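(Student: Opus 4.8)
The plan is to prove the two parts separately, in each case exploiting the concrete description of $\R1\widehat e_i$ and of the operator ``$*$'' on it supplied by Remark~\ref{rk restriction *}. For part (1), where $\widehat e_i=e_i=e_i^*$, the ideal $K:=\R1 e_i$ is a field with $|K|=p^{d_i\ell}$, and Remark~\ref{rk restriction *}(1) tells us that the restriction of ``$*$'' to $K$ is the order-$2$ automorphism $z\mapsto z^{p^{d_i\ell/2}}$; its fixed set is the subfield $K_0$ of roots of $X^{p^{d_i\ell/2}}-X$, so $|K_0|=p^{d_i\ell/2}$. First I would note that $-\widehat e_i$ is just the element $-1$ of $K$, which is nonzero and lies in the prime subfield, hence in $K_0$. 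Then ${\cal D}_i=\{z\in K\mid z\cdot z^{p^{d_i\ell/2}}=-1\}$ is exactly the fiber over $-1$ of the norm map $N_{K/K_0}\colon K^\times\to K_0^\times$, $z\mapsto z^{1+p^{d_i\ell/2}}$. Since this norm map is a surjective group homomorphism, every fiber has the same size $|K^\times|/|K_0^\times|=(p^{d_i\ell}-1)/(p^{d_i\ell/2}-1)=p^{d_i\ell/2}+1$, giving $|{\cal D}_i|=p^{d_i\ell/2}+1$. (Equivalently one counts the solutions of $z^m=-1$ in the cyclic group $K^\times$ of order $p^{d_i\ell}-1$ with $m=p^{d_i\ell/2}+1$: since $m\mid p^{d_i\ell}-1$ and $-1$ is an $m$-th power in $K^\times$, there are exactly $m$ of them.)

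For part (2), where $e_i\neq e_i^*$ and $\widehat e_i=e_i+e_i^*$, I would use the decomposition $\R1\widehat e_i=\R1 e_i\oplus\R1 e_i^*$ together with Eq.~\eqref{eq *|_Re_i+j}, which says that ``$*$'' interchanges the two summands. Writing $z=z'+z''$ with $z'\in\R1 e_i$ and $z''\in\R1 e_i^*$, we get $z^*=(z'')^{*}+(z')^{*}$ with $(z'')^{*}\in\R1 e_i$ and $(z')^{*}\in\R1 e_i^*$; orthogonality of the idempotents kills the cross terms, $z'(z')^{*}=z''(z'')^{*}=0$, so $zz^*=z'(z'')^{*}+z''(z')^{*}$ with $z'(z'')^{*}\in\R1 e_i$ and $z''(z')^{*}\in\R1 e_i^*$. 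Comparing the $e_i$- and $e_i^*$-components, the condition $zz^*=-\widehat e_i$ reduces to the single equation $z'(z'')^{*}=-e_i$ (applying ``$*$'' to it recovers the companion equation $z''(z')^{*}=-e_i^*$ automatically, since $**=\mathrm{id}$ on $\R1\widehat e_i$ and $(-e_i)^*=-e_i^*$). Since $\R1 e_i$ is a field with identity $e_i$ and $-e_i\neq0$, this forces $z'\neq0$, and for each of the $p^{d_i\ell}-1$ nonzero choices of $z'$ it determines $(z'')^{*}$, hence $z''$, uniquely (because ``$*$'' restricts to a bijection $\R1 e_i^*\to\R1 e_i$). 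Therefore $|{\cal D}_i|=p^{d_i\ell}-1$.

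I do not expect a genuine obstacle: once the structural facts of Remark~\ref{rk restriction *} are invoked, each part is a short bookkeeping argument. The two points that want care are (a) recognizing in part (1) that ``$*$'' restricted to a self-conjugate field component is literally the power map $z\mapsto z^{p^{d_i\ell/2}}$, so that ${\cal D}_i$ is a norm fiber whose size is forced by the homomorphism property; and (b) in part (2), keeping straight which of $\R1 e_i$, $\R1 e_i^*$ each product lands in and noting that $-\widehat e_i\neq0$, so $z'$ must be a unit of $\R1 e_i$. I would present (a), together with the norm-fiber count, as the substantive content of the proof.
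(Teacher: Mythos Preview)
Your proof is correct, and in part~(2) it follows essentially the same route as the paper: write $z=z'+z''$ with $z'\in\R1 e_i$, $z''\in\R1 e_i^*$, expand $zz^*$ using the orthogonality of the idempotents, reduce to the single equation $z'(z'')^*=-e_i$ in the field $\R1 e_i$, and conclude that $z'$ may be any nonzero element while $z''$ is then forced. Your observation that the companion equation $z''(z')^*=-e_i^*$ follows automatically from the first by applying ``$*$'' is a small streamlining the paper does not make explicit.

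In part~(1), however, you take a genuinely different route. The paper translates the condition $zz^*=-\widehat e_i$ into the polynomial equation $z^{p^{d_i\ell/2}+1}+1=0$ and then counts its roots in $F_{p^{d_i\ell}}$ by a case split on the parity of $p$: for $p=2$ it identifies the roots with a cyclic subgroup of order $p^{d_i\ell/2}+1$ in $F_{p^{d_i\ell}}^\times$, while for odd $p$ it factors $X^{2(p^{d_i\ell/2}+1)}-1$ and locates the roots inside a subgroup of order $2(p^{d_i\ell/2}+1)$. Your argument instead recognizes $z\mapsto zz^*=z^{1+p^{d_i\ell/2}}$ as the norm map $N_{K/K_0}$ for the degree-two extension $K/K_0$ and invokes its surjectivity to conclude that the fiber over $-1\in K_0^\times$ has size $|K^\times|/|K_0^\times|=p^{d_i\ell/2}+1$. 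This is cleaner and characteristic-free, at the cost of citing the surjectivity of the norm between finite fields as a black box; the paper's argument is more hands-on but requires the $p=2$ versus $p$ odd split.
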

\begin{proof}
{\rm(1)}.
% $0\le i\le r$, i.e. $\widehat e_i=e_i=e_i^*$.
By Remark~\ref{rk restriction *}(1), $\R1\widehat e_i$ is the field with
$|\R1\widehat e_i|=p^{d_i\ell}$. For $z\in\R1\widehat e_i$,
$zz^*=zz^{p^{d_i\ell/2}}\!=z^{ p^{d_i\ell/2}+1}$. So,
$zz^*=-\widehat e_i$ if and only if
${z^{p^{d_i\ell/2}+1}+\widehat e_i=0}$.
Hence~$|{\cal D}_i|$ equals the number of the roots in $F_{p^{d_i\ell}}$
of the $F_{p^{d_i\ell}}$-polynomial $X^{p^{d_i\ell/2}+1}+1$,
where $F_{p^{d_i\ell}}$ denotes the finite field with cardinality $p^{d_i\ell}$.

If $p=2$, then $1=-1$. The order of the multiplicative group
$F_{p^{d_i\ell}}^\times$ is
\begin{align}\label{eq F times}
 \big|F_{p^{d_i\ell}}^\times\big|=p^{d_i\ell}-1
 =\big(p^{d_i\ell/2}-1\big)\big(p^{d_i\ell/2}+1\big).
\end{align}
Thus the multiplication group $F_{p^{d_i\ell}}^\times$
has a subgroup $H$ of order $p^{d_i\ell/2}+1$,
and all elements of $H$ are roots of
the polynomial $X^{p^{d_i\ell/2}+1}+1$.
Hence $|{\cal D}_i|=p^{d_i\ell/2}+1$.

Otherwise $p$ is odd, then $p^{d_i\ell/2}-1$ is even.
By Eq.(\ref{eq F times}),
we get that $2(p^{d_i\ell/2}+1)\,\big|\,|F_{p^{d_i\ell}}^\times|$.
So $F_{p^{d_i\ell}}^\times$ has a subgroup
$H$ of order $2(p^{d_i\ell/2}+1)$. The elements of $H$ are
just all roots of the polynomial $X^{2(p^{d_i\ell/2}+1)}-1$.
Since
$$
 X^{2(p^{d_i\ell/2}+1)}-1=(X^{p^{d_i\ell/2}+1}-1)(X^{p^{d_i\ell/2}+1}+1),
$$
all roots of the polynomial $X^{p^{d_i\ell/2}+1}+1$ are inside~$F_{p^{d_i\ell}}$.
Thus, $|{\cal D}_i|=p^{d_i\ell/2}+1$.

{\rm(2)}.
For $r< i\leq r+s$, $e_i\neq e_i^*$ and $\widehat e_i=e_i+e_i^*$.
By Remark~\ref{rk restriction *}(2),
 $\R1\widehat e_i=\R1 e_i\oplus\R1 e_i^*$.
For $z=z'+z''\in\R1 e_i\oplus\R1 e_i^*$
with $z'\in\R1 e_i$ and $z''\in\R1 e_i^*$,
by Eq.\eqref{eq *|_Re_i+j} we have $z^*=z''^*+z'^*$, and so
$zz^*=(z'+z'')(z''^*+z'^*)=z'z''^*+z'^*z''$.
It follows that: $zz^*=-\widehat e_i=-e_i-e_i^*$ if and only if
$z'z''^*=-e_i$ and $z'^*z''=-e_i^*$. We take
$z'\in(\R1 e_i)^\times$, then $z''^*=-z'^{-1}$
(where $z'^{-1}$ is the inverse of $z'$ in $\R1 e_i$, not in $\R1$),
hence $z''=z''^{**}=-(z'^*)^{-1}$
is uniquely determined. Thus,
\begin{align} \label{eq zz^*=...}
 zz^*=-\widehat e_i ~\iff~
 z = z' -(z'^*)^{-1}~~ \mbox{for a}~ z'\in(\R1 e_i)^\times.
\end{align}
Since $\R1 e_i$ is a field with cardinality $p^{d_i\ell}$,
 $|{\cal D}_i|=|(\R1 e_i)^\times|=p^{d_i\ell}-1$.
\end{proof}

For any subset $\omega\subseteq\{0,1,\dots,r\!+\!s\}$,
refining the notation in Eq.\eqref{eq R1=} and in Remark~\ref{rk restriction *},
we define an ideal $A_\omega$ of $\R1$ and an integer~$d_{\omega}$ as follows
\begin{align*} %\label{eq A_omega}
\textstyle
A_{\omega}=\bigoplus\limits_{i\in\omega}\R1 \widehat e_i, ~~~~
d_\omega=\dim_F A_\omega.
\end{align*}
Obviously, $\omega$ can be written as a disjoint union:
\begin{align*}
\omega=\omega'\cup\omega'',~~~
 \mbox{where }
 \omega'\!=\!\omega\cap\{0,1,\dots, r\},~
 \omega''\!=\!\omega\cap\{r\!+\!1,\dots, r\!+\!s\}.
\end{align*}
Similarly to Eq.\eqref{eq d_0+...}, we have
\begin{align} \label{eq A_omega=}
 A_\omega=A_{\omega'}\oplus A_{\omega''}, ~~~~
 %\dim_F A_{\omega}
d_\omega=\sum_{i\in\omega'} d_i + 2\sum_{i\in\omega''} d_i .
\end{align}

It is known that (\cite[Lemma 4.7]{LF22})
for integers $ k_{1}, \dots, k_{v}$,
if $k_i\geq \log_{p}(v)$ for $i=1, \dots, v$,
then
\begin{align} \label{eq LF22 4.7}
\begin{array}{l}
 (p^{k_1}-1) \dots (p^{k_v}-1)\geq p^{k_1+\dots+k_v -2};\\[3pt]
 (p^{k_1}+1) \dots (p^{k_v}+1)\leq p^{k_1+\dots+k_v +2}.
\end{array}
\end{align}

\begin{lemma} \label{lem |D|<}
For a subset $\omega\subseteq\{0,1,\dots,r+s\}$, if $\mu(n)\geq\log_p(n)$
(where $\mu(n)$ is defined in Eq.\eqref{eq mu(n)=...}), then
\begin{align*} %\label{eq |D|<}
 p^{-2}p^{d_\omega\ell/2} \leq
\prod_{i\in\omega}|{\cal D}_i| \leq  p^3 p^{d_\omega\ell/2}.
\end{align*}
\end{lemma}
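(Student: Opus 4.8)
The plan is to use the decomposition $\omega=\omega'\cup\omega''$ of Eq.\eqref{eq A_omega=}, where $\omega'=\omega\cap\{0,1,\dots,r\}$ and $\omega''=\omega\cap\{r+1,\dots,r+s\}$, and to estimate the two resulting subproducts separately. By Lemma~\ref{lem |D_i|=...},
\begin{align*}
 \prod_{i\in\omega}|{\cal D}_i|
 =\prod_{i\in\omega'}\big(p^{d_i\ell/2}+1\big)\cdot\prod_{i\in\omega''}\big(p^{d_i\ell}-1\big),
\end{align*}
while Eq.\eqref{eq A_omega=} gives $d_\omega\ell/2=\sum_{i\in\omega'}d_i\ell/2+\sum_{i\in\omega''}d_i\ell$. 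Hence it suffices to bound the first subproduct between $p^{\sum_{i\in\omega'}d_i\ell/2}$ and $p^{\,3+\sum_{i\in\omega'}d_i\ell/2}$, and the second between $p^{\,-2+\sum_{i\in\omega''}d_i\ell}$ and $p^{\sum_{i\in\omega''}d_i\ell}$; multiplying the two estimates then yields the claimed inequalities.

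For the lower bound, each factor over $\omega'$ satisfies $p^{d_i\ell/2}+1>p^{d_i\ell/2}$, so that subproduct is already bounded below by $p^{\sum_{i\in\omega'}d_i\ell/2}$ with no loss. For the factors over $\omega''$ I would apply the first inequality of Eq.\eqref{eq LF22 4.7} with $v=|\omega''|$ and exponents $k_i=d_i\ell$; its hypothesis $k_i\ge\log_p(v)$ holds because $d_i\ge\mu(n)$ for every $i\ge1$, $\ell\ge2$ (recall $\ell$ is even), $\mu(n)\ge\log_p(n)$ by assumption, and $v=|\omega''|\le n$, since $r+s+1\le m+1\le n$ (the $m+1$ primitive idempotents of $\R1$ have $F$-dimensions summing to $n$). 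This gives $\prod_{i\in\omega''}(p^{d_i\ell}-1)\ge p^{\,-2+\sum_{i\in\omega''}d_i\ell}$, and the lower bound follows.

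For the upper bound the $\omega''$ part is immediate from $p^{d_i\ell}-1<p^{d_i\ell}$. The $\omega'$ part is the one delicate point: the exceptional index $i=0$ has $d_0=1$, so its exponent $k_0=\ell/2$ need not satisfy $k_0\ge\log_p|\omega'|$, and the second inequality of Eq.\eqref{eq LF22 4.7} cannot be applied to all of $\omega'$ at once. I would instead peel off the factor at $i=0$ when $0\in\omega$, using $p^{\ell/2}+1\le p\cdot p^{\ell/2}$, and apply that inequality only to the factors over $\omega'\setminus\{0\}$, where every exponent $k_i=d_i\ell/2\ge\mu(n)\ge\log_p(n)\ge\log_p|\omega'\setminus\{0\}|$. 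This yields $\prod_{i\in\omega'}(p^{d_i\ell/2}+1)\le p^{\,1+2+\sum_{i\in\omega'}d_i\ell/2}$, and combining with the $\omega''$ estimate gives $\prod_{i\in\omega}|{\cal D}_i|<p^{3}\,p^{d_\omega\ell/2}$.

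The conceptual content is light and the work lies entirely in the bookkeeping. The only genuine subtlety is the separate treatment of the idempotent $\widehat e_0=e_0$ with $d_0=1$ in the upper bound, since the hypothesis $\mu(n)\ge\log_p(n)$ controls the dimensions $d_i$ only for $i\ge1$. Apart from that, each invocation of Eq.\eqref{eq LF22 4.7} requires only the easy observation that its parameter $v$ is at most $n$, which follows from the bound on the number of primitive idempotents noted above.
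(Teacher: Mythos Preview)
Your proposal is correct and follows essentially the same approach as the paper: both arguments rest on Lemma~\ref{lem |D_i|=...}, the product inequality Eq.\eqref{eq LF22 4.7}, and the observation that the index $i=0$ (with $d_0=1$) must be peeled off separately for the upper bound. The only difference is organizational: the paper first splits on whether $0\in\omega$ and then applies Eq.\eqref{eq LF22 4.7} uniformly to all of $\omega$ (after replacing each $p^{d_i\ell/2}+1$ by $p^{d_i\ell/2}-1$ for the lower bound, and each $p^{d_i\ell}-1$ by $p^{d_i\ell}+1$ for the upper), whereas you split along $\omega'$ versus $\omega''$ and use the trivial bounds $p^k+1>p^k$ and $p^k-1<p^k$ on one of the two subproducts, invoking Eq.\eqref{eq LF22 4.7} only on the other---this is a slightly tidier bookkeeping but the same idea.
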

\begin{proof}
For $1\le i\le r+s$, we deduce that $d_i\ge\mu(n)\ge\log_p(r+s)$ since $n\geq r+s$.
By Lemma~\ref{lem |D_i|=...} we have that
\begin{align*}
\prod_{i\in\omega}|{\cal D}_i|
&=\prod_{i\in\omega'}|{\cal D}_i|\cdot \prod_{i\in\omega''}|{\cal D}_i|
=\prod\limits_{i\in\omega'}(p^{d_i\ell/2}+1)
\cdot\prod\limits_{i\in\omega''}(p^{d_i\ell}-1).
\end{align*}
If $0\notin\omega$, using Eq.(\ref{eq LF22 4.7}) we get
\begin{align*}
\prod_{i\in\omega}|{\cal D}_i|&\ge\prod\limits_{i\in\omega'}(p^{d_i\ell/2}-1)
\cdot\prod\limits_{i\in\omega''}(p^{d_i\ell}-1)
\ge
p^{\big(\sum_{i\in\omega'} d_i\ell/2+\sum_{i\in\omega''}d_i\ell\big)-2}\\
&= p^{-2}p^{\big(\sum_{i\in\omega'} d_i
 +\sum_{i\in\omega''}2d_i \big)\ell/2}=p^{-2}p^{d_{\omega}\ell/2},
\end{align*}
where the last equality follows by Eq.\eqref{eq A_omega=};
and
\begin{align*}
\prod_{i\in\omega}|{\cal D}_i|&\le\prod\limits_{i\in\omega'}(p^{d_i\ell/2}+1)
\cdot\prod\limits_{i\in\omega''}(p^{d_i\ell}+1)
\le
p^{\big(\sum_{i\in\omega'} d_i\ell/2+\sum_{i\in\omega''}d_i\ell\big)+2}\\
&= p^{2}p^{\big(\sum_{i\in\omega'} d_i
 +\sum_{i\in\omega''}2d_i \big)\ell/2}
=p^{2}p^{d_{\omega}\ell/2}.
\end{align*}
That is, if $0\notin\omega$ then
\begin{equation}\label{eq leq D leq}
p^{-2}p^{d_{\omega}\ell/2}  \leq\prod_{i\in\omega}|{\cal D}_i|
 \leq p^{2}p^{d_{\omega}\ell/2}.
\end{equation}
If $0\in\omega$, we set $\tilde\omega=\omega\setminus\{0\}$,
and so $d_{\omega}=d_{0}+d_{\tilde\omega}$,  where $d_0=1$.
By~Lemma~\ref{lem |D_i|=...}(1),
we see that $|{\cal D}_0|=p^{d_0\ell/2}+1$, and
\begin{align*}
\prod_{i\in\omega}|{\cal D}_i|
& =(p^{d_0\ell/2}+1)\prod\limits_{i\in\tilde\omega}|{\cal D}_i|.
\end{align*}
It follows by Eq.(\ref{eq leq D leq}) that
\begin{align*}
\prod_{i\in\omega}|{\cal D}_i|
\ge p^{d_0\ell/2}\cdot p^{-2}p^{d_{\tilde\omega}\ell/2}
&= p^{-2}p^{\big(d_0+d_{\tilde\omega}\big)\ell/2}
 =p^{-2}p^{d_{\omega}\ell/2};
\end{align*}
and that
\begin{align*}
\prod_{i\in\omega}|{\cal D}_i|
\le p^{1+ d_0\ell/2}\cdot p^{2}p^{d_{\tilde\omega}\ell/2}
 = p^{3}p^{\big(d_0+d_{\tilde\omega}\big)\ell/2}
 =p^{3}p^{d_{\omega}\ell/2}.
\end{align*}
Thus the lemma holds.
\end{proof}

By Eq.\eqref{eq g in D}, $|{\cal D}|=\prod_{i=0}^{r+s}|{\cal D}_i|$.
We have the following at once.

\begin{corollary} \label{cor |D|<}
If $\mu(n)\geq\log_p(n)$
(where $\mu(n)$ is defined in Eq.\eqref{eq mu(n)=...}), then
$$
 p^{-2}p^{n\ell/2} \leq |{\cal D}| \leq  p^3 p^{n\ell/2}.
\eqno\qed
$$
\end{corollary}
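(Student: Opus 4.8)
The plan is to obtain this corollary as an immediate bookkeeping consequence of Lemma~\ref{lem |D|<}, applied to the full index set. First I would recall the product formula $|{\cal D}|=\prod_{i=0}^{r+s}|{\cal D}_i|$, which was already noted just before the statement: by Eq.\eqref{eq g in D} a decomposition $g=g_0+g_1+\dots+g_{r+s}$ (with $g_i$ the $\widehat e_i$-component of $g$) lies in ${\cal D}$ exactly when each $g_i\in{\cal D}_i$, and since the components of an element of $\R1$ in the decomposition Eq.\eqref{eq R1=} can be chosen independently, the cardinalities multiply.

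Next I would invoke Lemma~\ref{lem |D|<} with $\omega=\{0,1,\dots,r+s\}$. For this choice $A_\omega=\bigoplus_{i=0}^{r+s}\R1\widehat e_i=\R1$ by Eq.\eqref{eq R1=}, so $d_\omega=\dim_F\R1=n$ by the dimension count Eq.\eqref{eq d_0+...} (equivalently, $d_\omega=\sum_{i=0}^r d_i+2\sum_{i=r+1}^{r+s}d_i=n$). The hypothesis of the present corollary, $\mu(n)\geq\log_p(n)$, is precisely the hypothesis of Lemma~\ref{lem |D|<}, so the lemma applies and gives $p^{-2}p^{n\ell/2}\leq\prod_{i\in\omega}|{\cal D}_i|\leq p^3p^{n\ell/2}$. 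Combining with the product formula of the previous paragraph yields $p^{-2}p^{n\ell/2}\leq|{\cal D}|\leq p^3p^{n\ell/2}$, which is the assertion.

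There is essentially no obstacle: Lemma~\ref{lem |D|<} has already packaged the entire estimate, and the only point to verify is that the full index set indeed gives $d_\omega=n$, which is immediate from Eq.\eqref{eq R1=} and Eq.\eqref{eq d_0+...}. (If one preferred a self-contained derivation, the work would instead be to feed Lemma~\ref{lem |D_i|=...} into the estimate Eq.\eqref{eq LF22 4.7}, handling the fixed-idempotent factors $p^{d_i\ell/2}+1$ for $1\le i\le r$, the swapped-pair factors $p^{d_i\ell}-1$ for $r<i\le r+s$, and the $i=0$ factor $p^{\ell/2}+1$ separately; but since Lemma~\ref{lem |D|<} is exactly that computation with a general $\omega$, quoting it is the cleanest route.)
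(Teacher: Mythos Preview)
Your proposal is correct and matches the paper's approach exactly: the paper notes just before the corollary that $|{\cal D}|=\prod_{i=0}^{r+s}|{\cal D}_i|$ and then states the corollary as an immediate consequence (``at once'') of Lemma~\ref{lem |D|<} applied to the full index set $\omega=\{0,1,\dots,r+s\}$, for which $d_\omega=n$.
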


Any element $a\in\R1$ can be written as
$a=a_{0}\hat e_{0}+a_{1}\hat e_{1}+\dots+a_{r+s}\hat e_{r+s}$
(cf.~Eq.\eqref{eq a=a_0+...}).
We denote
\begin{align*}
\omega_a=\big\{ i\,\big|\, a_i
 =a\widehat e_i\ne 0\big\}\subseteq\{0,1,\dots,r+s\};~~
 d_a=d_{\omega_a}=\dim_F A_{\omega_a}.%~({\rm cf.}~Eq.(\ref{eq A_omega=})).
\end{align*}
Obviously $\R1 a\subseteq A_{\omega_a}$. For $a,b\in\R1$ it is easy to check that
\begin{align}\label{eq Ra = Rb}
\R1 a =\R1 b ~~ \implies ~~ A_{\omega_a}=A_{\omega_b}~
(\mbox{equivalently, } \omega_a=\omega_b);
\end{align}
but the converse is not true in general.

%For $a,b\in\R1$, i.e, $(a,b)\in\R1^2=\R1\times\R1$, denote
%\begin{align}
% {\cal D}_{(a,b)}=\big\{ g\:\big|\; g\in{\cal D},\;(a,b)\in C_{1,g}\big\}.
%\end{align}

\begin{lemma} \label{lem D_a,b}
For $(a,b)\in\R1^2$, we denote
\begin{align}\label{eq Da,b}
{\cal D}_{(a,b)}=\big\{ g\:\big|\; g\in{\cal D},\;(a,b)\in C_{1,g}\big\},
\end{align}
where $C_{1,g}$ is defined in Remark~\ref{rk C_a}
and ${\cal D}$ is defined in Eq.(\ref{eq def D}).

{\bf(1)}
If ${\cal D}_{(a,b)}\ne\emptyset$, then $\R1 a=\R1 b$, hence $\omega_a=\omega_b$.

{\bf(2)} If $\mu(n)\ge\log_p(n)$, then
$
 |{\cal D}_{(a,b)}| \le p^5p^{(n-d_a)\ell/2}.
$
\end{lemma}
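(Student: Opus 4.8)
The plan is to prove (1) first, since it is short, and then build the count in (2) on top of it. For (1), I would take any $g\in{\cal D}_{(a,b)}$: as $C_{1,g}$ is the $\R1$-submodule of $\R1^2$ generated by $(1,g)$, we have $C_{1,g}=\{(h,hg)\mid h\in\R1\}$, so $(a,b)\in C_{1,g}$ forces $b=ag$ (take $h=a$). Moreover $gg^*=-1$ exhibits $-g^*$ as an inverse of $g$, so $g$ is a unit of $\R1$ and hence $\R1 b=\R1(ag)=\R1 a$; then $\omega_a=\omega_b$ by Eq.\eqref{eq Ra = Rb}. I would also point out that one cannot deduce (2) merely by solving $g=a^{-1}b$, because $a$ itself need not be a unit of $\R1$, and this is exactly what forces the componentwise argument below.

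For (2), I would first discard the trivial case ${\cal D}_{(a,b)}=\emptyset$ and then assume ${\cal D}_{(a,b)}\neq\emptyset$, so that $\omega_a=\omega_b$ by part (1). Next I would pass to the idempotent decomposition $\R1=\bigoplus_{i=0}^{r+s}\R1\widehat e_i$ (Eq.\eqref{eq R1=}): writing $a=\sum_i a_i$, $b=\sum_i b_i$, $g=\sum_i g_i$ with $a_i=a\widehat e_i$ and so on, and combining Eq.\eqref{eq g in D} with the componentwise product Eq.\eqref{eq ab=...}, one gets that $g\in{\cal D}_{(a,b)}$ if and only if $g_i\in{\cal D}_i$ and $a_ig_i=b_i$ for every $i$; hence $|{\cal D}_{(a,b)}|=\prod_{i=0}^{r+s}N_i$ with $N_i=\#\{g_i\in{\cal D}_i\mid a_ig_i=b_i\}$. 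For $i\notin\omega_a$ we have $a_i=b_i=0$ (this is where $\omega_a=\omega_b$ enters), so the condition is vacuous and $N_i=|{\cal D}_i|$. For $i\in\omega_a$ we have $a_i\neq0$, and I would show $N_i\le1$ by proving that $g_i\mapsto a_ig_i$ is injective on ${\cal D}_i$: when $i\le r$ this is immediate since $\R1\widehat e_i$ is a field in which $a_i$ is a unit; when $i>r$ I would use the parametrization $g_i=z'-(z'^*)^{-1}$, $z'\in(\R1 e_i)^\times$, from Eq.\eqref{eq zz^*=...}, split the equation $a_ig_i=a_ig_i'$ into its $\R1 e_i$- and $\R1 e_i^*$-components, and cancel whichever component of $a_i$ is nonzero (applying, in the $\R1 e_i^*$-component, the order-$2$ automorphism ``$*$'' of Lemma~\ref{lem * order 2}) to conclude that the parametrizing elements coincide, hence $g_i=g_i'$.

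Putting the components together gives $|{\cal D}_{(a,b)}|\le\prod_{i\in\omega^c}|{\cal D}_i|$ with $\omega^c=\{0,1,\dots,r+s\}\setminus\omega_a$. Since the $\widehat e_i$-contributions indexed by $\omega_a$ and by $\omega^c$ partition the total dimension $n$ (Eq.\eqref{eq d_0+...}), one has $d_{\omega^c}=n-d_a$, and then Lemma~\ref{lem |D|<} applied to $\omega=\omega^c$ (valid because $\mu(n)\ge\log_p(n)$) yields $\prod_{i\in\omega^c}|{\cal D}_i|\le p^{3}p^{d_{\omega^c}\ell/2}=p^{3}p^{(n-d_a)\ell/2}\le p^{5}p^{(n-d_a)\ell/2}$, which is the claim. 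I expect the main obstacle to be precisely the injectivity of $g_i\mapsto a_ig_i$ on ${\cal D}_i$ for the indices $i>r$, where $\R1\widehat e_i\cong\R1 e_i\times\R1 e_i^*$ is a product of two fields and $a_i$ may be a genuine zero-divisor, so one has to exploit the explicit shape of ${\cal D}_i$ instead of cancelling $a_i$ directly; the rest is bookkeeping with the idempotent decomposition together with the counting estimate already recorded in Lemma~\ref{lem |D|<}.
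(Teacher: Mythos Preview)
Your proposal is correct and follows essentially the same approach as the paper: part~(1) is identical, and part~(2) proceeds by the same componentwise decomposition, the same case analysis showing $N_i\le 1$ for $i\in\omega_a$ (including the parametrization Eq.\eqref{eq zz^*=...} for $i>r$), and the same appeal to Lemma~\ref{lem |D|<}. The only cosmetic difference is in the final estimate: you apply the upper bound of Lemma~\ref{lem |D|<} directly to $\omega^c$ (getting $p^3p^{(n-d_a)\ell/2}$, which you then relax to $p^5$), whereas the paper writes $\prod_{i\notin\omega_a}|{\cal D}_i|=|{\cal D}|/\prod_{i\in\omega_a}|{\cal D}_i|$ and combines the upper bound of Corollary~\ref{cor |D|<} with the lower bound of Lemma~\ref{lem |D|<}, which is how the factor $p^5=p^3\cdot p^2$ actually arises.
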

\begin{proof}
{\rm(1)}.~Assume that $g\in{\cal D}_{(a,b)}$, then $gg^*=-1$ (cf. Eq.\eqref{eq def D}) and
$u(1,g)=(a,b)$ for an element $u\in\R1$.
The former equality implies that $g$ is invertible with $g^{-1}=-g^*$.
The latter equality implies that $a=u$ and $b=ug=ag$.
We deduce that $\R1 b=\R1 ag=\R1 ga=\R1 a$ since $g$ is invertible.

{\rm(2)}. By Eq.\eqref{eq a=a_0+...}, we write
$a=a_0+\dots+a_{r+s}$, $b=b_0+\dots+b_{r+s}$ and
$g=g_0+\dots+g_{r+s}$ with $g_i\in{\cal D}_i$.
%As $g$ is invertible, each $g_i$ is invertible in $\R1\widehat e_{i}$.
By Eq.\eqref{eq ab=...},
$$
ag=b~~\iff a_ig_i=b_i~ \mbox{for} ~~ i=0,1,\dots,r+s.
$$
We count the number of such $g_i$ in two cases.

{\bf Case 1}: $a_i=0$, i.e., $i\notin\omega_a$.
Since $\R1 a=\R1 b$, we have that $\R1 b_i=\R1 a_i=0$, hence $b_i=0$.
Then any $g_i\in{\cal D}_i$ satisfies that $a_ig_i=b_i$.

{\bf Case 2}: $a_i\ne 0$, i.e., $i\in\omega_a$. There are two subcases.

Subcase 2.1: $0\le i\le r$. Since $\R1 \widehat e_i$ is a field,
 $a_i$ is invertible in $\R1 \widehat e_i$,
and so there is a unique $z\in \R1 \widehat e_i$ such that
$a_iz=b_i$. We see that there is at most one $g_i\in{\cal D}_i$ such that
$a_ig_i=b_i$.

Subcase 2.2: $r < i\le r+s$.
By Remark~\ref{rk restriction *}(2),
$\R1\widehat e_i=\R1 e_i\oplus\R1 e_i^*$.
We can write $a_i=\alpha'+\alpha''$ and $b_i=\beta'+\beta''$, where
$\alpha',\beta'\in\R1 e_i$ and $\alpha'', \beta''\in\R1 e_i^*$.
Since $a_i\ne 0$, at least one of $\alpha'$ and $\alpha''$ is nonzero.
We may assume that $\alpha'\ne 0$.
Take $g_i=z' -(z'^*)^{-1}\in{\cal D}_i$ as in Eq.\eqref{eq zz^*=...},
then $a_i g_i=b_i$ implies that
$\alpha'z'=\beta'$ in $\R1 e_i$. Hence $z'=\alpha'^{-1}\beta'$ is uniquely determined.

In a word, if $a_i\ne 0$ (i.e., $i\in\omega_a$), then there is at most one
$g_i\in{\cal D}_i$ such that $a_ig_i=b_i$. Thus
\begin{align*}
|{\cal D}_{(a,b)}|\le \prod_{i\notin\omega_a} |{\cal D}_i|
=\prod_{i=0}^{r+s}|{\cal D}_i|\Big/\prod_{i\in\omega_a} |{\cal D}_i|
=|{\cal D}|\Big/\prod_{i\in\omega_a} |{\cal D}_i|.
\end{align*}
By Lemma~\ref{lem |D|<} and Corollary~\ref{cor |D|<},
\begin{align*}
|{\cal D}_{(a,b)}|
\leq p^3 p^{n\ell/2}\Big/p^{-2} p^{d_{\omega_a}\ell/2}
=p^5p^{(n-d_{\omega_a})\ell/2}.
\end{align*}
We are done.
\end{proof}

\subsection{Hermitian self-dual $2$-quasi-cyclic codes are asymptotically good}
\label{H self-dual good}

Keep the notation in Subsection~\ref{H self-dual 2-Q}.
 From now on, let $\delta$ be a real number satisfying that
(where $h_q(\delta)$ is the $q$-entropy function defined in Eq.\eqref{eq def h_q})
\begin{align}\label{eq delta in...}
	\delta\in(0,1-q^{-1}) \quad\mbox{and}\quad
    h_q(\delta)< 1/4.
\end{align}
And set
\begin{align} \label{eq D<delta}
\begin{array}{l}
{\cal D}^{\le\delta}
 =\big\{\,g\;\big|\;g\in{\cal D},\; \Delta(C_{1,g})\le\delta\,\big\};\\[3pt]
\Omega_d=\big\{ A~|~ A ~{\rm is~ an ~ideal ~of ~}\R1 {\rm ~with ~} \dim_{F}A=d\big\}.
\end{array}
\end{align}
Recall that  $\mu(n)$ is defined in Eq.\eqref{eq mu(n)=...},
$(A\times A)^{\le\delta}$ is defined in Eq.\eqref{eq def C^<=} and
${\cal D}_{(a,b)}$ is defined in Eq.(\ref{eq Da,b}).

\begin{lemma} \label{lem D leq delta} %\label{rk D leq delta}
~ ${\cal D}^{\le\delta} \subseteq
 \bigcup_{d=\mu(n)}^{n}~\bigcup_{A\in\Omega_{d}}
 ~\bigcup_{(a,b)\in(A\times A)^{\le\delta}}{\cal D}_{(a,b)}$.
\end{lemma}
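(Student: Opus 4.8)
The plan is to start from an arbitrary $g\in{\cal D}^{\le\delta}$ and locate it inside the right-hand union by reading off a suitable triple $\big(d,A,(a,b)\big)$ from a short nonzero codeword of $C_{1,g}$. Since $g\in{\cal D}$ we have $gg^*=-1$, so $g$ is a unit of $\R1$ (with $g^{-1}=-g^*$); and since $\Delta(C_{1,g})\le\delta$ and $C_{1,g}$ has length $2n$, there is a nonzero codeword $(a,b)\in C_{1,g}$ with ${\rm w}(a,b)\le 2n\delta$. As $C_{1,g}$ is generated as an $\R1$-module by $(1,g)$, I would write $(a,b)=u(1,g)=(u,ug)$, so that $a=u$ and $b=ag$; note that $a\ne0$, since $a=0$ would force $b=ug=0$.

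Next I would take $A=\R1 a$ and $d=\dim_F A$. Because $g$ is a unit, $\R1 b=\R1(ag)=\R1 a=A$, so $a$ and $b$ both lie in $A$, giving $(a,b)\in A\times A$; combined with ${\rm w}(a,b)\le 2n\delta$ this means $(a,b)\in(A\times A)^{\le\delta}$. Moreover $g\in{\cal D}$ and $(a,b)\in C_{1,g}$ give $g\in{\cal D}_{(a,b)}$ by the definition in Eq.\eqref{eq Da,b}. Thus, once we know $\mu(n)\le d\le n$, the element $g$ occurs in the term of the union indexed by $d$, by $A\in\Omega_d$, and by $(a,b)$, which is precisely the containment asserted by the lemma.

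The only step that needs an argument is the lower bound $d\ge\mu(n)$ (the bound $d\le\dim_F\R1=n$ is clear, and $A\ne0$ gives $d\ge1$). Here I would use the decomposition $\R1=\bigoplus_i\R1 e_i$ of Eq.\eqref{eq R=...}: every nonzero ideal of $\R1$ is a direct sum of some of the $\R1 e_i$, and $\dim_F\R1 e_i\ge\mu(n)$ for $i\ge1$ while $\dim_F\R1 e_0=1$, so the only ideal with $0<\dim_F<\mu(n)$ is the repetition code $\R1 e_0$, whose nonzero words all have weight $n$ (cf. Eq.\eqref{eq R e_0}). If we had $d<\mu(n)$, then $A=\R1 e_0$, and then $a$ and $b$ (the latter also nonzero, being $ag$ with $a\ne0$ and $g$ a unit) would both be nonzero constant words, so ${\rm w}(a,b)=2n$, contradicting ${\rm w}(a,b)\le 2n\delta<2n$, where $\delta<1-q^{-1}<1$ by Eq.\eqref{eq delta in...}. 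Hence $d\ge\mu(n)$. This dichotomy — ruling out that the short codeword generates an ideal of dimension below $\mu(n)$ — is the only place where anything beyond routine bookkeeping happens, so I expect it to be the (mild) main obstacle; everything else is just unwinding the definitions of ${\cal D}$, ${\cal D}_{(a,b)}$, $C_{1,g}$, and $(A\times A)^{\le\delta}$.
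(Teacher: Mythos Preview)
Your proof is correct and follows essentially the same approach as the paper's: pick a short nonzero codeword $(a,b)\in C_{1,g}$, find an ideal $A$ of $\R1$ containing both $a$ and $b$, and rule out $A=\R1 e_0$ via the weight bound to force $\dim_F A\ge\mu(n)$. The only cosmetic difference is that you take $A=\R1 a$ directly, whereas the paper uses the slightly larger ideal $A_{\omega_a}$ built from the $\widehat e_i$-support of $a$ (invoking Lemma~\ref{lem D_a,b}(1) to get $A_{\omega_a}=A_{\omega_b}$); either choice lands in some $\Omega_d$ with $\mu(n)\le d\le n$, so both arguments go through.
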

\begin{proof}
Assume that $g\in{\cal D}^{\le\delta}$. By Eq.\eqref{eq D<delta},
we have an $(a,b)\in\R1^2$ such that $0<{\rm w}(a,b)\le 2\delta n$
and $(a,b)\in C_{1,g}$, i.e., $g\in{\cal D}_{(a,b)}$.
From Lemma~\ref{lem D_a,b}(1) and Eq.(\ref{eq Ra = Rb}),
we deduce that $A_{\omega _a}=A_{\omega_b}$; hence
$d_a\!=\!\dim_F A_{\omega_a}\!=\!\dim_F A_{\omega_b}=d_b$, and
$(a,b)\in A_{\omega_a}\times A_{\omega_a}$ and $A_{\omega_a}\in\Omega_{d_a}$.
If $A_{\omega_a}=\R1 e_0$, then $0\neq a,b \in\R1 e_0$,
hence  ${\rm w}(a,b)=2n$ (cf. Eq.\eqref{eq R e_0}),
which contradicts that ${\rm w}(a,b)\le 2\delta n$
and $\delta<1-q^{-1}$ (see Eq.\eqref{eq delta in...}).
Thus $A_{\omega_a}\ne\R1 e_0$.
By the definition of $\mu(n)$ in~Eq.\eqref{eq mu(n)=...},
we obtain that $d_a\geq \mu(n)$.
So the lemma is proved.
\end{proof}

\begin{lemma} \label{lem D^leq}
%With notations as above.
If $\frac{1}{4} -h_q(\delta)-\frac{\log_p(n)}{\mu(n)\ell}>0$
and $\mu(n)\geq \log_p(n)$, then
\begin{align*}
|{\cal D}^{\le\delta}|\le p^5 p^{\frac{n\ell}{2}
 - 2\mu(n)\ell\big(\frac{1}{4} -h_q(\delta)-\frac{\log_p(n)}{\mu(n)\ell}\big)}.
\end{align*}
\end{lemma}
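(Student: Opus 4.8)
The plan is to feed the covering of ${\cal D}^{\le\delta}$ from Lemma~\ref{lem D leq delta} into the two size estimates already at hand --- Lemma~\ref{lem D_a,b}(2) for $|{\cal D}_{(a,b)}|$ and Lemma~\ref{lem balance} for $|(A\times A)^{\le\delta}|$ --- and then to bound the number of ideals of $\R1$ that can enter the count. First I would reorganize the union in Lemma~\ref{lem D leq delta} so that its index $d$ is exactly $d_a$: given $g\in{\cal D}^{\le\delta}$, pick (as in the proof of Lemma~\ref{lem D leq delta}) a pair $(a,b)\in C_{1,g}$ with $0<{\rm w}(a,b)\le 2\delta n$; then $\R1 a=\R1 b$, the ideal $A_{\omega_a}=A_{\omega_b}$ has dimension $d_a\ge\mu(n)$, $(a,b)\in(A_{\omega_a}\times A_{\omega_a})^{\le\delta}$, and $g\in{\cal D}_{(a,b)}$. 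Grouping the $g$'s by $d:=d_a$ gives
\[
|{\cal D}^{\le\delta}|\ \le\ \sum_{d=\mu(n)}^{n}\ \sum_{\substack{(a,b)\in\R1^2:\ \R1 a=\R1 b,\ d_a=d\\ 0<{\rm w}(a,b)\le 2\delta n}}|{\cal D}_{(a,b)}| .
\]
For each admissible $(a,b)$ one has $|{\cal D}_{(a,b)}|\le p^{5}p^{(n-d)\ell/2}$ by Lemma~\ref{lem D_a,b}(2) (it matters here that the exponent uses $d=d_a$, not the dimension of an arbitrary ideal of dimension $d$), while the number of admissible $(a,b)$ is at most $\sum_{A\in\Omega_d}|(A\times A)^{\le\delta}|\le|\Omega_d|\,q^{2dh_q(\delta)}$ by Lemma~\ref{lem balance}. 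Writing $Z=p^{-\ell(1/2-2h_q(\delta))}$ --- so $0<Z<1$ since $h_q(\delta)<1/4$ --- and using $q^{2dh_q(\delta)}\,p^{(n-d)\ell/2}=p^{n\ell/2}Z^{d}$, this collapses to $|{\cal D}^{\le\delta}|\le p^{5}p^{n\ell/2}\sum_{d=\mu(n)}^{n}|\Omega_d|Z^{d}$.

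Next I would bound the tail sum via the generating identity $\sum_{d\ge 0}|\Omega_d|Z^{d}=\prod_{i=0}^{m}(1+Z^{\deg\varphi_i})$, which holds because (in the semisimple case) every ideal of $\R1$ is a direct sum of some of the extension fields $\R1 e_i$, its dimension being the corresponding partial sum of the $\deg\varphi_i$, cf. Eq.\eqref{eq X^n-1=...}--Eq.\eqref{eq R=...}. Since $\deg\varphi_0=1$ while $\deg\varphi_i\ge\mu(n)$ for $i\ge 1$ (cf. Eq.\eqref{eq mu(n)=...}), for $\mu(n)\ge 2$ the only ideals of dimension less than $\mu(n)$ are $0$ and $\R1 e_0$, hence
\[
\sum_{d=\mu(n)}^{n}|\Omega_d|Z^{d}=(1+Z)\Big(\prod_{i=1}^{m}(1+Z^{\deg\varphi_i})-1\Big).
\]
Using $\prod_{i\ge 1}(1+Z^{\deg\varphi_i})\le e^{\sum_{i\ge 1}Z^{\deg\varphi_i}}$, $\sum_{i\ge 1}Z^{\deg\varphi_i}\le mZ^{\mu(n)}$, and $m<n/\mu(n)$ (from $n=\sum_{i=0}^{m}\deg\varphi_i\ge 1+m\mu(n)$), the hypothesis $\tfrac14-h_q(\delta)-\tfrac{\log_p(n)}{\mu(n)\ell}>0$ yields $\tfrac{n}{\mu(n)}Z^{\mu(n)}<1$ (compare $\log_p$ of the two sides), hence $\sum_{i\ge 1}Z^{\deg\varphi_i}<1$ and $\prod_{i\ge 1}(1+Z^{\deg\varphi_i})-1<2\tfrac{n}{\mu(n)}Z^{\mu(n)}$, so that $\sum_{d=\mu(n)}^{n}|\Omega_d|Z^{d}<4\tfrac{n}{\mu(n)}Z^{\mu(n)}\le 4nZ^{\mu(n)}$.

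Combining the two steps gives $|{\cal D}^{\le\delta}|<4n\,p^{5}\,p^{n\ell/2}Z^{\mu(n)}$; since $4n\le n^{2}=p^{2\log_p(n)}$ for $n$ large and $Z^{\mu(n)}=p^{-\mu(n)\ell(1/2-2h_q(\delta))}$, the exponent of $p$ becomes $\tfrac{n\ell}{2}-\mu(n)\ell(\tfrac12-2h_q(\delta))+2\log_p(n)$, which equals $\tfrac{n\ell}{2}-2\mu(n)\ell\big(\tfrac14-h_q(\delta)-\tfrac{\log_p(n)}{\mu(n)\ell}\big)$, the claimed bound. I expect the ideal count to be the main obstacle: one must use simultaneously that a large $\mu(n)$ leaves only two ideals of dimension below $\mu(n)$ and keeps the number $m$ of nontrivial components below $n/\mu(n)$, and that the quantitative hypothesis forces the essentially geometric tail $\sum_{d\ge\mu(n)}|\Omega_d|Z^{d}$ down to size $O(nZ^{\mu(n)})$. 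A smaller but genuine subtlety, already flagged above, is to set up the covering so that $|{\cal D}_{(a,b)}|$ is estimated with the exponent attached to $d=d_a$ rather than to an ideal that is only known to have dimension $d$.
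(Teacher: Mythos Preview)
Your proof is correct and reaches the same bound, but the route is genuinely different from the paper's at the ideal-counting step. The paper bounds $|\Omega_d|$ crudely by $n^{d/\mu(n)}$ (any ideal of dimension $d$ has at most $d/\mu(n)$ nontrivial summands, and there are at most $n$ choices for each), then absorbs $n^{d/\mu(n)}=p^{d\log_p(n)/\mu(n)}$ into the exponent and bounds the remaining sum over $d$ by $n$ times its largest term (at $d=\mu(n)$). You instead package $\sum_d|\Omega_d|Z^d$ as the product $\prod_i(1+Z^{\deg\varphi_i})$ and estimate the tail via $\prod(1+x_i)\le e^{\sum x_i}$ together with $m<n/\mu(n)$. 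Your approach exploits more structure and makes transparent why only the summand $d=\mu(n)$ matters; the paper's approach is more elementary but somewhat obscures this. Two minor caveats worth cleaning up: your identity $\sum_{d\ge\mu(n)}|\Omega_d|Z^d=(1+Z)\big(\prod_{i\ge1}(1+Z^{\deg\varphi_i})-1\big)$ uses $\mu(n)\ge 2$, and the step $4n\le n^2$ needs $n\ge 4$; both are harmless for the intended application (where $\mu(n)\to\infty$) but should be stated. You are right to flag the subtlety that the bound from Lemma~\ref{lem D_a,b}(2) uses $d_a$, not an arbitrary $d$ with $a\in A\in\Omega_d$; the paper's write-up glosses over this, and your reorganization of the covering so that $d=d_a$ is the clean way to handle it.
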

\begin{proof}
Note that any ideal $A$ of $\R1$ is a direct sum of some of
$\R1 e_0, \R1 e_1, \dots, \R1 e_{m}$ (cf. Eq.(\ref{eq R=...})).
For $1\le i\le m$, the dimension $\dim_F \R1 e_i\ge\mu(n)$,
where $\mu(n)$ is defined in Eq.(\ref{eq mu(n)=...}).
Suppose that $A\in{\Omega_{d}}$
where $\mu(n)\le d\le n$.
If $\R1 e_0$ is not a direct summand of $A$, then the number of such $A$
is at most $m^{\frac{d}{\mu(n)}}$;
otherwise $\R1 e_0$ is a direct summand of $A$, then the number of such $A$
is at most $m^{\frac{d-1}{\mu(n)}}<m^{\frac{d}{\mu(n)}}$.
Hence
\begin{align} \label{eq Omega}
 |\Omega_d|\le  m^{\frac{d}{\mu(n)}}
\le  n^{\frac{d}{\mu(n)}}.
\end{align}
Applying Lemma~\ref{lem D leq delta} and Lemma~\ref{lem D_a,b}(2), we obtain
\begin{align*}
|{\cal D}^{\le\delta}|
&\le
 \sum_{d=\mu(n)}^{n}~\sum_{A\in\Omega_{d}}~
\sum_{(a,b)\in(A\times A)^{\le\delta}} |{\cal D}_{(a,b)}| \\
&\le
 \sum_{d=\mu(n)}^{n}~\sum_{A\in\Omega_{d}}~
\sum_{(a,b)\in(A\times A)^{\le\delta}} p^5 p^{(n-d)\ell/2},
\end{align*}
where the number $p^5 p^{(n-d)\ell/2}$
is independent of the choice of $(a,b)\in (A\times A)^{\leq\delta}$.
By Lemma~\ref{lem balance}, for $A\in\Omega_d$
we have $|(A\times A)^{\le\delta}|\le (p^\ell)^{2dh_q(\delta)}$.
So
\begin{align*}
& |{\cal D}^{\le\delta}| \leq \sum_{d=\mu(n)}^{n}~
\sum_{A\in\Omega_{d}}~|(A\times A)^{\le\delta}|\cdot
  p^5 p^{(n-d)\ell/2}\\
&  \le
 \sum_{d=\mu(n)}^{n}\sum_{A\in\Omega_{d}}
  p^{2 d \ell h_q(\delta)} p^5 p^{(n-d)\ell/2}
% &
 =  p^5 \! \sum_{d=\mu(n)}^{n}\sum_{A\in\Omega_{d}}
 p^{\frac{n\ell}{2} + 2 d\ell\big( h_q(\delta)-\frac{1}{4}\big)},
\end{align*}
where the number $p^{\frac{n\ell}{2}
 + 2 d\ell\big( h_q(\delta)-\frac{1}{4}\big)}$
is independent of the choice of $A\in\Omega_d$.
Using Eq.\eqref{eq Omega} yields
\begin{align*}
 &|{\cal D}^{\le\delta}|
 \le p^5\!\sum_{d=\mu(n)}^{n}\! |\Omega_{d}|\!\cdot\!
 p^{\frac{n \ell}{2} + 2 d\ell\big( h_q(\delta)-\frac{1}{4}\big)}
\le p^5\!
 \sum_{d=\mu(n)}^{n} \! n^{\frac{d}{\mu(n)}}
 p^{\frac{n\ell}{2} + 2 d\ell\big( h_q(\delta)-\frac{1}{4}\big)}
\\
&~ = p^5\! \sum_{d=\mu(n)}^{n}\!
 p^{\frac{n\ell}{2} + 2 d\ell\big( h_q(\delta)-\frac{1}{4}\big)
 + \frac{d\log_p(n)}{\mu(n)}  }
= p^5 \!\sum_{d=\mu(n)}^{n}
 p^{\frac{n\ell}{2}
 - 2 d\ell\big(\frac{1}{4} -h_q(\delta) -\frac{\log_p(n)}{2\mu(n)\ell}\big) }.
\end{align*}Note that $\frac{1}{4} -h_q(\delta) -\frac{\log_p(n)}{2\mu(n)\ell}>0$
(since $\frac{1}{4} -h_q(\delta) -\frac{\log_p(n)}{\mu(n)\ell}>0$)
and $d\ge\mu(n)$. We further get
\begin{align*}
|{\cal D}^{\le\delta}|
&\le p^5 \sum_{d=\mu(n)}^{n}
 p^{\frac{n\ell}{2} - 2\mu(n)\ell\big(\frac{1}{4}
 -h_q(\delta)-\frac{\log_p(n)}{2\mu(n)\ell}\big) }\\
&\le p^5\cdot n\cdot
 p^{\frac{n\ell}{2} - 2\mu(n)\ell\big(\frac{1}{4} -h_q(\delta)
 -\frac{\log_p(n)}{2\mu(n)\ell}\big) }.
\end{align*}
%where the last inequality holds since $n-d+1\leq n=p^{\log_{p}n}$.
That is,
$|{\cal D}^{\le\delta}|\leq p^5
 p^{\frac{n\ell}{2} - 2 \mu(n)\ell \big(\frac{1}{4} -h_q(\delta)
 -\frac{\log_p(n)}{\mu(n)\ell}\big) }$.
\end{proof}

By \cite[Lemma 2.6]{BM} (or \cite[Lemma II.6]{FL20}),
there are odd positive integers $n_1,n_2,\dots$ coprime to $q=p^\ell$ such that
$\lim\limits_{i\to\infty}\frac{\log_q(n_i)}{\mu(n_i)}=0$, where
$\mu(n)$ is defined in Eq.\eqref{eq mu(n)=...}.
Since $\log_q(n_i)=\log_p(n_i)/\log_p(q)=\log_p(n_i)/\ell$,
we see that
there are odd positive integers $n_1,n_2,\dots$ coprime to $p$ such that
\begin{align} \label{eq n_1 n_2 ...}
\lim\limits_{i\to\infty}\frac{\log_p(n_i)}{\mu(n_i)}=0,
\end{align}
which implies that $\mu(n_i)\to\infty$, hence $n_i\to\infty$.
Obviously, we can assume that $\mu(n_i)\geq \log_p(n_i)$ for $i=1,2,\dots$.

\begin{theorem} \label{thm H self-dual good}
Assume that $0<\delta<1-q^{-1}$ and $h_q(\delta)<\frac{1}{4}$
as in Eq.\eqref{eq delta in...}. Then there are
Hermitian self-dual $2$-quasi-cyclic codes $C_1, C_2, \dots$ over $F$
(hence ${\rm R}(C_i)=\frac{1}{2}$)
such that the code length $2n_i$ of $C_i$ goes to infinity and the
relative minimum distance $\Delta(C_i)>\delta$ for $i=1,2,\dots$.
\end{theorem}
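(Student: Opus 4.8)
The plan is a counting argument: I would compare the total number of Hermitian self-dual $2$-quasi-cyclic codes of the form $C_{1,g}$ (i.e., $|{\cal D}|$) against the number of those of small relative distance (i.e., $|{\cal D}^{\le\delta}|$), and conclude that for suitable lengths the former strictly exceeds the latter. First I would invoke Eq.~\eqref{eq n_1 n_2 ...} to fix a sequence of odd integers $n_1,n_2,\dots$ coprime to $p$ with $\lim_{i\to\infty}\frac{\log_p(n_i)}{\mu(n_i)}=0$ and $\mu(n_i)\ge\log_p(n_i)$; in particular $\mu(n_i)\to\infty$ and $n_i\to\infty$. Set $\varepsilon=\frac14-h_q(\delta)>0$, which is positive by the hypothesis $h_q(\delta)<\frac14$ from Eq.~\eqref{eq delta in...}. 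Since $\frac{\log_p(n_i)}{\mu(n_i)\ell}\to 0$, for all sufficiently large $i$ we have $\frac{\log_p(n_i)}{\mu(n_i)\ell}<\frac{\varepsilon}{2}$, so that $\frac14-h_q(\delta)-\frac{\log_p(n_i)}{\mu(n_i)\ell}>\frac{\varepsilon}{2}>0$; this is exactly the hypothesis needed to apply Lemma~\ref{lem D^leq} with $n=n_i$, and of course $\mu(n_i)\ge\log_p(n_i)$ holds.

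For such $i$, Lemma~\ref{lem D^leq} yields
\begin{align*}
 |{\cal D}^{\le\delta}| \le p^5\, p^{\frac{n_i\ell}{2} - 2\mu(n_i)\ell\big(\frac14 - h_q(\delta) - \frac{\log_p(n_i)}{\mu(n_i)\ell}\big)} \le p^5\, p^{\frac{n_i\ell}{2} - \mu(n_i)\ell\varepsilon},
\end{align*}
where ${\cal D}^{\le\delta}$ is as in Eq.~\eqref{eq D<delta} and the last inequality uses $\frac14-h_q(\delta)-\frac{\log_p(n_i)}{\mu(n_i)\ell}>\frac{\varepsilon}{2}$. On the other hand, Corollary~\ref{cor |D|<} gives $|{\cal D}|\ge p^{-2}\,p^{\frac{n_i\ell}{2}}$. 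Dividing, the ratio $|{\cal D}^{\le\delta}|/|{\cal D}|$ is at most $p^{7}\,p^{-\mu(n_i)\ell\varepsilon}$, which tends to $0$ as $i\to\infty$ because $\mu(n_i)\to\infty$ and $\ell\varepsilon>0$. Hence for all sufficiently large $i$ we have $|{\cal D}^{\le\delta}|<|{\cal D}|$, so ${\cal D}\setminus{\cal D}^{\le\delta}\ne\emptyset$.

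For each such $i$, pick $g_i\in{\cal D}\setminus{\cal D}^{\le\delta}$, i.e., $g_ig_i^*=-1$ and $\Delta(C_{1,g_i})>\delta$, and set $C_i=C_{1,g_i}$. By Lemma~\ref{lem C1,g}(2), the condition $g_ig_i^*=-1$ forces $C_i$ to be Hermitian self-dual, so ${\rm R}(C_i)=\frac12$; its length is $2n_i\to\infty$; and $\Delta(C_i)>\delta$ by the choice of $g_i$. Discarding the finitely many small indices and reindexing produces the required sequence $C_1,C_2,\dots$.

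I expect the only mildly delicate point to be the bookkeeping of constants: one must check that the fixed powers of $p$ (the $p^5$ coming from Lemma~\ref{lem D^leq} and the $p^{-2}$ from Corollary~\ref{cor |D|<}) are swamped by the exponentially growing gap $\mu(n_i)\ell\varepsilon$, which is immediate once $\mu(n_i)\to\infty$. All of the substantive work—the decomposition of ${\cal D}$ over the components $\widehat e_i$, the balanced-code entropy estimate, and the bound on $|{\cal D}_{(a,b)}|$—has already been carried out in Lemma~\ref{lem D^leq}, Lemma~\ref{lem D_a,b} and Corollary~\ref{cor |D|<}, so this final argument is essentially a limit computation together with the pigeonhole principle.
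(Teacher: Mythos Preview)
Your proposal is correct and follows essentially the same approach as the paper's proof: both invoke the sequence $n_1,n_2,\dots$ from Eq.~\eqref{eq n_1 n_2 ...}, apply Corollary~\ref{cor |D|<} and Lemma~\ref{lem D^leq} to bound the ratio $|{\cal D}^{\le\delta}|/|{\cal D}|$ by $p^{7}p^{-c\,\mu(n_i)\ell}$ for some positive constant $c$, and then use $\mu(n_i)\to\infty$ to pick $g_i\in{\cal D}\setminus{\cal D}^{\le\delta}$. The only cosmetic difference is that you set $\varepsilon=\tfrac14-h_q(\delta)$ and then use $\varepsilon/2$ as the eventual lower bound, whereas the paper directly chooses $\varepsilon$ as a lower bound for $\tfrac14-h_q(\delta)-\frac{\log_p(n_i)}{\mu(n_i)\ell}$; this is immaterial.
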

\begin{proof}
Take $n_1,n_2,\dots$ as in Eq.\eqref{eq n_1 n_2 ...}.
There is a positive real number $\varepsilon>0$ such that
$\frac{1}{4}-h_q(\delta)-\frac{\log_p(n_i)}{\mu(n_i)\ell}\geq\varepsilon$
for large enough index $i$. So we can further assume that
\begin{align} \label{eq 1/4 -h_q...}
\frac{1}{4}-h_q(\delta)-\frac{\log_p(n_i)}{\mu(n_i)\ell}\geq\varepsilon,~~~~
i=1,2,\dots.
\end{align}
Taking $n=n_i$ in Corollary~\ref{cor |D|<} and Lemma~\ref{lem D^leq},
and denoting ${\cal D}$ by ${\cal D}(i)$,
we get
\begin{align*}
\frac{|{\cal D}(i)^{\le\delta}|}{|{\cal D}(i)|}\leq
\frac{p^5 p^{\frac{n_i \ell}{2} - 2 \mu(n_i) \ell
  \big(\frac{1}{4} -h_q(\delta)-\frac{\log_p(n_i)}{\mu(n_i)\ell}\big) }}
{p^{-2}p^{\frac{n_i\ell}{2}}}
= p^7 p^{-2 \mu(n_i) \ell
 \big(\frac{1}{4} -h_q(\delta)-\frac{\log_p(n_i)}{\mu(n_i)\ell}\big)}.
\end{align*}
By Eq.\eqref{eq 1/4 -h_q...} and that $\mu(n_i)\to\infty$, we get that
\begin{align*}
\lim_{i\to\infty} |{\cal D}(i)^{\le\delta}|\big/|{\cal D}(i)|
\leq \lim_{i\to\infty} p^7 p^{-2\mu(n_i) \ell \varepsilon} =0.
\end{align*}
Therefore, we can further assume that
$|{\cal D}(i)^{\le\delta}|<|{\cal D}(i)|$ for $i=1,2,\dots$.
So we can take $g_i\in {\cal D}(i)\setminus{\cal D}(i)^{\le\delta}$.
Then $C_i=C_{1,g_i}$ is a Hermitian self-dual $2$-quasi-cyclic code
of length $2n_i$ with $\Delta(C_i)>\delta$.
\end{proof}

\section{Hermitian (Euclidean) self-dual $2$-quasi\\ constacyclic codes}
\label{section H self-dual 2-QQ cyclic}

In this section we prove that if $\lambda^{1+p^{\ell/2}}=1$
($\lambda^{2}=1$ and $q\,{\not\equiv}\,3~({\rm mod}~4)$, respectively),
then Hermitian self-dual (Euclidean self-dual, respectively)
$2$-quasi $\lambda$-constacyclic codes are asymptotically good.
We first relate $\RL=F[X]/\langle X^n-\lambda\rangle$
with $\R1 =F[X]/\langle X^n-1\rangle$,
and then turn to the Hermitian case ($h=\ell/2$)
and the Euclidean case ($h=0$).

\begin{lemma} \label{lem n coprime to t}
Assume that ${\lambda^{1+p^h}=1}$ and $\gcd(n,t)=1$,
where $t={\rm ord}_{F^\times}(\lambda)$ as in Eq.\eqref{eq t=...}.
Then there is a $\gamma=\lambda^s\in F^\times$ such that
$\gamma^n=\lambda^{-1}$ and the map
\begin{align} \label{eq n coprime to t}
\eta:~ \R1\,\longrightarrow\, \RL,
\quad \sum_{i=0}^{n-1} a_iX^i\,\longmapsto\,
\sum_{i=0}^{n-1} a_i(\gamma X)^i~({\rm mod}~X^n-\lambda),
\end{align}
satisfies the following:

{\bf(1)} $\eta$ is an algebra isomorphism.

{\bf(2)}
The weight ${\rm w}\big(\eta(a(X))\big)={\rm w}\big(a(X)\big)$,
\, $\forall$ $a(X)\in\R1$.

{\bf(3)}
$\big\langle \eta(a(X)),\, \eta(b(X))\big\rangle_h=
\big\langle a(X),\, b(X)\big\rangle_h$, \, $\forall$ $a(X), b(X)\in\R1$.
\end{lemma}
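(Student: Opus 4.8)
The plan is to realize $\eta$ as the change of variables $X\mapsto\gamma X$ pushed through the two quotient rings, and then to check (1)--(3) by essentially one-line computations in which the two hypotheses enter transparently. First I would construct $\gamma$. Since $\gcd(n,t)=1$ with $t={\rm ord}_{F^\times}(\lambda)$ (see Eq.\eqref{eq t=...}), the residue of $n$ is a unit modulo $t$, so there is an integer $s$ with $sn\equiv-1\pmod t$; put $\gamma=\lambda^s\in F^\times$. Writing $sn=-1+kt$ gives $\gamma^n=\lambda^{sn}=\lambda^{-1}(\lambda^t)^k=\lambda^{-1}$, as required.

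For (1), let $\tau\colon F[X]\to F[X]$ be the $F$-algebra automorphism determined by $X\mapsto\gamma X$. Because $\gamma^n=\lambda^{-1}$,
\[
\tau(X^n-1)=\gamma^nX^n-1=\lambda^{-1}(X^n-\lambda),
\]
so $\tau$ maps the ideal $\langle X^n-1\rangle$ onto $\langle X^n-\lambda\rangle$; hence $\tau$ descends to an $F$-algebra isomorphism $\R1=F[X]/\langle X^n-1\rangle\to F[X]/\langle X^n-\lambda\rangle=\RL$, and unwinding the quotient maps shows this induced isomorphism is exactly the $\eta$ of Eq.\eqref{eq n coprime to t}. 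Equivalently, one checks directly that the homomorphism $F[X]\to\RL$, $f(X)\mapsto f(\gamma X)\bmod(X^n-\lambda)$, has kernel $\langle X^n-1\rangle$ and applies the Homomorphism Theorem, the dimensions on both sides being $n$.

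For (2) and (3), observe that for $a(X)=\sum_{i=0}^{n-1}a_iX^i$ the polynomial $a(\gamma X)=\sum_{i=0}^{n-1}a_i\gamma^iX^i$ already has degree $<n$, so under the identification of $\RL$ with $F^n$ the codeword $\eta(a)$ is $(a_0,a_1\gamma,\dots,a_{n-1}\gamma^{n-1})$. Since $\gamma^i\ne 0$, the $i$-th coordinate vanishes exactly when $a_i$ does, which gives ${\rm w}(\eta(a))={\rm w}(a)$. For the inner product,
\[
\big\langle\eta(a),\eta(b)\big\rangle_h=\sum_{i=0}^{n-1}(a_i\gamma^i)(b_i\gamma^i)^{p^h}
=\sum_{i=0}^{n-1}a_ib_i^{p^h}\,\gamma^{i(1+p^h)},
\]
and $\gamma^{1+p^h}=(\lambda^{1+p^h})^s=1$ by hypothesis, so every factor $\gamma^{i(1+p^h)}$ equals $1$ and the sum collapses to $\langle a,b\rangle_h$.

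Since this is essentially a change-of-variables argument, there is no serious obstacle; the only points requiring attention are that $\gamma$ actually exists — this is precisely where $\gcd(n,t)=1$ is used — and that $\eta$ is an isometry for the Galois $p^h$-form — this is precisely where $\lambda^{1+p^h}=1$ is used, via $\gamma^{1+p^h}=1$. Everything else is routine bookkeeping with the identifications already set up in Section~\ref{preliminaries}.
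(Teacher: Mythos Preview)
Your proof is correct and follows essentially the same line as the paper's: the construction of $\gamma=\lambda^s$ via B\'ezout and the verification of (3) through $\gamma^{1+p^h}=(\lambda^{1+p^h})^s=1$ are identical. The only cosmetic difference is that the paper outsources (1) and (2) to \cite[Theorem 3.2]{CFLL}, whereas you supply the standard change-of-variables argument directly; your version is self-contained and arguably cleaner.
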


\begin{proof}
Since $\gcd(n,t)=1$, there are integers $s,s'$ such that $ns+ts'=-1$.
We deduce that
$\lambda^{-1}=\lambda^{ns+ts'}=\lambda^{ns}\lambda^{ts'}=\lambda^{ns}$
since $\lambda^t=1$.
Take $\gamma=\lambda^s$, then $\gamma^{n}= \lambda^{-1}$.
It is known that (1) and (2) has been proved in \cite[Theorem 3.2]{CFLL}.
Therefore, it suffices to show that (3).
%The following is a surjective algebra homomorphism:
%$$
%\tilde\eta: F[X]\,\longrightarrow\, F[X]/\langle X^n-\lambda \rangle,
%\quad f(X)\,\longmapsto\, f(\gamma X) ~ ({\rm mod}~ X^n-\lambda),
%$$
%and
%$$
% \tilde\eta(X^n-1)=\gamma^n X^n -1=\lambda^{-1}X^n-1
% =\lambda^{-1}(X^n-\lambda)\equiv 0~~ ({\rm mod}~X^n-\lambda).
%$$
%Thus the kernel of $\tilde\eta$ is the ideal $\langle X^n-1\rangle$, and
%the surjective homomorphism $\tilde\eta$ induces the isomorphism $\eta$
%in Eq.\eqref{eq n coprime to t}. So (1) is proved.

%For any $a(X)\!=\!\sum_{i=0}^{n-1}a_i X^i\in\R1$,
%by Eq.\eqref{eq n coprime to t} we get
%$\eta\big(a(X)\big)\!=\!\sum_{i=0}^{n-1}a_i\gamma^i X^i$.
%So we see that
% $a_i\ne 0$ if and only if $a_i\gamma^i\ne 0$,
%hence ${\rm w}\big(\eta(a(X))\big)={\rm w}\big(a(X)\big)$.
%Thus (2) holds.

Let $a(X)=\sum_{i=0}^{n-1}a_i X^i\in \R1$ and
$b(X)=\sum_{i=0}^{n-1}b_i X^i\in \R1$.
Then $\eta\big(a(X)\big)=\sum_{i=0}^{n-1}a_i\gamma^i X^i$ and
$\eta\big(b(X)\big)=\sum_{i=0}^{n-1}b_i\gamma^i X^i$.
By Eq.(\ref{eq G1 inner prod.}), we have that
$$
\big\langle \eta(a(X)), \eta(b(X))\big\rangle_h =
\sum_{i=0}^{n-1} (a_i\gamma^i)\cdot (b_i\gamma^i)^{p^h}
=\sum_{i=0}^{n-1} a_ib_i^{p^h}\cdot \gamma^{(1+p^h)i}.
$$
Applying the equality $\gamma=\lambda^s$
and the assumption $\lambda^{1+p^h}=1$, we get
$$
 {\gamma^{1+p^h}=(\lambda^s)^{1+p^h}=(\lambda^{1+p^h})^s}=1.
$$
Thus
$$
\big\langle \eta(a(X)), \eta(b(X))\big\rangle_h
=\sum_{i=0}^{n-1} a_ib_i^{p^h}=\big\langle a(X), b(X)\big\rangle_h,
$$
which proves (3).
\end{proof}

\begin{corollary} \label{cor n coprime to t}
Assume that ${\lambda^{1+p^h}=1}$ and $\gcd(n,t)=1$,
where $t={\rm ord}_{F^\times}(\lambda)$ as in Eq.\eqref{eq t=...}. Let
\begin{align} \label{2 eq n coprime to t}
\eta^{(2)}:~ \R1^2\,\longrightarrow\, \RL^2,~~
\big(a(X),a'(X)\big)\,\longmapsto\, \big(\eta(a(X)),\eta(a'(X))\big),
\end{align}
where $\eta$ is defined in Lemma~\ref{lem n coprime to t}.
Then the following hold.

{\bf(1)} $\eta^{(2)}$ is a module isomorphism.

{\bf(2)} ${\rm w}\big(\eta^{(2)}(a(X),a'(X))\big)
={\rm w}\big(a(X),a'(X)\big)$,~ $\forall\; (a(X),a'(X))\in\R1^2$.

{\bf(3)} $\!\big\langle \eta^{(2)}(a(X),a'(X)), \eta^{(2)}(b(X),b'(X))\big\rangle_h
\!=\!\big\langle (a(X),a'(X)),\! (b(X),b'(X))\big\rangle_h$,
~$\forall\;(a(X),a'(X)),(b(X),b'(X))\in \R1^2$.
\end{corollary}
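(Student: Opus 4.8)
The plan is to obtain Corollary~\ref{cor n coprime to t} as the coordinatewise lift of Lemma~\ref{lem n coprime to t}, using that the module action, the Hamming weight, and the Galois $p^h$-inner product on $\R1^2$ and on $\RL^2$ all decompose blockwise. First I would settle (1). Since $\eta$ is an $F$-algebra isomorphism by Lemma~\ref{lem n coprime to t}(1), it is in particular a bijective ring homomorphism, so $\eta^{(2)}$ is additive and bijective with inverse $(\eta^{-1})^{(2)}$; and for any $r(X)\in\R1$ and $(a(X),a'(X))\in\R1^2$ we have
$\eta^{(2)}\big(r(X)(a(X),a'(X))\big)=\big(\eta(r(X)a(X)),\eta(r(X)a'(X))\big)=\eta(r(X))\cdot\big(\eta(a(X)),\eta(a'(X))\big)$,
so $\eta^{(2)}$ is a bijection that is additive and semilinear along the ring isomorphism $\eta$, i.e.\ a module isomorphism in the relevant sense.

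Next, for (2) I would record that, under the identification $\iota^{(2)}$ of Eq.~\eqref{eq RR to FF}, a pair $(a(X),a'(X))$ corresponds to the concatenated word $(\mathbf a,\mathbf a')\in F^n\times F^n$, whose number of nonzero coordinates is ${\rm w}(\mathbf a)+{\rm w}(\mathbf a')$; hence ${\rm w}(a(X),a'(X))={\rm w}(a(X))+{\rm w}(a'(X))$, and likewise on $\RL^2$. Applying Lemma~\ref{lem n coprime to t}(2) to each component then gives ${\rm w}\big(\eta^{(2)}(a(X),a'(X))\big)={\rm w}(\eta(a(X)))+{\rm w}(\eta(a'(X)))={\rm w}(a(X))+{\rm w}(a'(X))={\rm w}(a(X),a'(X))$.

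Finally, for (3) I would use that, directly from Eq.~\eqref{int Galois inner} (or Eq.~\eqref{inner as matrix product}), the Galois $p^h$-inner product on $\R1^2$ splits as $\langle(a,a'),(b,b')\rangle_h=\langle a,b\rangle_h+\langle a',b'\rangle_h$, and the same on $\RL^2$. Keeping the order of arguments fixed (the form is not symmetric when $h\neq 0$), Lemma~\ref{lem n coprime to t}(3) applied to each summand yields $\langle\eta^{(2)}(a,a'),\eta^{(2)}(b,b')\rangle_h=\langle\eta(a),\eta(b)\rangle_h+\langle\eta(a'),\eta(b')\rangle_h=\langle a,b\rangle_h+\langle a',b'\rangle_h=\langle(a,a'),(b,b')\rangle_h$.

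There is no genuine obstacle here: the corollary is a routine componentwise consequence of Lemma~\ref{lem n coprime to t}, and the only points needing care are spelling out that ``module isomorphism'' refers to a bijection semilinear along the ring isomorphism $\eta$ (the source and target being modules over the two different but isomorphic rings $\R1$ and $\RL$), and recording the blockwise splitting of weight and of the inner product — with the small caveat of respecting argument order in the non-symmetric case $h\neq 0$.
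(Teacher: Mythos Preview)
Your proposal is correct and follows essentially the same approach as the paper: both prove each of (1)--(3) by splitting the module action, the Hamming weight, and the Galois $p^h$-inner product on $\R1^2$ (resp.\ $\RL^2$) componentwise and then invoking the corresponding part of Lemma~\ref{lem n coprime to t} on each component. Your extra remarks on the semilinearity interpretation of ``module isomorphism'' and on respecting argument order for $h\neq 0$ are helpful clarifications but do not change the argument.
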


\begin{proof}
{\rm (1)}~ For $\big(a(X),a'(X)\big)\in{\R1}^2$ and $f(X)\in{\R1}$,
by Lemma \ref{lem n coprime to t}(1),
\begin{align*}
&\eta^{(2)}\big(f(X)\cdot(a(X),a'(X))\big)
 =\big(\eta(f(X)a(X)),\eta(f(X)a'(X))\big)\\
&=\big(\eta(f(X))\eta(a(X)),\eta(f(X))\eta(a'(X))\big)
 =\eta(f(X))\cdot\eta^{(2)}\big(a(X),a'(X)\big).
\end{align*}
Thus Eq.\eqref{2 eq n coprime to t} is a module homomorphism.
Observe that by Lemma \ref{lem n coprime to t}(1), $\eta^{(2)}$ must be bijective,
hence it is a module isomorphism.

{\rm (2)}~  By Lemma \ref{lem n coprime to t}(2), we get
\begin{align*}
{\rm w}\big(\eta^{(2)}(a(X),a'(X))\big)
&={\rm w}\big(\eta(a(X)),\eta(a'(X))\big)
 ={\rm w}\big(\eta(a(X))\big)+{\rm w}\big(\eta(a'(X))\big) \\
& ={\rm w}\big(a(X)\big)+{\rm w}\big(a'(X)\big)
 ={\rm w}\big(a(X),a'(X)\big).
\end{align*}

{\rm (3)}~
For $(a(X), a'(X))\in{\R1^{2}}$, $(b(X), b'(X))\in{\R1^{2}}$,
where $a(X)\!=\sum_{i=0}^{n-1}a_iX^i$, $a'(X)\!=\sum_{i=0}^{n-1}a'_iX^i$ and
$b(X)=\sum_{i=0}^{n-1}b_iX^i$, $b'(X)=\sum_{i=0}^{n-1}b'_iX^i$,
it is obvious that
\begin{align*}
 \big\langle (a(X), a'(X)),(b(X), b'(X))\big\rangle_{h}
& =\sum_{i=0}^{n-1}a_i{b_i}^{p^h}+\sum_{i=0}^{n-1}a'_i{b'_i}^{p^h} \\
&=\big\langle a(X), b(X)\big\rangle_{h}+\big\langle a'(X), b'(X)\big\rangle_{h}.
\end{align*}
Then
\begin{align*}
&\big\langle \eta^{(2)}(a(X),a'(X)),\;\eta^{(2)}(b(X),b'(X))\big\rangle_h \\
=&\big\langle \big(\eta(a(X)),\eta(a'(X))\big),
  \;\big(\eta(b(X)),\eta(b'(X))\big)\big\rangle_h \\
=&\big\langle \eta(a(X)),\eta(b(X))\big\rangle_h +
 \big\langle \eta(a'(X)),\eta(b'(X))\big\rangle_h \\
=&\big\langle a(X), b(X)\big\rangle_h +
 \big\langle a'(X),b'(X)\big\rangle_h
  \qquad\mbox{(by Lemma \ref{lem n coprime to t}(3))}\\
=&\big\langle (a(X),a'(X)),\; (b(X),b'(X))\big\rangle_h.
\end{align*}
We are done.
\end{proof}

\begin{theorem} \label{thm n coprime to t}
Assume that ${\lambda^{1+p^h}=1}$ and $\gcd(n,t)=1$, where
$t={\rm ord}_{F^\times}\!(\lambda)$ as in~Eq.\eqref{eq t=...}.
Let $\eta^{(2)}$ be as in Eq.\eqref{2 eq n coprime to t}.
Then $C$ is an $\R1$-submodule of~$\R1^2$ if and only if
$\eta^{(2)}(C)$ is an $\RL$-submodule of $\RL^2$.
At that case, the following hold.

{\bf(1)} The rate ${\rm R}\big(\eta^{(2)}(C)\big)={\rm R}(C)$.

{\bf(2)} The relative minimum distance $\Delta\big(\eta^{(2)}(C)\big)=\Delta(C)$.

{\bf(3)} $\eta^{(2)}(C)$ is Galois $p^h$-self-dual
 if and only if $C$ is Galois $p^h$-self-dual.
\end{theorem}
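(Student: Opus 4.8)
The plan is to derive the whole statement from Corollary~\ref{cor n coprime to t}, which already supplies the three facts I need: $\eta^{(2)}$ is a bijective additive map satisfying $\eta^{(2)}(f\cdot v)=\eta(f)\cdot\eta^{(2)}(v)$ for $f\in\R1$, it preserves Hamming weight, and it preserves the Galois $p^h$-inner product. With these in hand the remaining argument is pure bookkeeping, so I do not expect a genuine obstacle; the only point that demands a little care is the non-symmetry of $\langle-,-\rangle_h$ in part~(3).

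First I would settle the submodule correspondence. Given an $\R1$-submodule $C$ of $\R1^2$, the image $\eta^{(2)}(C)$ is an additive subgroup of $\RL^2$ since $\eta^{(2)}$ is additive; and since $\eta$ is an algebra isomorphism (Lemma~\ref{lem n coprime to t}(1)), every $g\in\RL$ has the form $g=\eta(f)$ for a unique $f\in\R1$, whence $g\cdot\eta^{(2)}(v)=\eta(f)\cdot\eta^{(2)}(v)=\eta^{(2)}(fv)\in\eta^{(2)}(C)$ for all $v\in C$. Thus $\eta^{(2)}(C)$ is an $\RL$-submodule. The converse is obtained by running the same argument for $(\eta^{(2)})^{-1}$, using that $\eta^{-1}:\RL\to\R1$ is again an algebra isomorphism and that $(\eta^{(2)})^{-1}(g\cdot w)=\eta^{-1}(g)\cdot(\eta^{(2)})^{-1}(w)$ (verified by applying $\eta^{(2)}$ to both sides).

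For (1) and (2): $\eta^{(2)}$ is in particular an $F$-linear bijection, so it restricts to an $F$-linear bijection $C\to\eta^{(2)}(C)$ and hence $\dim_F\eta^{(2)}(C)=\dim_F C$; since both codes have length $2n$ this gives ${\rm R}(\eta^{(2)}(C))={\rm R}(C)$. By Corollary~\ref{cor n coprime to t}(2) this same bijection preserves weights, so the minimum weights of $C$ and $\eta^{(2)}(C)$ coincide, and dividing by $2n$ yields $\Delta(\eta^{(2)}(C))=\Delta(C)$.

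For (3) the key step is to show that $\eta^{(2)}$ commutes with forming Galois $p^h$-duals, namely $\eta^{(2)}(C^{\bot h})=(\eta^{(2)}(C))^{\bot h}$. Writing an arbitrary element of $\RL^2$ as $w=\eta^{(2)}(v)$, Corollary~\ref{cor n coprime to t}(3) gives $\langle\eta^{(2)}(c),w\rangle_h=\langle c,v\rangle_h$ for every $c\in C$, so $w$ is $p^h$-orthogonal to all of $\eta^{(2)}(C)$ exactly when $v$ is $p^h$-orthogonal to all of $C$; that is, $w\in(\eta^{(2)}(C))^{\bot h}$ iff $v\in C^{\bot h}$, which is the claimed identity. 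The asymmetry of $\langle-,-\rangle_h$ causes no difficulty here because $\eta^{(2)}$ preserves the product exactly in both arguments, so one simply quotes Corollary~\ref{cor n coprime to t}(3) as stated. Finally, since $\eta^{(2)}$ is injective, $C=C^{\bot h}$ is equivalent to $\eta^{(2)}(C)=\eta^{(2)}(C^{\bot h})=(\eta^{(2)}(C))^{\bot h}$, i.e.\ to $\eta^{(2)}(C)$ being Galois $p^h$-self-dual, which completes~(3).
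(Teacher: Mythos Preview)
Your proof is correct and follows essentially the same approach as the paper: both derive everything from Corollary~\ref{cor n coprime to t}. The only minor difference is in part~(3): the paper uses the characterization ``$C$ is Galois $p^h$-self-dual $\iff$ $\langle C,C\rangle_h=0$ and $\dim_F C=n$'' and checks each condition separately via Corollary~\ref{cor n coprime to t}(3) and part~(1), whereas you prove the slightly stronger fact $\eta^{(2)}(C^{\bot h})=(\eta^{(2)}(C))^{\bot h}$ directly; both arguments are valid and of the same length.
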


\begin{proof}
By Corollary \ref{cor n coprime to t}(1),
the map $\eta^{(2)}$ in Eq.\eqref{2 eq n coprime to t}
is a module isomorphism.
So $C$ is an $\R1$-submodule of $\R1^2$ if and only if
$\eta^{(2)}(C)$ is an $\RL$-submodule of $\RL^2$.
Assume that it is this case.

{\rm (1)}~Since $\eta^{(2)}$ is an isomorphism,
$\dim_F\eta^{(2)}(C)=\dim_F C$, and so
$${\rm R}\big(\eta^{(2)}(C)\big)
=\frac{\dim_F \eta^{(2)}(C)}{2n}=\frac{\dim_F C}{2n}={\rm R}(C).$$

{\rm (2)}~ It holds by Corollary \ref{cor n coprime to t}(2) obviously.

{\rm (3)}~ Observe that by Corollary \ref{cor n coprime to t}(3),
 $\big\langle \eta^{(2)}(C),\eta^{(2)}(C)\big\rangle_h=0$ in $\RL^2$
if and only if $\big\langle C,\,C\big\rangle_h=0$ in $\R1^2$.
Further, applying the above conclusion (1) yields that
${\rm R}\big(\eta^{(2)}(C)\big)=\frac{1}{2}$ if and only if ${\rm R}(C)=\frac{1}{2}$.
Thus (3) holds.
\end{proof}

In the following, we consider the asymptotic property of
Hermitian (Euclidean) self-dual $2$-quasi
$\lambda$-constacyclic codes.

\begin{theorem} \label{thm H self-dual lambda good}
Assume that $\ell$ is even.
The Hermitian self-dual $2$-quasi $\lambda$-constacyclic codes over $F$
are asymptotically good if and only if $\lambda^{1+p^{\ell/2}}=1$.
\end{theorem}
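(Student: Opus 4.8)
The plan is to treat the two implications separately, recycling results already established. For the ``only if'' direction, suppose $\lambda^{1+p^{\ell/2}}\ne 1$; taking $h=\ell/2$ in Theorem~\ref{thm case not good} --- for which the Galois $p^{\ell/2}$-inner product is exactly the Hermitian inner product --- gives at once that the Hermitian self-dual $2$-quasi $\lambda$-constacyclic codes over $F$ are asymptotically bad. Contrapositively, if these codes are asymptotically good, then $\lambda^{1+p^{\ell/2}}=1$.

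For the ``if'' direction, assume $\lambda^{1+p^{\ell/2}}=1$ and write $h=\ell/2$, $t={\rm ord}_{F^\times}(\lambda)$. The idea is to transport the asymptotically good family of Hermitian self-dual $2$-quasi-cyclic codes produced in Theorem~\ref{thm H self-dual good} into the $\lambda$-constacyclic setting through the module isomorphism $\eta^{(2)}\colon \R1^2\to\RL^2$ of Theorem~\ref{thm n coprime to t}. That isomorphism is available precisely under the hypotheses $\lambda^{1+p^h}=1$ and $\gcd(n,t)=1$, and it preserves the rate, the relative minimum distance, and Galois $p^h$-self-duality. Fix once and for all a real number $\delta$ with $0<\delta<1-q^{-1}$ and $h_q(\delta)<\frac{1}{4}$, as in Eq.\eqref{eq delta in...}.

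The single point requiring care is the extra condition $\gcd(n,t)=1$. I would revisit the choice of the integers $n_1,n_2,\dots$ in Eq.\eqref{eq n_1 n_2 ...}, supplied by \cite[Lemma~2.6]{BM} (cf.\ also \cite[Lemma~II.6]{FL20}): beyond being odd, coprime to $p$, and satisfying $\lim_{i\to\infty}\log_p(n_i)/\mu(n_i)=0$, they may in addition be chosen coprime to $t$. Indeed $t$ divides $q-1$ and hence has only finitely many prime divisors, all distinct from $p$; deleting from the original sequence the finitely many terms sharing a prime factor with $t$ (equivalently, selecting the $n_i$ among primes avoiding the prime divisors of $pt$) leaves an infinite subsequence with the same limiting behaviour, so we may assume $\gcd(n_i,t)=1$ for every $i$.

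With such lengths, Theorem~\ref{thm H self-dual good} supplies Hermitian self-dual $2$-quasi-cyclic codes $C_1,C_2,\dots$ over $F$ with $C_i$ of length $2n_i$, ${\rm R}(C_i)=\frac{1}{2}$, $2n_i\to\infty$, and $\Delta(C_i)>\delta$ for all $i$. Applying Theorem~\ref{thm n coprime to t} with $n=n_i$ (legitimate since $\gcd(n_i,t)=1$ and $\lambda^{1+p^{\ell/2}}=1$), each $\eta^{(2)}(C_i)$ is a Hermitian self-dual $2$-quasi $\lambda$-constacyclic code over $F$ of length $2n_i$ with ${\rm R}\big(\eta^{(2)}(C_i)\big)={\rm R}(C_i)=\frac{1}{2}$ and $\Delta\big(\eta^{(2)}(C_i)\big)=\Delta(C_i)>\delta$. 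As $2n_i\to\infty$, the sequence $\eta^{(2)}(C_1),\eta^{(2)}(C_2),\dots$ exhibits the asymptotic goodness of Hermitian self-dual $2$-quasi $\lambda$-constacyclic codes over $F$. The main obstacle is exactly the bookkeeping just described --- securing $\gcd(n_i,t)=1$ alongside the other demands on the $n_i$; the rest is a direct composition of Theorems~\ref{thm H self-dual good} and~\ref{thm n coprime to t}.
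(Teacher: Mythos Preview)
Your overall strategy matches the paper's exactly: invoke Theorem~\ref{thm case not good} for the ``only if'' direction, and for the ``if'' direction transport the asymptotically good Hermitian self-dual $2$-quasi-cyclic family of Theorem~\ref{thm H self-dual good} through the isomorphism $\eta^{(2)}$ of Theorem~\ref{thm n coprime to t}.

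The one place where your argument is not tight is the justification that only finitely many $n_i$ share a prime factor with~$t$. Your assertion is correct, but the reasoning you give does not establish it: the $n_i$ from Eq.~\eqref{eq n_1 n_2 ...} are merely odd integers coprime to~$p$, not primes, so the parenthetical ``equivalently, selecting the $n_i$ among primes avoiding the prime divisors of $pt$'' is not equivalent to deleting bad terms from the given sequence---it is a different construction that would require going back into \cite{BM} to verify. The paper closes this gap without touching the source lemma: setting $\tilde t=\max\{\mu(r):r\text{ prime},\ r\mid t\}$ and using \cite[Lemma~II.2]{FL22} that $\mu(n_i)=\min\{\mu(r):r\text{ prime},\ r\mid n_i\}$, one sees that $\gcd(n_i,t)\ne 1$ forces $\mu(n_i)\le\tilde t$; since $\mu(n_i)\to\infty$, only finitely many $n_i$ can fail coprimality with~$t$, and these are discarded. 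Once you supply this (or any valid) argument for the finiteness, your proof is complete and coincides with the paper's.
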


\begin{proof}
If $\lambda^{1+p^{\ell/2}} \neq 1$,
then Hermitian self-dual $2$-quasi $\lambda$-constacyclic codes over~$F$
are asymptotically bad, see Theorem~\ref{thm case not good}.

Assume that $\lambda^{1+p^{\ell/2}}=1$.
Let $0<\delta<1-q^{-1}$ and $h_q(\delta)<\frac{1}{4}$.
By Theorem~\ref{thm H self-dual good}, there are
Hermitian self-dual $2$-quasi-cyclic codes $C_1,C_2, \dots$
over~$F$ such that:
\begin{itemize}
\item
 the code length $2n_i$ of $C_i$ satisfy that $n_i$ is odd and coprime to $q$
 and
$\lim\limits_{i\to\infty} \frac{\log_p(n_i)}{\mu(n_i)}=0$;
 in particular, $\mu(n_i)\to\infty$, and hence $n_i\to\infty$;
\item
 the rate ${\rm R}(C_i)=\frac{1}{2}$
and the relative minimum distance $\Delta(C_i)>\delta$\, for $i=1,2,\dots$.
\end{itemize}
Let $\tilde t=\max\big\{\mu(r)\,\big|\,
  \mbox{$r$ runs over the prime divisors of $t$}\big\}$,
where $t={\rm ord}_{F^\times}\!(\lambda)$ as in~Eq.\eqref{eq t=...}.
By \cite[Lemma II.2]{LF22},
\begin{align*} %\label{eq mu_q min p|n}
\mu(n_i)=\min\big\{\mu(r)\;|\;
  \mbox{$r$ runs over the prime divisors of $n_i$}\big\}.
\end{align*}
If $\gcd(n_i,t)\ne 1$, then $\mu(n_i)\le\tilde t$.
Since $\mu(n_i)\to\infty$,
there are only finitely many~$n_i$ such that $\gcd(n_i,t)\ne 1$.
Removing such $n_i$, we can further assume that
$\gcd(n_i,t)=1$ for $i=1,2,\dots$. Thus,
applying the isomorphism Eq.\eqref{2 eq n coprime to t} to~$C_i$
yields $\tilde C_i=\eta^{(2)}(C_i)$ in $\RL^2$, $i=1,2,\dots$.
We get the code sequence
$$
 \tilde C_1,~\tilde C_2,~\dots .
$$
By Theorem \ref{thm n coprime to t}, each $\tilde C_i$
is Hermitian self-dual $2$-quasi $\lambda$-constacyclic codes over $F$,
and the code length $2n_i$ goes to infinity,
the rate ${\rm R}(\tilde C_i)=\frac{1}{2}$
and the relative minimum distance
$\Delta(\tilde C_i)=\Delta(C_i)>\delta$ for $i=1,2,\dots$.
\end{proof}

For Euclidean case,
the ``Euclidean self-dual'' is referred to as ``self-dual''.

\begin{theorem} \label{thm E self-dual lambda good}
Assume that $q\;{\not\equiv}\;3~({\rm mod}~4)$.
The self-dual $2$-quasi $\lambda$-constacyclic codes over $F$
are asymptotically good if and only if $\lambda^2=1$.
\end{theorem}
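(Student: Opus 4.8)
The plan is to mirror the proof of Theorem~\ref{thm H self-dual lambda good}, now specialized to $h=0$: in place of the internally proved Theorem~\ref{thm H self-dual good} I would invoke the asymptotic goodness of Euclidean self-dual $2$-quasi-cyclic codes recorded in Remark~\ref{rk 2-quasi cyclic} (i.e.\ \cite[Theorem IV.17]{LF22}), and then transport those codes along the weight- and inner-product-preserving isomorphism $\eta^{(2)}$ of Theorem~\ref{thm n coprime to t}. For the ``only if'' direction, suppose $\lambda^2\neq 1$; then $\lambda^{1+p^0}\neq 1$, so Theorem~\ref{thm case not good} with $h=0$ shows directly that the self-dual (= Euclidean self-dual) $2$-quasi $\lambda$-constacyclic codes over $F$ are asymptotically bad.

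For the ``if'' direction, assume $\lambda^2=1$, so $\lambda\in\{1,-1\}$. If $\lambda=1$ (which in particular covers $p=2$, since then $-1=1$ in $F$), the self-dual $2$-quasi $\lambda$-constacyclic codes are exactly the self-dual $2$-quasi-cyclic codes, which are asymptotically good by Remark~\ref{rk 2-quasi cyclic} because $q\not\equiv 3\pmod 4$. So it remains to treat $\lambda=-1$ with $p$ odd, in which case $t={\rm ord}_{F^\times}(\lambda)=2$ and $\lambda^{1+p^0}=\lambda^2=1$. Fix $\delta$ with $0<\delta<1-q^{-1}$ and $h_q(\delta)<\tfrac14$, and apply Remark~\ref{rk 2-quasi cyclic} to obtain self-dual $2$-quasi-cyclic codes $C_1,C_2,\dots$ over $F$ with ${\rm R}(C_i)=\tfrac12$, $\Delta(C_i)>\delta$, and code length $2n_i$ such that each $n_i$ is odd, coprime to $q$, and $\lim_{i\to\infty}\log_q(n_i)/\mu(n_i)=0$ (hence $n_i\to\infty$). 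Since each $n_i$ is odd, $\gcd(n_i,t)=\gcd(n_i,2)=1$, so Theorem~\ref{thm n coprime to t} applies with $h=0$ and $n=n_i$: setting $\tilde C_i=\eta^{(2)}(C_i)$, each $\tilde C_i$ is a self-dual $2$-quasi $(-1)$-constacyclic code over $F$ with ${\rm R}(\tilde C_i)={\rm R}(C_i)=\tfrac12$ and $\Delta(\tilde C_i)=\Delta(C_i)>\delta$, while $2n_i\to\infty$. Hence the self-dual $2$-quasi negacyclic codes over $F$ are asymptotically good, which finishes the proof.

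As for difficulties, the argument is essentially a transcription of the Hermitian case and presents no genuinely new obstacle; the one point worth stressing is that the asymptotic goodness of self-dual $2$-quasi-cyclic codes is not reproved here but quoted from \cite{LF22}, and this is exactly where the hypothesis $q\not\equiv 3\pmod 4$ enters the argument (recall that it is also the necessary and sufficient condition for such codes to exist at all).
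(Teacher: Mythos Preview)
Your proof is correct and follows essentially the same approach as the paper: invoke Theorem~\ref{thm case not good} for the ``only if'' direction, and for the ``if'' direction quote the asymptotic goodness of Euclidean self-dual $2$-quasi-cyclic codes from \cite{LF22} (Remark~\ref{rk 2-quasi cyclic}) and transport the resulting sequence through $\eta^{(2)}$ via Theorem~\ref{thm n coprime to t}. Your direct observation that the oddness of $n_i$ already gives $\gcd(n_i,t)=\gcd(n_i,2)=1$ is in fact a slight streamlining over the paper, which instead appeals to the proof of Theorem~\ref{thm H self-dual lambda good} (where the general $\mu(n_i)\to\infty$ argument is used to discard finitely many $n_i$).
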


\begin{proof}
By Theorem~\ref{thm case not good}, if $\lambda^2\neq 1$ then
the self-dual $2$-quasi $\lambda$-constacyclic codes are asymptotically bad.
In the following we assume that $\lambda=\pm 1$ .

If $\lambda=1$, it has been shown in \cite{LF22} (cf. Remark~\ref{rk 2-quasi cyclic})
that the self-dual $2$-quasi-cyclic codes over $F$ are asymptotically good.

Assume that $\lambda=-1$
(i.e., consider the $2$-quasi negacyclic codes).
Let $0<\delta<1-q^{-1}$ and $h_q(\delta)<\frac{1}{4}$.
Since $q\;{\not\equiv}\;3~({\rm mod}~4)$, by
Remark~\ref{rk 2-quasi cyclic}, there are
 self-dual $2$-quasi-cyclic codes $C_1,C_2, \dots$ over $F$, such that:
\begin{itemize}
\item
 the code length $2n_i$ of $C_i$ satisfy that $n_i$ is odd and coprime to $q$,
 and $\lim\limits_{i\to\infty} \frac{\log_p(n_i)}{\mu(n_i)}=0$;
in particular,  $\mu(n_i)\to\infty$, and hence $n_i\to\infty$;
\item
the rate ${\rm R}(C_i)=\frac{1}{2}$
and the relative minimum distance $\Delta(C_i)>\delta$\, for $i=1,2,\dots$.
\end{itemize}
Similarly to the proof of Theorem~\ref{thm H self-dual lambda good},
we get self-dual $2$-quasi negacyclic codes
$$\tilde C_1,~ \tilde C_2,~\dots $$
over $F$ such that their code length $2n_i$ goes to infinity,
${\rm R}(\tilde C_i)=\frac{1}{2}$
and their relative minimum distance $\Delta(\tilde C_i)=\Delta(C_i)>\delta$ for $i=1,2,\dots$.
\end{proof}

\section{Conclusions}\label{Conclusions}

The purpose of this paper is to characterize
the Galois self-dual $2$-quasi $\lambda$-constacyclic codes, and
to investigate their asymptotic properties.

We first showed the algebraic structure of $2$-quasi $\lambda$-constacyclic codes.
Then we found that the Galois $p^h$-self-dual $2$-quasi $\lambda$-constacyclic codes
behave much differently according to whether $\lambda^{1+p^h}\!\neq 1$
($\lambda\neq \lambda^{-p^{\ell-h}}$ equivalently) or
${\lambda^{1+p^h}\!=1}$
($\lambda=\lambda^{-p^{\ell-h}}$ equivalently).
In both the cases, we exhibited the necessary and sufficient conditions for the
 $2$-quasi $\lambda$-constacyclic codes being Galois $p^h$-self-dual,
 see Theorem~\ref{th lambda neq iff} (for the former case),
Lemma \ref{lem C1,g} and Theorem~\ref{th lambda eq iff} (for the latter case).
And in the former case we proved that
the Galois $p^h$-self-dual $2$-quasi $\lambda$-constacyclic codes
are asymptotically bad.

Then we focused on the case that $\lambda^{1+p^h}=1$.
A methodological innovation is that we introduced
(in Eq.\eqref{def *} and Lemma~\ref{lem tau and *})
the operator ``$*$'' on the algebra $F[X]/\langle X^n-\lambda\rangle$,
which is proved to be a powerful technique for studying Galois dualities.
An important contribution is that we proved that
the Hermitian self-dual (when $\ell$ is even) and the
Euclidean self-dual (when $q\,{\not\equiv}\,3~({\rm mod}~4)$)
$2$-quasi $\lambda$-constacyclic codes are asymptotically good.
The proof are divided into two steps.
First we proved (in Theorem~\ref{thm H self-dual good}) the asymptotic goodness of the
Hermitian self-dual $2$-quasi-cyclic codes
(the asymptotic goodness of the Euclidean self-dual $2$-quasi-cyclic codes
has been obtained in \cite{LF22}).
And then we relate the $2$-quasi $\lambda$-constacyclic codes
to the $2$-quasi-cyclic codes by an algebra isomorphism
which preserves the Hamming weight and the Galois $p^h$-inner products,
see Corollary~\ref{cor n coprime to t};
hence the asymptotic goodness of the Hermitian self-dual and Euclidean self-dual
$2$-quasi $\lambda$-constacyclic codes are derived
(Theorem~\ref{thm H self-dual lambda good}
and Theorem~\ref{thm E self-dual lambda good}).

%It is still an open question:
A question remains unsolved:
with the assumption that $\lambda^{1+p^h}=1$,
are the Galois $p^h$-self-dual $2$-quasi $\lambda$-constacyclic codes over $F$,
except for the Hermitian self-dual ones
and the Euclidean self-dual (when $q\,{\not\equiv}\,3~({\rm mod}~4)$) ones,
asymptotically good?
It seems that the existing approaches in this paper
are not enough to solve this question.
We look forward this question to be solved perfectly in the future.
A special sub-question is: are
the self-dual $2$-quasi negacyclic codes over $F$ asymptotically good?
%As mentioned in Introduction,
The result in \cite{SQS} and Theorem~\ref{thm E self-dual lambda good}
of this paper together imply a positive answer to this sub-question;
but the argument of~\cite{SQS} depends on Artin's primitive root conjecture.
Recently, % based on the methods of this paper,
in \cite{FL24} we further developed a number-theoretic and algebraic method
to analyse the $q$-cosets and proved the asymptotic goodness of
any $q$-ary self-dual $2$-quasi negacyclic codes.

\section*{Acknowledgements}
%NSFC for the supports through Grant 11271005.
The authors are grateful to the editor and the anonymous referees
for taking time to read and comment on the article very carefully and insightfully.
Their nice comments and suggestions helped them to improve the article very much.

\end{document}